\tikzset{
>=stealth',
punkt/.style={
       rectangle,
       rounded corners,
       draw=black, very thick,
       text width=6.5em,
       minimum height=2em,
       text centered},
pil/.style={
       ->,
       thick,
       shorten <=2pt,
       shorten >=2pt,},
main/.style={
    draw, rectangle,inner sep=4pt,
    }
}
\definecolor{mGreen}{rgb}{0,0.6,0}
\definecolor{mGray}{rgb}{0.5,0.5,0.5}
\definecolor{mPurple}{rgb}{0.58,0,0.82}
\definecolor{backgroundColour}{rgb}{0.95,0.95,0.92}
\definecolor{mybluecolor}{rgb}{0.2549019607843137, 0.2117647058823529, 0.9823529411764706}
\def \darkred {black!20!red}
\lstdefinestyle{CStyle}{
tabsize = 2, 
showstringspaces = false, 
backgroundcolor=\color{backgroundColour},   
numbers = left, 
commentstyle = \color{green}, 
keywordstyle = \color{blue}, 
stringstyle = \color{red}, 
rulecolor = \color{black}, 
basicstyle = \small \ttfamily , 
breaklines = true, 
numberstyle = \tiny,
language=Java
}
\newcommand{\Grammar}{\mathcal{G}}
\newcommand{\Language}{\mathcal{L}}
\newcommand{\MCFG}{\operatorname{MCFG}}
\newcommand{\drMCFG}[2]{#1\text{-}\MCFG(#2)}
\newcommand{\MCFL}{\operatorname{MCFL}}
\newcommand{\drMCFL}[2]{#1\text{-}\MCFL(#2)}
\newcommand{\drMCFGs}[2]{#1\text{-}\operatorname{MCFGs}(#2)}
\newcommand{\NonTerminals}{\mathcal{N}}
\newcommand{\Terminals}{\Sigma}
\newcommand{\Rules}{\mathcal{R}}
\newcommand{\StartSymbol}{\mathcal{S}}
\newcommand{\Derives}{\Rightarrow}
\newcommand{\PathDerives}{\models}
\newcommand{\Endpoints}{\operatorname{Ends}}
\newcommand{\Label}{\lambda}
\newcommand{\TriangleL}{\mathcal{T}}
\newcommand{\Poly}{\operatorname{poly}}
\newcommand\numberthis{\addtocounter{equation}{1}\tag{\theequation}}
\newcommand{\Worklist}{\operatorname{W}}
\newcommand{\Done}{\operatorname{D}}
\newcommand{\ReplaceBy}{\mapsto}
\newcommand{\IDL}{\mathcal{I}_{\alpha,\beta}}
\newcommand{\Dyck}{\mathcal{D}}
\newcommand{\Paragraph}[1]{{\noindent \bf #1}}
\newcommand{\SubParagraph}[1]{{\noindent \em #1}}
\newcommand{\Path}{\rightsquigarrow}
\newcommand{\eps}{\varepsilon}
\newcommand{\Nats}{\mathbb{N}}
\newcommand{\op}{(}
\newcommand{\cp}{)}
\newcommand{\ob}{[}
\newcommand{\cb}{]}
\newcommand{\Project}{{\downharpoonright}}
\newcommand{\NumVectors}{m}
\newcommand{\NumVectorSets}{k}
\newcommand{\NumBits}{b}
\newcommand{\TemplateLanguage}{\mathcal{T}}
\newcommand{\pcOne}[1]{\textcolor{mybluecolor}{#1}}
\newcommand{\pcTwo}[1]{\textcolor{\darkred}{#1}}
\newcommand{\LTo}[1]{\xrightarrow{#1}}
\newcommand{\Delimiter}{|}
\title{Program Analysis via  Multiple Context Free Language Reachability}
\author{Giovanna Kobus Conrado}
\email{gkc@connect.ust.hk}
\affiliation{%
  \institution{Hong Kong University of Science and Technology}
  \country{Hong Kong}
}
\author{Adam Husted Kjelstr\o{}m}
\email{husted@cs.au.dk}
\affiliation{%
  \institution{Aarhus University}
  \country{Denmark}
}
\author{Andreas Pavlogiannis}
\email{pavlogiannis@cs.au.dk}
\affiliation{%
  \institution{Aarhus University}
  \country{Denmark}
}
\author{Jaco van de Pol}
\email{jaco@cs.au.dk}
\affiliation{%
  \institution{Aarhus University}
  \country{Denmark}
}
\newcommand\stringCompare[2]{%
\pdf@strcmp{#1}{#2} %
}
\begin{document}

\begin{abstract}
\emph{Context-free language (CFL)} reachability is a standard approach in static analyses,
where the analysis question (e.g., is there a dataflow from $x$ to $y$?)
is phrased as a language reachability problem on a graph $G$ wrt a CFL $\mathcal{L}$.
However, CFLs lack the expressiveness needed for high analysis precision.
On the other hand, common formalisms for \emph{context-sensitive} languages are too expressive, in the sense that the corresponding reachability problem becomes undecidable.
\emph{Are there useful context-sensitive language-reachability models for static analysis?}

In this paper, we introduce \emph{Multiple Context-Free Language (MCFL)} reachability as an \emph{expressive} yet  \emph{tractable} model for static program analysis. 
MCFLs form an infinite hierarchy of mildly context sensitive languages parameterized by a \emph{dimension} $d$ and a \emph{rank} $r$.
Larger $d$ and $r$ yield progressively more expressive MCFLs, offering \emph{tunable} analysis precision.
We showcase the utility of MCFL reachability by developing a family of MCFLs that approximate interleaved Dyck reachability, a common but undecidable static analysis problem.

Given the increased expressiveness of MCFLs, one natural question pertains to their algorithmic complexity, i.e., \emph{how fast can MCFL reachability be computed?}
We show that the problem takes $O(n^{2d+1})$ time on a graph of $n$ nodes when $r=1$, and $O(n^{d(r+1)})$ time when $r>1$.
Moreover, we show that when $r=1$, even the simpler membership problem has a lower bound of $n^{2d}$ based on the Strong Exponential Time Hypothesis, while reachability for $d=1$ has a lower bound of $n^{3}$ based on the combinatorial Boolean Matrix Multiplication Hypothesis.
Thus, for $r=1$, our algorithm is optimal within a factor $n$ for all levels of the hierarchy based on the dimension $d$ (and fully optimal for $d=1$).

We implement our MCFL reachability algorithm and evaluate it by underapproximating interleaved Dyck reachability for a standard taint analysis for Android.
When combined with existing overapproximate methods, MCFL reachability discovers \emph{all} tainted information on 8 out of 11 benchmarks, while it has remarkable coverage (confirming $94.3\%$ of the reachable pairs reported by the overapproximation) on the remaining 3.
To our knowledge, this is the first report of high and provable coverage for this challenging benchmark set.
\end{abstract}

\begin{CCSXML}
<ccs2012>
<concept>
<concept_id>10011007.10011074.10011099</concept_id>
<concept_desc>Software and its engineering~Software verification and validation</concept_desc>
<concept_significance>300</concept_significance>
</concept>
<concept>
<concept_id>10003752.10010070</concept_id>
<concept_desc>Theory of computation~Theory and algorithms for application domains</concept_desc>
<concept_significance>300</concept_significance>
</concept>
</ccs2012>
\end{CCSXML}

\ccsdesc[300]{Software and its engineering~Software verification and validation}
\ccsdesc[300]{Theory of computation~Theory and algorithms for application domains}

\keywords{static analysis, CFL reachability, Dyck reachability, context-sensitive languages}

\maketitle

\section{Introduction}\label{SEC:INTRO}

\Paragraph{Static analysis via language reachability.}
Static analyses are a standard approach to determining program correctness, as well as other useful properties of programs.
They normally operate by establishing an approximate model for the program, effectively reducing questions about program behavior to algorithmic questions about the model.
One popular type of modeling in this direction is language reachability, where the program abstraction is via an edge-labeled graph $G$~\cite{Reps97,Reps1995}.
Language reachability is a generalization of standard graph reachability, parameterized by a language $\Language$.
Intuitively, given two nodes $u$, $v$, the problem asks to determine whether $v$ is reachable from $u$ via a path whose sequence of labels produce a string that belongs to $\Language$.
Such a path captures program executions that relate the objects represented by $v$ and $u$ (e.g., control-flow between two program locations or data-flow between two variables).
The role of $\Language$ is normally to increase the analysis precision by filtering out paths that represent spurious program executions, owing to the approximate nature of $G$.


\Paragraph{Context-free language reachability.}
Language-reachability based static analyses are most frequently phrased with respect to context free languages (CFL), known as CFL reachability, which have various uses.
For example, they are used to increase the precision of interprocedural analyses (aka whole-program analyses), where modeled executions cross function boundaries, and the analysis has to be \emph{calling-context sensitive}\footnote{It might sound paradoxical that a \emph{context-free} language makes the analysis \emph{context-sensitive}, but this is just a naming coincidence. ``Context sensitivity'' refers to calling contexts, and the CFL simulates the call stack.}.
A CFL $\Language$ captures that a path following an invocation from a caller function $\mathtt{foo}()$ to a callee $\mathtt{tie}()$ must return to the call site of $\mathtt{foo}()$ when $\mathtt{tie}()$ returns (as opposed to some other function $\mathsf{bar}()$ that also calls $\mathtt{tie}()$).
This approach is followed in a wide range of interprocedural static analyses, including data-flow and shape analysis~\cite{Reps1995}, type-based flow analysis~\cite{Rehof01} and taint analysis~\cite{Huang2015}, to name a few.
In practice, widely-used tools, such as Wala~\cite{Wala} and Soot~\cite{Bodden12}, equip CFL-reachability techniques to perform the analysis.

Another common use of CFLs is to track dataflow information between composite objects in a \emph{field-sensitive} manner.
Here, a CFL $\Language$ captures a dataflow between variables $x$ and $y$ if, for example $x$ flows into $z.f$ (i.e., field $f$ of composite object $z$), and $z.f$ itself flows into $y$ (as opposed to $z.g$ flowing into $y$).
This approach is standard in a plethora of pointer, alias and data dependence analyses~\cite{Lhotak2006,Reps97,Chatterjee2018,Sridharan2005,Sridharan2006,Lu2019}.

\Paragraph{The need for context-sensitive models.}
Although CFLs offer increased analysis precision over simpler, regular abstractions, there is often a need for more precise, \emph{context sensitive} models.
For example, almost all analyses are designed with both context and field sensitivity, as this leads to obvious precision improvements~\cite{Milanova2020,Lhotak2006,Sridharan2006,Spath2019}.
The underlying language modeling both sensitivities is the free interleaving  of the two corresponding CFLs, which is not a CFL, and is commonly phrased as \emph{interleaved Dyck reachability}.
Unfortunately, interleaved Dyck reachability is well-known to be undecidable~\cite{Reps2000}.
Since it is desirable to maintain at least some precision of each type, there have been various approximations that generally fall into two categories:
(i)~apply some kind of $k$-limiting, which approximates one of the CFLs by a regular (or even finite) language (the most common approach), or
(ii)~solve for each type of sensitivity independently~\cite{Spath2019}, possibly in a refinement loop~\cite{Ding2023,Conrado2024}.
Observe that both cases essentially fall back to context-free models, forgoing the desirable precision of context sensitivity for which the analysis was designed in the first place.
This leads to a natural question:

\emph{Are there natural, efficient (polynomial-time), and practically precise, context-sensitive approximations for interleaved Dyck reachability?}

\Paragraph{Multiple Context Free Languages.}
One of the most natural generalizations of CFLs towards mild context sensitivity is that of Multiple Context Free Languages (MCFLs)~\cite{Seki1991}.
These languages are generated by the corresponding Multiple Context Free Grammars (MCFGs), which form a hierarchy of expressiveness parameterized by a dimension $d$ and rank $r$ (concisely denoted as $\drMCFG{d}{r}$).
Intuitively, an MCFG in $d$ dimensions performs simultaneous context-free parsing on $d$ substrings of a word, and can thus capture bounded context-sensitivity between these substrings. 
The rank $r$ limits the number of non-terminals that can appear in a single production rule.
MCFGs have received considerable attention, as they are regarded as a realistic formalism for natural languages~\cite{Clark2015},
while several popular classes of formal languages fall into specific levels in this hierarchy,
e.g., CFLs are MCFLs of dimension $1$, and Tree Adjoining Languages (TALs)~\cite{Joshi1987} and Head Languages~\cite{Pollard1984} fall in dimension $2$~\cite{Seki1991}.
Despite the context sensitivity, for each $r\geq 2$, $\drMCFL{d}{r}$ forms a full abstract family of languages (AFL -- closed under homomorphism, inverse homomorphism, intersection with regular sets, union, and Kleene closure)~\cite{Rambow1999}, with decidable membership and emptiness~\cite{Vijay1987}.
As such, they form an elegant class of mildly context sensitive languages that are amenable to algorithmic treatment.

In a static analysis setting, language reachability with MCFLs has the potential to yield higher modeling power.
Moreover, this power is utilized in a \emph{controllable way}, owing to the higher expressivity along the MCFL hierarchy.
However, neither
(i)~the modeling power of MCFL reachability (\emph{what can MCFLs express in a program analysis setting?}) nor
(ii)~the algorithmic question (\emph{how fast can we solve MCFL reachability?}) have been studied.
This paper addresses these questions, by
\begin{compactenum}
\item designing a family of MCFLs for approximating the common static analysis problem of interleaved Dyck reachability, with remarkable coverage in practice, and
\item developing a generic algorithm for MCFL reachability (for any dimension $d$ and rank $r$), as well as proving fine-grained complexity lower bounds for the problem.
\end{compactenum}

The following motivating example illustrates the problem setting and our approach.

\subsection{Motivating Example}\label{SUBSEC:MOTIVATING_EXAMPLE}

In a standard dataflow analysis setting, the task is to identify pairs of variables $x$, $y$ such that the value of $x$ may affect the value of $y$.
This is achieved by following def-use chains in the program $P$.
The program model is a dataflow graph $G$, where nodes represent variables, and an edge $x\to y$ results from an instruction of the form $x=f(y)$ (for some uninterpreted function $f$).
To address the common issue of high false positives, the analysis must be both context-sensitive and field-sensitive.

For example, consider the program $P$ in \cref{fig:example_context_field}, and the dataflow graph $G$ in \cref{subfig:program_graph}.

\begin{figure}[!h] 
\begin{mdframed}[backgroundcolor=black!7!white,rightline=false,leftline=false,linewidth=0.25mm,]
\begin{subfigure}[t]{0.32\textwidth}
\centering
\begin{lstlisting}[numbers=left, language=C++,
tabsize=1,
basicstyle=\small,
numbersep=3pt,
numberstyle=\footnotesize\color{red},
showlines=true,
morekeywords={pair},
basicstyle=\ttfamily\small
]
pair tie(int x,int y){
  pair p;
  p.first = x;
  p.second = y;
  return p;
}
\end{lstlisting}
\end{subfigure}
\hfill
\begin{subfigure}[t]{0.3\textwidth}
\centering
\begin{lstlisting}[numbers=left, language=C++,basicstyle=\small,numbersep=3pt,numberstyle=\footnotesize\color{red},firstnumber=last,morekeywords={pair}]
void foo() {
  int a = 2;
  int b = 3;
  pair q = tie(a,b);
  int c = q.first;
  return;
}
\end{lstlisting}
\end{subfigure}
\hfill
\begin{subfigure}[t]{0.3\textwidth}
\centering
\begin{lstlisting}[numbers=left, language=C++,basicstyle=\small,numbersep=3pt,numberstyle=\footnotesize\color{red},showlines=true,firstnumber=last,morekeywords={pair}]
void bar() {
  int d = 5;
  pair r = tie(d,7);
  int e = r.second;
  return;
}
\end{lstlisting}
\end{subfigure}
\end{mdframed}
\caption{
A program $P$ containing three functions \texttt{tie()}, \texttt{foo()}, \texttt{bar()} and composite objects of type \texttt{pair}.
}
\label{fig:example_context_field}
\end{figure}

\begin{figure}[!h] 
\begin{subfigure}[t]{0.45\textwidth}
\centering
\begin{tikzpicture}[ line width=1pt, main/.style = {draw, rectangle,inner sep=4pt},scale=0.8] 
\node[main] (a) at (0,0){\footnotesize a}; 
\node[main] (d) at (0,-1){\footnotesize d}; 
\node[main] (b) at (0,-2){\footnotesize b}; 
\node[main] (x) at (1.4,-1){\footnotesize x}; 
\node[main] (y) at (1.4,-2){\footnotesize y}; 
\node[main] (p) at (2.4,-1){\footnotesize p};
\node[main] (ret) at (3.6,-1){\footnotesize $\text{tie}_{\text{ret}}$}; 
\node[main] (q) at (4.4,0){\footnotesize q}; 
\node[main] (r) at (4.4,-2){\footnotesize r}; 
\node[main] (c) at (5.4,0){\footnotesize c}; 
\node[main] (e) at (5.4,-2){\footnotesize e};

\draw [->] (a) -- (x) node [midway, above=2pt] {\footnotesize $\pcOne{(_{10}}$}; 
\draw [->] (d) -- (x) node [midway, below] {\footnotesize$\pcOne{(_{16}}$}; 
\draw [->] (b) -- (y) node [midway, below] {\footnotesize$\pcOne{(_{10}}$}; 
\draw [->] (x) -- (p) node [midway, above] {\footnotesize$\pcTwo{[_{1}}$}; 
\draw [->] (y) -- (p) node [midway, below=2pt] {\footnotesize$\pcTwo{[_{2}}$}; 
\draw [->] (p) -- (ret); 
\draw [->] (ret) -- (q) node [near end, left =2pt] {\footnotesize$\pcOne{)_{10}}$}; 
\draw [->] (ret) -- (r) node [near end, left=2pt] {\footnotesize$\pcOne{)_{16}}$}; 
\draw [->] (q) -- (c) node [midway, below] {\footnotesize$\pcTwo{]_{1}}$}; 
\draw [->] (r) -- (e) node [midway, above] {\footnotesize$\pcTwo{]_{2}}$};

\end{tikzpicture}
\subcaption{\label{subfig:program_graph}
A graph $G$ modeling context-sensitive and field-sensitive data flow in program $P$ from \cref{fig:example_context_field}.
}
\end{subfigure}
\qquad
\begin{subfigure}[t]{.45\linewidth}
\begin{tikzpicture}[ line width=1pt, main/.style = {draw, rectangle,inner sep=4pt},scale=0.8]
            \node[main] (e) at (0,0){\footnotesize e}; 
            \node[main] (f) at (0,-1.5){\footnotesize f}; 
            \node[main] (g) at (1.5,0){\footnotesize g}; 
            \node[main] (h) at (1.5,-1.5){\footnotesize h}; 
            \node[main] (i) at (3,0){\footnotesize i}; 
            \node[main] (j) at (4.5,0){\footnotesize j};
            \node[main] (k) at (4.5,-1.5){\footnotesize k}; 

            \node[main,white] (a) at (-1.5,-2.2){};

            \draw [->] (e) -- (g) node [midway, above=2pt] {\footnotesize $\pcTwo{[_{1}}$}; 
            \draw [->] (g) -- (h) node [midway, right=2pt] {\footnotesize $\pcOne{(_{100}}$}; 
            \draw [->] (h) -- (f) node [midway, above=2pt] {\footnotesize $\pcTwo{]_{1}}$};
            \draw [->] (f) -- (e) node [midway, left=2pt] {\footnotesize $\pcTwo{[_{1}}$}; 
            \draw [->] (g) -- (i) node [midway, above=2pt] {\footnotesize $\pcTwo{]_{1}}$}; 
            \draw [->] (i) -- (j) node [midway, above=2pt] {\footnotesize $\pcOne{)_{100}}$}; 
            \draw [->] (j) -- (k) node [midway, left=2pt] {\footnotesize $\pcTwo{]_{1}}$}; 
            
\end{tikzpicture}
\subcaption{\label{subfig:real_graph}
A subgraph of the \texttt{uranai} benchmark in a taint analysis for Android.
}
\end{subfigure}
\caption{
Two graphs modeling context and field sensitivity through edge labels.
}
\label{fig:example_context_field_graph}
\end{figure}

\Paragraph{Context sensitivity.}
Let us momentarily ignore edge labels in $G$.
We have a path $b\Path e$, signifying a dataflow from $b$ to $e$.
This, however, does not correspond to a valid program execution:~the path goes through the call of function \texttt{tie()} from \texttt{foo()} (where $b$ is declared), but when \texttt{tie()} returns, the execution continues on \texttt{foo}, rather than \texttt{bar()} where $e$ is declared.
Call-context sensitivity is achieved by modeling call sites using parenthesis labels, and only considering reachability as witnessed by paths that produce a properly balanced parenthesis string.
Formally we require that the label of the path forms a string that belongs to the Dyck language over parentheses (which is a CFL).
Now, the path $b\Path e$ is invalid, since $\pcOne{(_{10}}$ (along the edge $b\LTo{\pcOne{(_{10}}}y$) does not match $\pcOne{)_{16}}$ (along the edge $\text{tie}_{\text{ret}}\LTo{\pcOne{)_{16}}}r$), thus the analysis avoids reporting this false positive.

\Paragraph{Field sensitivity.}
With parentheses modeling call-context sensitivity, consider the path $d\Path e$.
The parenthesis string along this path is $\pcOne{(_{16}}\pcOne{)_{16}}$, which is balanced, representing the fact that call contexts are respected.
However, in $P$ there is no dataflow from $d$ to $e$, this time due to unmatched fields:~$x$ is assigned to $\mathit{p.first}$, and although there is a dataflow from $p$ to $r$, $e$ gets assigned $\mathit{r.second}$.
Field-sensitivity is achieved by modeling object fields using (square) bracket labels, and only considering reachability as witnessed by paths that produce a properly balanced bracket string.
Formally we require that the label of the path forms a string that belongs to the Dyck language over brackets.
Now, the path $d\Path e$ is invalid, since $\pcTwo{[_{1}}$ (along the edge $x\LTo{\pcTwo{[_{1}}}p$) does not match $\pcTwo{]_{2}}$ (along the edge $r\LTo{\pcTwo{]_{2}}}e$), thus the analysis avoids reporting this false positive.

\Paragraph{Context and field sensitivity, simultaneously.}
To capture both context and field sensitivity, the analysis must decide reachability via paths that are well-balanced wrt both parentheses and brackets.
However, these two types of symbols can be interleaved in an arbitrary way.
For example, out of all $6$ possible source-sink pairs $\{a,d,b\}\times \{c,e\}$, the only real dataflow is from $a$ to $c$, witnessed by a path producing the string $\pcOne{(_{10}}\pcTwo{[_{1}}\pcOne{)_{10}}\pcTwo{]_{1}}$.
As the corresponding reachability problem is undecidable~\cite{Reps2000}, existing techniques focus on overapproximating interleaved Dyck reachability, mostly by some context-free model.
This implies that these analysis results may still contain false positives in terms of reachability in the dataflow graph.

\Paragraph{Illustration on a real benchmark.}
To further illustrate the challenge, consider the dataflow graph in \cref{subfig:real_graph}, which is a subgraph of a common taint analysis for Android~\cite{Huang2015}.
From an overapproximation standpoint, consider the potential reachability from $e$ to $j$. 
Notice that there are valid context-sensitive paths and valid field-sensitive paths $e\Path j$; these are, respectively
\[
e\LTo{\pcTwo{[_1}} g \LTo{\pcTwo{]_1}}i \LTo{\pcOne{)_{100}}}j \qquad \text{and}\qquad  e\LTo{\pcTwo{[_1}} g \LTo{\pcOne{(_{100}}}h \LTo{\pcTwo{]_1}}f \LTo{\pcTwo{[_1}}e \LTo{\pcTwo{[_1}}g \LTo{\pcTwo{]_1}}i \LTo{\pcOne{)_{100}}}j
\]
Because of the presence of both paths, an overapproximation algorithm may fail to conclude that $e$ \emph{does not reach} $j$ through a path that is simultaneously context and field-sensitive.
In fact, even newer overapproximation methods such as \cite{Ding2023} indeed report that $e$ reaches $j$, thereby producing a false positive.
From an underapproximation standpoint, consider the reachability from $e$ to $k$, witnessed by the path
\[
e\LTo{\pcTwo{[_1}} g \LTo{\pcOne{(_{100}}}h \LTo{\pcTwo{]_1}}f \LTo{\pcTwo{[_1}}e \LTo{\pcTwo{[_1}}g \LTo{\pcTwo{]_1}}i \LTo{\pcOne{)_{100}}}j \LTo{\pcTwo{]_1} }k
\]
Observe that the path is non-simple, as we have to traverse the cycle once to obtain a valid string.
Moreover, the string interleaves parentheses with brackets, which means that it cannot be captured in a single Dyck language involving both parentheses and brackets.
In this work we demonstrate that MCFLs are an effective and tractable context-sensitive language formalism for underapproximating interleaved Dyck reachability that yields good approximations for real-world benchmarks.

\subsection{Summary of Results}\label{SUBSEC:CONTRIBUTIONS}
To benefit readability, we summarize here the main results of the paper, referring to the following sections for details.
We relegate all proofs to the Appendix.

\Paragraph{1. MCFL reachability as a program model.}
We introduce MCFL reachability as an expressive, yet tractable, context-sensitive formalism for static analyses.
Parameterized by the dimension $d$ and rank $r$, $\drMCFL{d}{r}$ yields an infinite hierarchy of progressively more expressive models that become Turing-complete in the limit.

We illustrate the usefulness of MCFL reachability by using it to under-approximate the (generally, undecidable) problem of interleaved Dyck reachability, which is the standard formulation of a plethora of static analyses.
In particular, for each $d\geq 1$, we obtain a $\drMCFL{d}{2}$ that achieves increased precision as $d$ increases (i.e., it discovers more reachable pairs of nodes), and becomes complete (i.e., it discovers all reachable pairs) in the limit of $d\to \infty$.
Our MCFL formulation is, to our knowledge, the first non-trivial method that approximates the reachability set from below, thus having no false positives.

Although underapproximations are less common in static analyses, they have many uses, such as excluding false positives~\cite{UA1,UA2}, reporting concrete witnesses, acting as a tool for bug-finding~\cite{UA3,UA4} and performing “must” analyses~\cite{UA5, UA6, UA7}.
Our underapproximation, when paired with existing overapproximate methods, allows limiting the set of potentially false negatives dramatically, and even find a fully-precise answer (as often is the case in our experiments).

\Paragraph{2. MCFL reachability algorithm.}
We develop a generic algorithm for solving $\drMCFL{d}{r}$ reachability, for any value of $d$ and $r$.
Our algorithm generalizes the existing algorithms for CFL reachability~\cite{Yannakakis1990} and TAL reachability~\cite{Tang2017}.
In particular, we establish the following theorem.

\begin{restatable}{theorem}{thmupperbound}\label{thm:upper_bound}
All-pairs $\drMCFL{d}{r}$-reachability given a grammar $\Grammar$ on a graph $G$ of $n$ nodes can be solved in
\begin{compactenum}
\item\label{item:upper_boundr1} $O(\Poly(|\Grammar|)\cdot \delta\cdot n^{2d})$ time, if $r=1$, where $\delta$ is the maximum degree of $G$, and
\item\label{item:upper_boundrlarge} $O(\Poly(|\Grammar|)\cdot n^{d(r+1)})$ time, if $r>1$.
\end{compactenum}
\end{restatable}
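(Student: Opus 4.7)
The overall plan is to design a worklist-style saturation algorithm that generalizes the classical Yannakakis algorithm for CFL reachability \cite{Yannakakis1990} and the Tang--Smaragdakis algorithm for TAL reachability \cite{Tang2017} to arbitrary MCFGs. For each non-terminal $A$ of dimension $d_A \le d$, I maintain a relation $T_A$ of $2d_A$-tuples of graph nodes, with the intended semantics that $(u_1,v_1,\ldots,u_{d_A},v_{d_A}) \in T_A$ iff $A$ derives some $d_A$-tuple of strings whose $i$-th component labels a path from $u_i$ to $v_i$ in $G$. The relations are seeded from terminal and $\varepsilon$-rules and then grown by repeatedly firing non-terminal productions until a fixpoint; correctness is a routine induction on derivation depth, matching string concatenation in the grammar to path concatenation in $G$.

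The heart of the proof is the complexity bound, which counts the number of distinct graph-node positions that must be enumerated per application of a production. Consider a rule $A(\alpha_1,\ldots,\alpha_{d_A}) \to B_1(\vec{X}^{(1)})\cdots B_r(\vec{X}^{(r)})$. By the linearity property of MCFGs, each child-component variable appears exactly once across the $\alpha_i$'s, so each $\alpha_i$ is a concatenation of child components interleaved with terminals. When $\alpha_i$ is realized as a path in $G$, consecutive components must share an endpoint, so if $\alpha_i$ contains $n_i$ child components then its realization is determined by exactly $n_i+1$ node positions ($n_i-1$ internal cut points plus the two outer endpoints of $\alpha_i$). Summing over $i$ shows that the rule is determined by $d_A + \sum_{k=1}^{r} d_{B_k}$ node positions, which in the worst case (all child dimensions equal $d$) is $d(r+1)$. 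Brute-force enumeration over these positions, followed by lookup of the induced child tuples, costs $O(n^{d(r+1)})$ per rule, which together with a polynomial factor in $|\Grammar|$ for the number of rules establishes part~\ref{item:upper_boundrlarge}.

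For part~\ref{item:upper_boundr1} the same counting yields only $2d$ positions, but a monadic rule $A(\alpha) \to B(\vec{X})$ may insert terminals between and around the single child's components, and these terminal symbols require edge matches in $G$ that are not free enumeration. I would first normalize the grammar so that each rule carries at most one terminal occurrence (a polynomial-time transformation via auxiliary unit non-terminals that preserves both $d$ and $r$). Under this normal form, applying a terminal-bearing rule amounts to extending an existing $B$-tuple by a single adjacent edge, costing $O(\delta)$ on top of the $n^{2d}$ enumeration of $B$-tuples; non-terminal-only monadic rules cost even less. For $r>1$ no separate $\delta$ factor appears because each terminal adjacency point is already among the $d(r+1)$ positions being enumerated.

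The main technical obstacle I anticipate is carrying out the grammar normalization cleanly while preserving the $(d,r)$ parameters: unlike Chomsky normal form for CFGs, here one must simultaneously bound the rank by $r$, the dimension by $d$, and the terminal count per rule, without inflating the number of rules by more than a polynomial factor. A secondary subtlety is the case in which some child dimension $d_{B_k}$ is strictly less than $d$; this is handled either by padding with $\varepsilon$-components at polynomial cost in $|\Grammar|$, or by observing directly that $d_A + \sum_k d_{B_k} \le d(r+1)$ always, so the dominating rules are those achieving equality. Once these pieces are in place, the algorithm is a direct lift of semi-naive datalog evaluation, and the stated bounds follow.
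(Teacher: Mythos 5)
Your proposal follows essentially the same route as the paper: a normal-form transformation preserving both $d$ and $r$, a worklist-style saturation over tuples $A[(u_1,v_1),\dots,(u_k,v_k)]$, and a count of free node positions per rule ($|s_i|+1$ positions per component, totaling at most $d(r+1)$) that yields part~2, with the $O(\delta)$ single-edge extension argument yielding part~1 — this is exactly the paper's Lemma on normal forms, its Algorithm, and its complexity lemma.

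One case in your $r=1$ analysis is stated incorrectly, though it does not break the bound. You claim that after normalization every terminal-bearing monadic rule ``amounts to extending an existing $B$-tuple by a single adjacent edge, costing $O(\delta)$.'' That is false for rules that introduce a \emph{fresh} component consisting of a single terminal or $\epsilon$, e.g.\ $A(x_1,a,x_2)\gets B(x_1,x_2)$ (the paper's Type~4 rules, which are unavoidable: a component must be born somewhere, and your own $\varepsilon$-padding fix for children of dimension $<d$ creates precisely such rules). There the new path segment is not adjacent to any endpoint of the $B$-tuple, so each application costs $O(n^2)$, not $O(\delta)$. The paper closes this by observing that such a rule strictly increases arity, so the right-hand side has arity at most $d-1$, giving $n^{2(d-1)}\cdot n^2 = n^{2d}$, which is dominated by the $O(\delta\cdot n^{2d})$ term. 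A second, smaller omission: since graph edges may be labeled $\epsilon$, the saturation must also close tuples under $\epsilon$-edge extension at each endpoint (the paper adds $\epsilon$ self-loops and an explicit $\epsilon$-expansion step); simply contracting $\epsilon$-edges is not sound, and your proposal is silent on this. Both repairs are local and leave your stated bounds intact.
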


As CFLs and TALs are $\drMCFL{1}{2}$ and $\drMCFL{2}{2}$, respectively, \cref{thm:upper_bound} recovers the known bounds of $O(n^3)$ and $O(n^6)$ for the corresponding reachability problems.
We also remark that the simpler problem of $\drMCFL{d}{r}$ membership is solved in time $O(n^{d(r+1)})$ time on strings of length $n$~\cite{Seki1991}.
\cref{thm:upper_bound} states that reachability is no harder than membership, as long as the current bounds hold, for bounded-degree graphs ($\delta=O(1)$) or when $r>1$.

\Paragraph{3. MCFL membership and reachability lower bounds.}
Observe that the bounds in \cref{thm:upper_bound} grow exponentially on the dimension $d$ and rank $r$ of the language.
The next natural question is whether this dependency is tight, or it can be improved further.
Given the role of MCFL reachability as an abstraction mechanism, this question is also practically relevant.
For example, consider a scenario where a 3-dimensional MCFL is used in a static analysis setting, but the analysis is too heavy for the task at hand.
The designer faces a dilemma:~\emph{``should we attempt to improve the analysis algorithm, or should we find a simpler model, e.g., based on a 2-dimensional MCFL?''}.
A proven lower bound resolves this dilemma in favor of receding to 2 dimensions\footnote{Of course, one should also look for heuristics that offer practical speedups. We touch on this in \cref{SEC:CONCLUSION}.}.
We prove two such lower bounds based on arguments from fine-grained complexity theory.

First, we study the dependency of the exponent on the dimension $d$.
For this, we fix $r=1$ and arbitrary $d$, for which the membership problem, as well as the reachability problem on bounded-degree graphs, takes $O(n^{2d})$ time.
We establish a lower-bound of $n^{2d}$ based on the Strong Exponential Time Hypothesis (SETH).

\begin{restatable}{theorem}{thmovhard}\label{thm:ov_hard}
For any integer $d$ and any fixed $\epsilon>0$,
the $\drMCFL{d}{1}$ membership problem on strings of length $n$
has no algorithm in time $O(n^{2d-\epsilon})$, under SETH.
\end{restatable}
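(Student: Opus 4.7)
The plan is to reduce the $2d$-Orthogonal Vectors problem ($2d$-OV) to $\drMCFL{d}{1}$ membership. Recall that $2d$-OV asks whether, given $2d$ sets $A_1,\ldots,A_{2d}$, each consisting of $n$ binary vectors of bit-width $b=\omega(\log n)$, there exist $v_1\in A_1,\ldots,v_{2d}\in A_{2d}$ such that for every coordinate $j$ at least one $v_i$ has $v_i[j]=0$. Under SETH, this problem admits no $O(n^{2d-\epsilon})$ algorithm (for $b = \text{polylog}(n)$), by the standard generalization of the Williams $2$-OV reduction. It therefore suffices to produce, from any $2d$-OV instance, a string $w$ of length $N=\widetilde{O}(n)$ and a \emph{fixed} (instance-independent) grammar $\Grammar\in\drMCFG{d}{1}$ such that $w\in\Language(\Grammar)$ iff the OV instance is a ``yes'' instance; a hypothetical $O(N^{2d-\epsilon})$ membership algorithm would then yield an $O(n^{2d-\epsilon'})$ algorithm for $2d$-OV.

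First, I would define the encoding. Concatenate $2d$ blocks $w = B_1 \# B_2 \# \cdots \# B_{2d}$, where block $B_i$ lists the $n$ vectors of $A_i$ separated by a marker, each vector encoded as a string of length $\Theta(b)$ over a small alphabet. So $|w|=\widetilde{O}(n)$. The intended semantics of a derivation in $\Grammar$ is to nondeterministically select one vector from each $B_i$ and then scan the $2d$ selected vectors coordinate-by-coordinate, certifying that no coordinate is all-ones.

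Next, I would design $\Grammar$ exploiting the fact that a dimension-$d$, rank-1 MCFG derivation is naturally parameterized by $2d$ boundary positions in the input (the endpoints of the $d$ components of the unique tuple being derived), precisely matching the $2d$ vector selections the reduction must perform. Rank-$1$ (linear) rules let us peel off terminals from either side of each of the $d$ components per step while maintaining a single non-terminal throughout, which is exactly what we need in order to advance $2d$ ``reading heads'' in lockstep. The grammar proceeds in two phases encoded in its finite non-terminal alphabet: (i) a \emph{selection} phase, where on each component the two endpoints independently slide inward past whole encoded vectors in block $B_i$, committing to one chosen vector per block; (ii) a \emph{verification} phase, where the endpoints consume one bit per step from each of the $2d$ chosen vectors synchronously, and the production rules enforce ``not all $2d$ consumed bits are $1$'' via a finite case analysis stored in the non-terminal state. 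End-of-vector markers ensure that all $2d$ pointers finish simultaneously, so the derivation succeeds iff some tuple of selections is pairwise orthogonal in the $2d$-OV sense.

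The main obstacle is making the synchronized coordinate scan work within the rank-$1$ restriction: with only one non-terminal on the right-hand side, the grammar has no way to ``fork'' computations, so every advance across all $2d$ pointers and every local orthogonality check must be compiled into a single rule that simultaneously shortens multiple components at their boundaries. This is where the dimension-$d$ parameter is genuinely used, since each component contributes two pointers, and the two ends of a component correspond to bits in two different chosen vectors that must be consumed together. Careful padding of the encoding (so that the $j$-th bit of every selected vector is reached at the same derivation depth), together with a finite state tracking ``at least one zero seen at the current column,'' suffices to express the orthogonality predicate using only finitely many rank-$1$ productions --- independent of $n$ and $b$. Once this grammar is in place, the claimed lower bound follows immediately from the SETH-hardness of $2d$-OV.
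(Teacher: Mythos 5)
Your overall strategy coincides with the paper's: reduce from $2d$-OV, encode the instance as a string of length $\widetilde{O}(n)$ listing the $2d$ vector sets in blocks, and use a fixed rank-$1$, dimension-$d$ grammar whose $d$ components each span an adjacent pair of blocks, so that the $2d$ component endpoints first select one vector per set and then consume bits in lockstep, with a finite case analysis in the rules enforcing ``at least one zero per column.'' The one step that fails as written is your synchronization claim, namely that ``careful padding of the encoding (so that the $j$-th bit of every selected vector is reached at the same derivation depth)'' suffices. It does not, because the two ends of a component traverse the string in opposite directions: a rank-$1$ production prepends a symbol at the left end of a component and appends one at the right end, so the left endpoint scans its chosen vector right-to-left while the right endpoint scans its vector left-to-right. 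With every block written in the natural forward order, at step $\ell$ the rule sees bit $\NumBits-\ell+1$ of each left-scanned vector together with bit $\ell$ of each right-scanned vector; the conjunction of per-step ``some zero'' checks then certifies a scrambled predicate (coordinate $\ell$ of half the vectors against coordinate $\NumBits-\ell+1$ of the other half), which is not orthogonality, and no insertion of filler symbols can reorder bits \emph{within} a vector. Since $\NumBits=\omega(\log n)$, the misalignment also cannot be absorbed into the finite non-terminal alphabet.

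The repair is exactly the encoding trick the paper flags as the crux: write the vectors of every other block (the odd-indexed sets) \emph{reversed} in the string. Then the right-to-left scan of the left endpoint visits coordinates in the same order $1,2,\dots,\NumBits$ as the left-to-right scan of the right endpoint, the per-step check becomes a genuine per-coordinate check, and the end-of-vector markers (your completion mechanism; the paper's predicate $C$, triggered by reaching $\#_1$) guarantee that all $2d$ scans finish simultaneously because every vector has exactly $\NumBits$ bits. With that single change, your selection phase, verification phase, and complexity accounting (an $O(N^{2d-\epsilon})$ membership algorithm would give $O(\NumVectors^{2d-\epsilon}\cdot \Poly(\NumBits))$ for $2d$-OV, contradicting OVH and hence SETH) all go through and match the paper's proof. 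One further slip of terminology: what must be certified is that the $2d$ chosen vectors have a common zero in every coordinate, not that they are ``pairwise orthogonal''; the definition you state at the outset is the correct one.
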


\cref{thm:ov_hard} is based on a fine-grained reduction from Orthogonal Vectors.
The $\NumVectorSets$-Orthogonal Vectors (OV) problem asks, given a set of $\NumVectors\cdot \NumVectorSets$ Boolean vectors, to identify $\NumVectorSets$  vectors that are orthogonal.
The corresponding hypothesis $\NumVectorSets$-OVH states that this problem cannot be solved in $O(\NumVectors^{\NumVectorSets-\epsilon})$ time, for any fixed $\epsilon>0$ (it is also known that SETH implies $\NumVectorSets$-OVH~\cite{Williams05}).
\cref{thm:ov_hard} is obtained by proving that a $\drMCFL{d}{1}$ can express the orthogonality of $2d$ vectors.
This implies that the dependency $2d$ in the exponent of \cref{thm:upper_bound} cannot be improved, while for $r=1$ our reachability algorithm is optimal on sparse graphs.

Second, note that, on dense graphs (i.e., when $\delta=\Theta(n)$), the bound in  \cref{thm:upper_bound} \cref{item:upper_boundr1} is a factor $n$ worse than the lower bound of \cref{thm:ov_hard}.
Are further improvements possible in this case?
To address this question, we focus on the case of $d=1$, for which this upper bound becomes $O(n^3)$.
We show that the problem has no subcubic combinatorial algorithm based on the combinatorial Boolean Matrix Multiplication Hypothesis (BMMH).

\begin{restatable}{theorem}{thmtrianglehard}\label{thm:triangle_hard}
For any fixed $\epsilon>0$,
the single-pair $\drMCFL{1}{1}$-reachability problem on graphs of $n$ nodes
has no algorithm in time $O(n^{3-\epsilon})$ under BMMH.
\end{restatable}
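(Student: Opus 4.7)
The plan is to reduce combinatorial triangle detection on an $n$-vertex graph, which has an $n^{3-o(1)}$ conditional lower bound under BMMH, to single-pair $\drMCFL{1}{1}$-reachability on a graph with $O(n)$ nodes. Since $\drMCFL{1}{1}$ coincides with the class of linear context-free languages (CFGs in which every rule has at most one nonterminal on its right-hand side), I may work with an explicit, fixed such language.

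I fix the linear CFL $L = \{w \cdot abc \cdot w^R : w \in \{0,1\}^*\}$, generated by the rank-$1$, dimension-$1$ grammar $S \to 0\,S\,0 \mid 1\,S\,1 \mid abc$. The role of the palindromic envelope is to bind an identifier stored at the start of a path to the identifier stored at its end: any string $\mu \cdot abc \cdot \nu$ lies in $L$ only if $\mu, \nu \in \{0,1\}^*$ with $\nu = \mu^R$. I will use this to cross-check two occurrences of a vertex identifier along a reachability witness.

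Given a triangle-detection instance $G = (V, E)$ with $|V| = n$, let $\sigma = \lceil \log_2 n \rceil$ and let $\mathrm{bin}(u) \in \{0,1\}^\sigma$ denote the binary encoding of $u \in V$. I build an edge-labeled graph $H$ as follows. From a source $s$, grow a forward binary trie of depth $\sigma$ whose leaves are identified with ``entry copies'' $u_1$, so that reading the labels along $s \Path u_1$ spells $\mathrm{bin}(u)$. Add three disjoint copies $V_1, V_2, V_3$ of $V$, together with the edges $u_1 \LTo{a} v_2$ for every $(u,v) \in E$, $v_2 \LTo{b} w_3$ for every $(v,w) \in E$, and $w_3 \LTo{c} u_4$ for every $(u,w) \in E$, where $u_4$ is a fresh ``pre-target'' copy per $u \in V$. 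Finally, a shared reverse binary trie (sharing nodes along common suffixes) connects every $u_4$ to a sink $t$ so that reading labels along $u_4 \Path t$ spells $\mathrm{bin}(u)^R$. Overall $H$ has $O(n)$ nodes and $O(n + |E|)$ edges, and can be built in $O(n^2)$ time.

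Every path $s \Path t$ in $H$ has the shape $s \Path u_1 \LTo{a} v_2 \LTo{b} w_3 \LTo{c} u'_4 \Path t$ and spells the string $\mathrm{bin}(u) \cdot abc \cdot \mathrm{bin}(u')^R$; such a path exists iff $(u,v), (v,w), (u',w) \in E$, and the string lies in $L$ iff $\mathrm{bin}(u) = \mathrm{bin}(u')$, i.e., $u = u'$. Hence the $L$-reachability query from $s$ to $t$ in $H$ is true iff $G$ contains a triangle. A hypothetical combinatorial $O(N^{3-\epsilon})$ algorithm for single-pair $\drMCFL{1}{1}$-reachability on $N$-node graphs would therefore decide triangle in $O(n^{3-\epsilon})$ combinatorial time, contradicting BMMH. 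The main obstacle is to stay within the rank-$1$, dimension-$1$ restriction while still binding two occurrences of a vertex identifier across the path; the palindromic envelope resolves this, since rank-$1$ recursion already suffices to enforce a ``$w \cdots w^R$'' skeleton, while the binary tries keep the node count linear so that the $n^{3-o(1)}$ triangle lower bound transfers tightly.
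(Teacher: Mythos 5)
Your proposal is correct and takes essentially the same route as the paper: both reduce combinatorial triangle detection to single-pair reachability on a layered $O(n)$-node graph in which a fixed rank-$1$, dimension-$1$ (linear) language forces the vertex identifier read at the start of the witness path to equal the one read at the end, so the path closes a length-$3$ walk, i.e., a triangle. The only difference is in encoding details: the paper uses unary identifiers via $0^k1^k$-chains with $\epsilon$-labeled edges for the three graph-edge layers, whereas you use binary identifiers via palindrome tries ($w\,abc\,w^R$) — both constructions are linear-size and yield the same $n^{3-\epsilon}$ lower bound under BMMH.
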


Hence, the $\delta$ factor increase in the complexity cannot be improved in general,
while \cref{thm:upper_bound} is tight for $d=1$ and $r=1$, among combinatorial algorithms.

\Paragraph{4. Implementation and experimental evaluation.}
We implement our algorithm for MCFL reachability and run it with our family of $\drMCFL{d}{2}$s on standard benchmarks of interleaved Dyck reachability that capture taint analysis for Android~\cite{Huang2015}.
To get an indication of coverage, we compare our underapproximations with recent overapproximations.
Remarkably, our underapproximation matches the overapproximation on most benchmarks, meaning that we have fully sound and complete results.
For the remaining benchmarks, our underapproximation is able to confirm ($94.3\%$) of the taint information reported by the overapproximation.
To our knowledge, this is the first report of such high, provable coverage for this challenging benchmark set.

\section{Preliminaries}\label{SEC:PRELIMINARIES}

In this section, we introduce MCFLs and the problem of MCFL reachability.

\Paragraph{General notation.}
Given an integer $n$, we let $[n]=\{1,\dots, n\}$.
Given a finite set of symbols $\Terminals$, a string is a sequence $w\in \Terminals^*$.
We denote by $|w|$ the length of $w$,
and we write $w[a\ReplaceBy b]$ to denote the string that occurs after replacing every instance of $a$ with $b$.
Given two strings $w_1$ and $w_2$, we denote their concatenation by $w_1w_2$.
We use $\epsilon$ to denote the empty string.
As a warm-up, we begin with the standard definition of context-free grammars, in notation somewhat different than usual, which extends naturally to multiple dimensions afterwards.

\Paragraph{Context-free grammars.} A \emph{context-free grammar} is a tuple $\Grammar=(\NonTerminals, \Terminals, \Rules, \StartSymbol)$ such that the following hold.
\begin{compactitem}
\item $\NonTerminals$ is a finite alphabet of non-terminals, each denoted as a predicate of the form $A(x)$.
\item $\Terminals$ is a finite alphabet of terminals.
\item $\Rules$ is a finite set of rules.
Each rule is either a basic rule, meaning a predicate $A(s)$ with $s\in \Terminals^*$, or a production rule of the form
\[
A_0(s) \gets A_1(x^1) ,\cdots, A_{\ell}(x^{\ell})
\numberthis\label{eq:CFG_production_rule}
\]
such that
(i)~each $A_i$ is a non-terminal, 
(ii)~the variables $x^i$ are pairwise distinct,
(iii)~$s\in (\Terminals\cup X)^*$ where $X=\{x^i \mid i\in[\ell]\}$, and
(iv)~each $x^i$ appears exactly once in $s$.
\item $\StartSymbol$ is the initial non-terminal.
\end{compactitem}

\SubParagraph{Example.} Consider the language comprised of strings of the form $\{0^{n}1^{n}1^{m}0^{m}\vert n,m\geq 0\}$. A context-free grammar that identifies it is the tuple $\Grammar=(\NonTerminals, \Terminals, \Rules, \StartSymbol)$ such that:

\begin{compactitem}
\item $\NonTerminals=\{S,A,B\}$.
\item $\Terminals=\{0,1\}$.
\item $\Rules$ is comprised of the basic rules $A(\epsilon)$ and $B(\epsilon)$ and the production rules:
\begin{align}
&A(0x1) \gets A(x)\\
&B(1x0) \gets B(x)\\
&\StartSymbol(x^1x^2) \gets A(x^1),B(x^2)\label{rule:cfg-conc}
\end{align}
\end{compactitem}
 
Intuitively, $A$ parses strings of the type $\{0^{n}1^{n}\vert n\geq 0\}$ and $B$ parses strings of the type $\{1^{m}0^{m}\vert m\geq 0\}$. 
Rule \ref{rule:cfg-conc} can be read as follows: if there are 2 substrings $x^1,x^2$ such that $A$ parses $x^1$ (represented by $A(x^1)$ on the RHS) and $B$ parses $x^2$ (represented by $B(x^2)$ on the RHS), then $\StartSymbol$ parses the string formed by concatenating $x^1$ with $x^2$ (represented by $\StartSymbol(x^1x^2)$ on the LHS).

\Paragraph{Multiple context-free grammars.}
Context-free grammars can be naturally extended towards context sensitivity by allowing predicates to take more than one variable as a parameter. 
A \emph{multiple context-free grammar}~\cite{Seki1991} has dimension $d$ and a rank $r$, such that $d$ bounds the number of parameters taken by a non-terminal, and $r$ bounds the number of non-terminals on the right-hand side of any production rule. 
Formally, given integers $d$ and $r$, a multiple context-free grammar of dimension $d$ and rank $r$, $\drMCFG{d}{r}$, is a grammar $\Grammar=(\NonTerminals, \Terminals, \Rules, \StartSymbol)$ such that the following hold.
\begin{compactitem}
\item $\NonTerminals$ is a finite alphabet of non-terminals, each denoted as a predicate of the form $A(x_1, \dots, x_k)$, for some arity $k\leq d$.
\item $\Terminals$ is a finite alphabet of terminals.
\item $\Rules$ is a finite set of rules.
Each rule is either a basic rule, meaning a non-terminal $A(s_1, \dots, s_k)$ with $s_i\in \Terminals^*$, or a production rule of the form
\[
A_0(s_1,\dots, s_{k_0}) \gets A_1(x_1^1,\dots, x_{k_1}^1) ,\cdots, A_{\ell}(x_1^{\ell},\dots, x_{k_{\ell}}^{\ell})
\numberthis\label{eq:production_rule}
\]
such that
(i)~$\ell\leq r$ and each $A_i$ is a non-terminal of arity $k_i$, 
(ii)~the variables $x_j^i$ are pairwise distinct,
(iii)~each $s_i\in (\Terminals\cup X)^*$ where $X=\{x_{j}^i \mid i\in[\ell],j\in[k_i]\}$, and
(iv)~each variable has at most one appearance in $s_1\cdots s_{k_0}$, i.e., each $x^i_j$  appears in at most one $s_\ell$, and at most once in that $s_\ell$.
\item $\StartSymbol$ is the initial non-terminal, of arity $1$.
\end{compactitem}
The size of the grammar $|\Grammar|$ is the sum of the length of all of its rules. The length of a rule is the sum of total number of variables and the total length, in characters, of $s_1\cdots s_{k_0}$.

\Paragraph{Derivations and languages.}
The language of $\Grammar$ is defined inductively as a set of derivable string tuples.
In particular, for any non-terminal $A$ of arity $k$ and tuple of strings $(w_1,\dots, w_k)$, the derivation relation $A\Derives(w_1,\dots, w_k)$ is the smallest relation that satisfies the following.
First, for every basic rule $A(w_1,\dots, w_k)$, we have $A\Derives(w_1,\dots, w_k)$.
Second, for every production rule of the form of \cref{eq:production_rule}, if
$A_j\Derives(w_1^j,\dots, w_{k_j}^j)$ for every $j\in [\ell]$, we have $A_0\Derives(w_1^0,\dots, w_{k_0}^0)$, where
\[
w_i^0=s_i[x_{l}^{j}\ReplaceBy w_{l}^{j}]_{j\in [\ell], l\in[k_j]}.
\] 
Intuitively, a predicate represents the simultaneous derivation of a tuple of strings (as opposed to a single string in CFLs).
The language of $\Grammar$ is defined as $\Language(\Grammar)=\{ w\in \Terminals^*\colon \StartSymbol\Derives w \}$.
The membership question, i.e., given a word $w$ of length $n$, is it the case that $w\in \Language(\Grammar)$,
is solvable in $O(n^{d(r+1)})$ time~\cite{Seki1991}.
Observe that the class of CFLs is precisely the class $\drMCFL{1}{r}$ of 1-dimensional $\MCFL$s.
Despite being context-sensitive, MCFLs carry various desirable properties of CFLs, e.g., they form a fully abstract family of languages (AFL).

\SubParagraph{Example.}
Consider the mildly context-sensitive language $\Language=\{w_1w_2\#w_2w_1\mid w_1,w_2\in\{0,1\}^*\}$.
The $\drMCFG{2}{2}$ containing the following rules produces $\Language$.
See \cref{fig:mcfl_example} for an example.
\begin{align}
        &A(\epsilon,\epsilon)\label{rule:ex_mcfl_e}\\
        &A(x_10,x_20)\gets A(x_1,x_2)\label{rule:ex_mcfl_0}\\
        &A(x_11,x_21)\gets A(x_1,x_2)\label{rule:ex_mcfl_1}\\
        &\StartSymbol(x_1y_1\#y_2x_2)\gets A(x_1,x_2), A(y_1,y_2)\label{rule:ex_mcfl_f}
\end{align}

Rules \ref{rule:ex_mcfl_e}, \ref{rule:ex_mcfl_0} and \ref{rule:ex_mcfl_1} are such that $A$ parses the pair $(x_1,x_2)$ if and only if the strings $x_1$ and $x_2$ are equal. In other words, $A$ parses copies of some arbitrary binary string. Rule \ref{rule:ex_mcfl_f} can be intuitively read as follows: If there are 4 substrings $x_1,x_2,y_1,y_2$ such that $A$ parses the pair $(x_1,x_2)$ and $A$ also parses the pair $(y_1,y_2)$, then $S$ parses the string formed by concatenating $x_1$ with $y_1$, adding a $\#$ at the end, and further concatenating it with $y_2$ and $x_2$ (represented by $\StartSymbol(x_1y_1\#y_2x_2)$ on the LHS). Notice that, given the tuples recognized by $A$, there is a straightforward correspondence between $ x_1y_1\#y_2x_2$ in Rule~\ref{rule:ex_mcfl_f} and $w_1w_2\#w_2w_1$ in the language definition.

\begin{figure}
\centering
\begin{tikzpicture}[node distance={15mm}, line width=1pt, main/.style = {draw, rectangle},scale=0.8] 
    \node[main] at (-3,-0.5) (01) [white]{$\{1, 2\}$}; 
    \node[main] at (-3,-1.5) (02) [white]{$\{1, 2\}$}; 
    \node[main] at (0,-0.5) (ee1) [black, text=black]{\footnotesize$A(\epsilon,\epsilon)$}; 
    \node[main] at (0,-1.5) (ee2) [black, text=black]{\footnotesize$A(\epsilon,\epsilon)$}; 
    \node[main] at (3,-0.5) (001) [black, text=black]{\footnotesize$A(0,0)$}; 
    \node[main] at (3,-1.5) (002) [black, text=black]{\footnotesize$A(0,0)$};  
    \node[main] at (6,-0.5) (0101) [black, text=black]{\footnotesize$A(01,01)$}; 
    \node[main] at (6,-1.5) (0000) [black, text=black]{\footnotesize$A(00,00)$};  
    \node[main] at (10,-1) (01000001) [black, text=black]{\footnotesize$\StartSymbol(0100\#0001)$};  
    \draw [->] (01) -- (ee1)  node [midway,above] {\footnotesize\ref{rule:ex_mcfl_e}};
    \draw [->] (02) -- (ee2)  node [midway,above] {\footnotesize\ref{rule:ex_mcfl_e}};
    \draw [->] (ee1) -- (001)  node [midway,above] {\footnotesize\ref{rule:ex_mcfl_0}};
    \draw [->] (ee2) -- (002)  node [midway,above] {\footnotesize\ref{rule:ex_mcfl_0}};
    \draw [->] (001) -- (0101)  node [midway,above] {\footnotesize\ref{rule:ex_mcfl_1}};
    \draw [->] (002) -- (0000)  node [midway,above] {\footnotesize\ref{rule:ex_mcfl_0}};
    \draw [->] (0101) -- (7.5,-0.5) -- (7.5,-1) --  (01000001)  node [midway,above] {\footnotesize\ref{rule:ex_mcfl_f}};
    \draw [] (0000) -- (7.5,-1.5) -- (7.5,-1);
\end{tikzpicture}
\caption{
A derivation tree corresponding for the MCFG $\Grammar$ for the Language $\Language=\{w_1w_2\#w_2w_1\mid w_1,w_2\in\{0,1\}^*\}$, executed on the word $0100\#0001$. 
Edge labels indicate the corresponding derivation rule.
}
\label{fig:mcfl_example}
\end{figure}

\Paragraph{Special classes of MCFGs.}
The grammar $\Grammar$ is called \emph{non-deleting} if in every production of the form of \cref{eq:production_rule},
every variable $x_i^j$ appears exactly once in $s_1,\dots, s_{k_0}$.
Moreover, $\Grammar$ is called non-permuting if for every $i\in[\ell]$, the order of appearance of all variables $x^i_{1},\dots, x^i_{k_i}$ in $s_1\cdots s_k$ does not change.
It is known that, for every dimension $d$, the class MCFGs which are both non-deleting and non-permuting is as expressive as the whole class in that dimension~\cite{Kracht2003}.
Moreover, the following trade-off property holds for all $r\geq 2$: $\drMCFL{d}{r+k}\subseteq\drMCFL{(k+1)d}{r}$~\cite{Rambow1999},
and in particular, $\drMCFL{d}{r}\subseteq\drMCFL{(r-1)d}{2}$. 
In the specific case of CFLs, or $\drMCFL{1}{r}$, higher rank does not yield higher expressiveness for $r\geq 2$, i.e, $\drMCFL{1}{r}=\drMCFL{1}{r+1}$.
This, however, does not hold for $d\geq 2$.

\Paragraph{MCFL reachability.}
Given a set of terminal symbols $\Terminals$,
we consider labeled directed graphs $G=(V,E)$ where 
$V$ is a set of $n$ nodes, and
$E\subseteq V\times\Terminals\times V$ is a set of $m$ labeled edges.
We define the labeling function $\Label\colon E\to \Terminals$ by $\Label(u,a,v)=a$.
A path in $G$ is a finite sequence of connected edges $P=(e_1, \dots e_k)$.
We lift the labeling function to paths, and let $\Label(P)=\Label(e_1)\cdots \Label(e_k)$ be the string obtained by concatenating the labels of the edges of $P$.

Given a language $\Language \subseteq \Terminals^*$, we say that a node $v$ is $\Language$-reachable from another node $u$ if there is a path $P\colon u\Path v$ with $\Label(P)\in \Language$~\cite{Yannakakis1990}.
Thus language reachability refines the standard notion of graph reachability to witnesses that produce a string belonging to the language.
Language reachability is most frequently phrased wrt a CFL language, known as CFL-reachability, and
has found truly numerous applications in databases, program analyses, and others.
In this paper we consider the more expressive setting of $\drMCFL{d}{r}$-reachability, parameterized by the dimension $d$ and rank $r$ of the language.

\section{MCFL Approximations of Interleaved Dyck Reachability}\label{SEC:MODELING}

In this section we present a hierarchy of MCFLs of rank $2$, indexed by the dimension $d$, for underapproximating the interleaved Dyck language, i.e, every string recognized by the MCFL is a valid interleaved Dyck string, while more such strings are captured as the dimension increases.
This section is organized as follows.
In \cref{SUBSEC:MODELING_PRELIMS} we introduce formally the problem of interleaved Dyck reachability.
In \cref{SUBSEC:MODELING_MCFG_MODEL} we present the general structure of our MCFGs, namely the predicates they contain and the intuition behind them.
In \cref{SUBSEC:BASIC_RULES} we develop a basic set of production rules that achieve the desired approximation, but the MCFGs are relatively less expressive (i.e., some ``simple'' strings require a high dimension to be produced).
Finally, in \cref{SUBSEC:EXPRESSIVENESS} we introduce some additional, more involved rules, that make the MCFGs significantly more expressive.

\subsection{Preliminaries}\label{SUBSEC:MODELING_PRELIMS}

\Paragraph{General notation.} 
Given a language $\Language$, let $\Sigma(\Language)$ be the alphabet of $\Language$, i.e., the set of letters that occur in some string in $\Language$.
Given some alphabet $A$ and some string $s$ over a superset of $A$, we denote by $s\Project A$ the projection of $s$ to $A$, i.e., the subsequence of $s$ consisting of all the letters in $A$. 
Given two languages $\Language_1$ and $\Language_2$ with disjoint alphabets ($\Sigma(\Language_1)\cap \Sigma(\Language_2)=\emptyset$), their interleaving is defined as the set of strings $s$ such that when projected to the alphabet of each language, we obtain a string that belongs to that language.
Formally,
\[
\Language_1\odot \Language_2 =\{ s\in (\Sigma(\Language_1)\cup \Sigma(\Language_2) )^* \colon  (s\Project \Sigma(\Language_1)) \in \Language_1 \text{ and } (s\Project \Sigma(\Language_2)) \in \Language_2 \} 
\]

\Paragraph{Dyck languages.}  
Given a natural number $k\in \mathbb{N}$ and a set of parenthesis pairs $\{\pcOne{\op_i},\pcOne{\cp_i}\}_{i\in [k]}$, the Dyck language $\Dyck$ is the language of properly-balanced parentheses, produced by the context-free grammar $\Grammar=(\{\StartSymbol\}, \{ \pcOne{\op_i},\pcOne{\cp_i}\}_{i \in [k]}, \Rules, \StartSymbol)$, where $\Rules$ is the following set of production rules.
\begin{align}
&\StartSymbol(\epsilon)  \label{rule-epsilon}\\
&\StartSymbol(\pcOne{\op_i} x \pcOne{\cp_i}) \gets \StartSymbol(x) \label{rule-parentheses} \quad \forall i \in [k]\\        
&\StartSymbol(x_1x_2) \gets \StartSymbol(x_1),\StartSymbol(x_2) \label{rule-concatenation}
\end{align}

\Paragraph{Interleaved Dyck languages.} 
Given a natural number $k \in \mathbb{N}$, we define two Dyck languages, $\Dyck_\alpha$ and $\Dyck_{\beta}$,
over the alphabets $\{\{ \pcOne{\op_i},\pcOne{\cp_i}\}_{i \in [k]}\}$ and  $\{\{ \pcTwo{\ob_i},\pcTwo{\cb_i}\}_{i \in [k]}\}$, respectively. 
The interleaved Dyck language of $\Dyck_{\alpha}$ and $\Dyck_{\beta}$ is defined as  $\IDL := \Dyck_\alpha \odot \Dyck_\beta$.
In  words, $\IDL$ consists of strings that may contain both parentheses and brackets, as long as the substring consisting of its parentheses (resp., brackets) is in $\Dyck_\alpha$ (resp., $\Dyck_\beta$). 

\SubParagraph{Example.} For $k=2$, the string $s=\pcOne{(_1}\pcTwo{[_1}\pcOne{)_1}\pcTwo{[_2}\pcTwo{]_2}\pcOne{(_2}\pcTwo{]_1}\pcOne{)_2}$ is in $\IDL$ since $s\Project\Sigma(\Dyck_\alpha)=\pcOne{(_1}\pcOne{)_1}\pcOne{(_2}\pcOne{)_2}$ is in $\Dyck_\alpha$ and $s\Project\Sigma(\Dyck_\beta)=\pcTwo{[_1}\pcTwo{[_2}\pcTwo{]_2}\pcTwo{]_1}$ is in $\Dyck_\beta$.

As we have already described in \cref{SEC:INTRO} (and illustrated in \cref{SUBSEC:MOTIVATING_EXAMPLE}), language reachability wrt $\IDL$ is a frequent model in static analyses, but also undecidable~\cite{Reps2000}.
Next, we develop a family of MCFLs that approximate $\IDL$ with increasing precision.

\subsection{MCFG Model}\label{SUBSEC:MODELING_MCFG_MODEL}

We define a sequence of grammars $(\Grammar_d^+)_{d\geq 1}$ such that each $\Grammar_d^+$ is a $\drMCFG{d}{2}$ that underapproximates $\IDL$. 
In particular, we have 
(i)~$\Grammar_d^+\subseteq \IDL$, and
(ii)~for every string $s\in \IDL$ there is a $d_0\in \Nats$ such that for every $d\geq d_0$, we have $s\in \Language(\Grammar_d^+)$. 
In words, $(\Grammar_d^+)_{d\geq 1}$ underapproximates $\IDL$ and coincides with $\IDL$ in the limit $d\to \infty$.

Let $k$ be the number of different parenthesis/bracket pairs in $\IDL$ (for simplicity in notation, we assume they are equal).
Each $\Grammar_d^+$ consists of the following.
\begin{compactitem}
\item A set of nonterminals $\{\StartSymbol\} \cup \{P^{c}\}_{{c} \in [d]} \cup \{Q^{c}\}_{{c} \in [d]}$,
with each $P^c$ and $Q^c$ having arity $c$.
$\StartSymbol$ is the initial nonterminal.
\item A set of terminals $\{ \pcOne{\op_i},\pcOne{\cp_i}\}_{i \in [k]}\cup \{ \pcTwo{\ob_i},\pcTwo{\cb_i}\}_{i \in [k]}$, which are the parentheses and brackets of $\IDL$.
\item A set of production rules $\Rules$.
\end{compactitem}

It remains to define the set of rules $\Rules$.
To benefit readability, we first define in \cref{SUBSEC:BASIC_RULES} a simpler set $\Rules$, leading to an intermediate grammar $\Grammar_d^{\circ}$ that has the desired approximation properties (i) and (ii), but is relatively less expressive. 
Then we strengthen $\Rules$ in \cref{SUBSEC:EXPRESSIVENESS} to obtain $\Grammar_d^{+}$, which is more expressive (i.e., $\Language(\Grammar_d^{\circ})\subseteq \Language(\Grammar_d^{+})$). Intuitively, in $\Grammar_d^{\circ}$, $P^{c}$ seeks to capture strings of parentheses split into parts, and $Q^{c}$ strings of brackets split into parts. These ``parts'' can then be interleaved to form a string captured by $\StartSymbol$.  The grammar $\Grammar_d^{+}$ loosens these restrictions to increase expressivity.

The grammars presented model the interleaving of two Dyck languages, but this can be extended. This requires incorporating a set of nonterminals (analogous to $\{P^{c}\}_{{c} \in [d]}$ and $\{Q^{c}\}_{{c}\in [d]}$) and a set of terminals (analogous to $\{ \pcOne{\op_i},\pcOne{\cp_i}\}_{i \in [k]}$ and $\{ \pcTwo{\ob_i},\pcTwo{\cb_i}\}_{i \in [k]}$) for each additional language.

The rules for each set on nonterminals will be analogous to the ones on $\{P^{c}\}_{{c} \in [d]}$ and $\{Q^{c}\}_{{c}\in [d]}$, and the nonterminals corresponding to each language will be interleaved to generate $\StartSymbol$. This will lead to a final grammar of rank equal to the number of Dyck CFLs being interleaved.

\subsection{The Simpler Grammar $\Grammar_d^{\circ}$}\label{SUBSEC:BASIC_RULES}

We now present the initial set of production rules $\Rules$ towards the simpler grammar $\Grammar_d^{\circ}$.
The intuitive principle is as follows:~a string in $\IDL$ such as $s=\pcOne{(_1}\pcTwo{[_1}\pcOne{)_1}\pcTwo{[_2}\pcTwo{]_2}\pcOne{(_2}\pcTwo{]_1}\pcOne{)_2}$ can be partitioned into substrings $\{s_i\}_i$ so that $s_i\in \Sigma(\Dyck_\alpha)^*$ or $s_i\in \Sigma(\Dyck_\beta)^*$. 
For example, one such partitioning for $s$ is $\{\pcOne{(_1},\pcTwo{[_1},\pcOne{)_1},\pcTwo{[_2}\pcTwo{]_2},\pcOne{(_2},\pcTwo{]_1},\pcOne{)_2}\}$.
Notice that the substrings from $\Dyck_\alpha$, $\{\pcOne{(_1}, \pcOne{)_1},\pcOne{(_2},\pcOne{)_2}\}$, concatenated, form the string $\pcOne{(_1}\pcOne{)_1}\pcOne{(_2}\pcOne{)_2}\in \Dyck_\alpha$,
while substrings from $\Dyck_\beta$, $\{\pcTwo{[_1},\pcTwo{[_2}\pcTwo{]_2},\pcTwo{]_1}\}$ form the string $\pcTwo{[_1}\pcTwo{[_2}\pcTwo{]_2}\pcTwo{]_1}\in \Dyck_\beta$. 

Our MCFG $\Grammar_d^{\circ}$ emulates this process:~it uses the $d$ dimensions to proactively partition strings $s_{\alpha}\in \Dyck_{\alpha}$ and $s_{\beta}\in \Dyck_{\beta}$ into $d$ substrings,
which will be the interleaving points to produce a string $s\in \IDL$ derived by the initial nonterminal $\StartSymbol$.
This emulates a scheme of bounded context switching.
With higher dimensions $d$ we can afford more interleaving points, increasing the expressiveness of $\Grammar_d^{\circ}$.
In particular, for any $c\in[d]$, $P^{c}$ (resp., $Q^{c})$ is such that for every string $s\in \Dyck_{\alpha}$ (resp., $s\in \Dyck_{\beta})$ that can be partitioned into substrings $s=s_1\cdots s_c$, we have $P^{c}(s_1,\cdots, s_c)$ (resp., $Q^{c}(s_1,\cdots, s_c)$).

We start with rules for the predicates $P^c(x_1,\cdots, x_c)$, which recognize strings $x_1\cdots x_c\in \Dyck_{\alpha}$. The case of $c=1$ captures precisely $\Dyck_\alpha$.
\begin{align}
&P^1(\epsilon) \label{subs-rule-eps-1} \\
&P^1(\pcOne{\op_i} x \pcOne{\cp_i}) \gets P^1(x) \qquad \forall i \in [k] \label{subs-rule-par-1} \\     
&P^1(x_1x_2) \gets P^1(x_1),P^1(x_2)
\end{align}
For $2\leq c \leq d$, we write rules analogous to the ones above, but instead of a single parameter $x$, we have parameters $x_1,\cdots, x_c$. 
For example, we generalize rule \ref{subs-rule-par-1} by concatenating an open parenthesis before $x_1$ and the corresponding close parenthesis after $x_c$. 
Formally, we have the following rules.
\begin{align}
&P^{c}(\epsilon,\cdots, \epsilon)  \label{subs-rule-k-epsilon} \\
&P^{c}(\pcOne{\op_i} x_1, x_2, \cdots, x_{{c}-1}, x_{c}  \pcOne{\cp_i}) \gets P^{c}(x_1, \cdots, x_{c}) \qquad \forall i \in [k]\label{subs-rule-k-parentheses}
\end{align}
For concatenation, the concatenated strings may have already been ``split'' into parameters of some predicate, so we must add concatenation rules for all possible combinations of already ``split'' strings. Formally, for all $a,b,c$ such that $1\leq a,b,c \leq d$ and $a+b=c+1$, we have the rule:
\begin{align}
P^{c}(x_1, \cdots , x_a y_1, \cdots, y_b) &\gets P^a(x_1, \cdots, x_a), P^b(y_1, \cdots, y_b)\label{subs-rule-k-concatenation}
\end{align}

The following lemma states the completeness property of the predicates $P^c$.
\begin{restatable}{lemma}{lemsubslemmaseparation}\label{lem:subs-lemma-separation}
For every $c\leq d$ and $x_1x_2\cdots x_{c} \in \Dyck_\alpha$, we have that $P^{c}\Derives(x_1,x_2,\cdots,x_{c})$.
\end{restatable}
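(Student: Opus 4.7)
The plan is to prove the statement by strong induction on the pair $(|w|, c)$ ordered lexicographically, where $w = x_1 \cdots x_c \in \Dyck_\alpha$ and $c \leq d$. The base case $|w| = 0$ is immediate: every component is $\epsilon$ and the basic rule \ref{subs-rule-k-epsilon} (or \ref{subs-rule-eps-1} when $c = 1$) directly gives $P^c(\epsilon, \ldots, \epsilon)$. The case $c = 1$ for larger $|w|$ reduces to the standard derivation of Dyck strings by a context-free grammar, handled by the wrap and concatenation rules for $P^1$.

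For the inductive step with $|w| \geq 1$ and $c \geq 2$, I would split on whether some component is empty. If $x_{i_0} = \epsilon$ for some $i_0$, dropping that component leaves a split of $w$ into $c - 1$ parts, for which $P^{c-1}$ holds by the secondary induction on $c$; I would then reinsert the empty by combining with the basic predicate $P^2(\epsilon, \epsilon)$ via the concatenation rule \ref{subs-rule-k-concatenation}, choosing $(a,b) = (2, c-1)$ when $i_0 = 1$, $(c-1, 2)$ when $i_0 = c$, and a pair $(i_0, c-i_0+1)$ for interior positions, whose two auxiliary predicates have strictly smaller arity and are obtained by iterating the same device. If instead all $x_i$ are non-empty, I use the recursive structure of $w$: either $w = \op_i u \cp_i$ is a single balanced atom, in which case $x_1 = \op_i x_1'$ and $x_c = x_c' \cp_i$, and the split $(x_1', x_2, \ldots, x_{c-1}, x_c')$ of $u$ satisfies $P^c$ by the primary induction on $|u| < |w|$, after which rule \ref{subs-rule-k-parentheses} reconstructs $P^c(x_1, \ldots, x_c)$; or $w = w_1 w_2$ with both factors non-empty Dyck words. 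In the latter, if the $w_1/w_2$ boundary falls strictly inside some $x_j = pq$, the induced splits $(x_1, \ldots, x_{j-1}, p)$ of $w_1$ and $(q, x_{j+1}, \ldots, x_c)$ of $w_2$ have sizes $j$ and $c - j + 1$, and rule \ref{subs-rule-k-concatenation} with $(a,b) = (j, c-j+1)$ finishes via the primary IH on $|w_1|, |w_2| < |w|$; if it coincides with a split point between $x_t$ and $x_{t+1}$, the induced splits have sizes $t$ and $c - t$, so I would pad $(x_1, \ldots, x_t)$ with a trailing $\epsilon$ to obtain a $(t+1)$-split of $w_1$ (again by the primary IH) and apply concatenation with $(a,b) = (t+1, c-t)$.

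The main obstacle is the arithmetic mismatch between the rank-$2$ concatenation rule, which requires $a + b = c + 1$, and the natural two-factor split arithmetic $a + b = c$; closing this gap uniformly forces the empty-padding device, which is why both the reinsertion of empty components and the boundary-on-split-point subcase must be handled by combining an auxiliary $P^2(\epsilon, \epsilon)$ with a reduced split. The hypothesis $c \leq d$ is used precisely to ensure every padded arity $t+1$ remains within $d$, so the grammar's arity bound is never exceeded, and the lexicographic order on $(|w|, c)$ guarantees termination of the nested induction.
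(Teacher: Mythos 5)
Your overall induction---lexicographic on $(|w|,c)$, base cases via rule \ref{subs-rule-k-epsilon} and the $P^1$ rules, the wrap case via rule \ref{subs-rule-k-parentheses}, and the two-factor case via rule \ref{subs-rule-k-concatenation} with an $\epsilon$-pad to fix the $a+b=c+1$ arithmetic---is exactly the paper's argument. However, there is a genuine gap in your treatment of empty components. You route \emph{every} split with some $x_{i_0}=\epsilon$ through the drop-and-reinsert device, and for interior $i_0$ you propose combining auxiliary predicates of arities $(i_0,\,c-i_0+1)$. For the merged component of rule \ref{subs-rule-k-concatenation} to come out as the interior $\epsilon$, these would have to be $P^{i_0}(x_1,\dots,x_{i_0-1},\epsilon)$ and $P^{c-i_0+1}(\epsilon,x_{i_0+1},\dots,x_c)$. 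Neither is derivable in general: every tuple derivable by $P^a$ has components concatenating to a word in $\Dyck_\alpha$ (this invariant is preserved by rules \ref{subs-rule-k-epsilon}, \ref{subs-rule-k-parentheses} and \ref{subs-rule-k-concatenation}), whereas $x_1\cdots x_{i_0-1}$ and $x_{i_0+1}\cdots x_c$ are merely a prefix and a suffix of $w$ and need not be balanced. Concretely, for $w=\pcOne{\op_1\cp_1}$ split as $(\pcOne{\op_1},\epsilon,\pcOne{\cp_1})$ with $c=3$, your recipe asks for $P^2(\pcOne{\op_1},\epsilon)$ and $P^2(\epsilon,\pcOne{\cp_1})$, which do not exist; the target $P^3(\pcOne{\op_1},\epsilon,\pcOne{\cp_1})$ is instead reached by applying rule \ref{subs-rule-k-parentheses} to $P^3(\epsilon,\epsilon,\epsilon)$. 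These auxiliary predicates are also not instances of the lemma, so neither your induction hypothesis nor ``iterating the same device'' produces them; and no alternative choice of $(a,b)$ helps, since any rank-2 concatenation forces the junction to lie at a point where $w$ factors into two words of $\Dyck_\alpha$, which the position of an interior empty component need not be.

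The repair is small and recovers the paper's proof: strip empty components only at the boundary---$i_0=1$ via $(a,b)=(2,c-1)$ and $i_0=c$ via $(c-1,2)$, exactly as you do---and let interior empties flow through your structural analysis, which in fact needs only $x_1\neq\epsilon$ and $x_c\neq\epsilon$: the wrap case peels $\pcOne{\op_i}$ off $x_1$ and $\pcOne{\cp_i}$ off $x_c$ and is indifferent to interior $\epsilon$'s, and the concatenation case works verbatim. (The paper realizes your ``boundary at a split point'' subcase by taking the largest $a$ with a character in the left factor and allowing $x''_a=\epsilon$, i.e., a leading-$\epsilon$ pad of the right piece rather than your trailing-$\epsilon$ pad of the left piece; both are fine and respect $c\leq d$.) With that one change your proof coincides with the paper's.
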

\SubParagraph{Example.} Consider the string $\pcOne{\op_1\cp_1\op_2\op_1\cp_1\cp_2}$ in $\Dyck_\alpha$. We will arbitrarily partition it into $\{\pcOne{\op_1},\pcOne{\cp_1\op_2\op_1\cp_1\cp_2}\}$. Lemma \ref{lem:subs-lemma-separation} states that $P^{2}\Derives(\pcOne{\op_1},\pcOne{\cp_1\op_2\op_1\cp_1\cp_2}\})$. This is indeed the case, as $P^{2}(\pcOne{\op_1},\pcOne{\cp_1\op_2\op_1\cp_1\cp_2}\})$ can be derived from $P^{2}(\pcOne{\op_1},\pcOne{\cp_1\op_2\op_1\cp_1\cp_2}\})\gets^{\ref{subs-rule-k-concatenation}} P^{2}(\pcOne{\op_1},\pcOne{\cp_1}), P^{1}(\pcOne{\op_2\op_1\cp_1\cp_2})$. In turn, $P^{2}(\pcOne{\op_1},\pcOne{\cp_2})\gets^{\ref{subs-rule-k-parentheses}}P^{2}(\epsilon,\epsilon)\gets^{\ref{subs-rule-k-epsilon}}\epsilon$ and $P^{1}(\pcOne{\op_2\op_1\cp_1\cp_2})\gets^{\ref{subs-rule-par-1}}P^{1}(\pcOne{\op_1\cp_1})\gets^{\ref{subs-rule-par-1}}P^{1}(\epsilon)\gets^{\ref{subs-rule-eps-1}}\epsilon$.

The rules for the predicates $Q^c$ are defined analogously, capturing strings in $\Dyck_{\beta}$.
\begin{align}
&Q^1(\epsilon) \label{rule:eps_t_1} \\
&Q^1(\pcTwo{\ob_i} x \pcTwo{\cb_i}) \gets Q^1(x) \qquad \forall i \in [k] \label{rule:par_t_1}\\        
&Q^1(x_1x_2) \gets Q^1(x_1),Q^1(x_2)
\end{align}

For $2\leq c \leq d$, we have the following rules.
\begin{align}
& Q^c(\epsilon, \cdots, \epsilon) \label{rule:eps_t}\\
&Q^c(\pcTwo{\ob_i} x_1, x_2, \cdots, x_{c-1}, x_c  \pcTwo{\cb_i}) \gets Q^c(x_1,x_2, \cdots, x_c) \qquad \forall i \in [k] \label{rule:par_t}
\end{align}

Finally, for all $a,b,c$ with $1\leq a,b,c \leq d$ and $a+b=c+1$, we have the following concatenation rule.
\begin{align}
Q^c(x_1, \cdots , x_a y_1, \cdots, y_b) &\gets Q^a(x_1,x_2 \cdots, x_a), Q^b(y_1,y_2 \cdots, y_b)\label{rule:conc_t}
\end{align}

The following lemma states the completeness property of the predicates $Q^c$.

\begin{restatable}{lemma}{lemsubslemmaseparationt}\label{lem:subs-lemma-separation-t}
For every $c\leq d$ and $x_1x_2\cdots x_{c} \in \Dyck_\beta$, we have that $Q^{c}\Derives(x_1,x_2,\cdots,x_{c})$.
\end{restatable}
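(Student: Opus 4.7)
The plan is to prove \cref{lem:subs-lemma-separation-t} by reducing it to the already-stated \cref{lem:subs-lemma-separation}. The rules \ref{rule:eps_t_1}--\ref{rule:conc_t} governing the $Q^c$ predicates are obtained from the rules \ref{subs-rule-eps-1}--\ref{subs-rule-k-concatenation} governing the $P^c$ predicates by the renaming $\pcOne{\op_i}\mapsto\pcTwo{\ob_i}$, $\pcOne{\cp_i}\mapsto\pcTwo{\cb_i}$, and $P^c\mapsto Q^c$. Since $\Dyck_\beta$ is the Dyck language $\Dyck_\alpha$ under exactly this renaming, every derivation $P^c\Derives(x_1,\ldots,x_c)$ lifts to a derivation $Q^c\Derives(x_1',\ldots,x_c')$ where the $x_i'$ are the images of the $x_i$ under the renaming. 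Thus \cref{lem:subs-lemma-separation-t} is a direct symbolic consequence of \cref{lem:subs-lemma-separation}.

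For self-containment I would also give the direct argument, by induction on $n=|x_1x_2\cdots x_c|$. In the base case $n=0$, every $x_i=\epsilon$, and one application of rule \ref{rule:eps_t} (or \ref{rule:eps_t_1} when $c=1$) gives $Q^c\Derives(\epsilon,\ldots,\epsilon)$. For the inductive step, I appeal to the standard structural decomposition of a non-empty Dyck word $w=x_1\cdots x_c\in\Dyck_\beta$: either the leading bracket $\pcTwo{\ob_i}$ of $w$ matches the final bracket $\pcTwo{\cb_i}$, or $w$ splits as a concatenation $w=uv$ of two non-empty Dyck words.

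In the first case, write $x_1=\pcTwo{\ob_i} x_1'$ and $x_c=x_c'\pcTwo{\cb_i}$, so that $x_1'x_2\cdots x_{c-1}x_c'\in\Dyck_\beta$; the inductive hypothesis yields $Q^c\Derives(x_1',x_2,\ldots,x_{c-1},x_c')$, and rule \ref{rule:par_t} then produces $Q^c(x_1,\ldots,x_c)$. In the second case, the split of $w$ as $uv$ interacts with the partition $(x_1,\ldots,x_c)$ in one of two ways: either the split lies at a boundary between some $x_j$ and $x_{j+1}$, in which case $u$ is partitioned into $j$ parts and $v$ into $c-j$ parts; or the split falls strictly inside some $x_j$, in which case $x_j$ contributes its prefix to $u$ and its suffix to $v$, partitioning $u$ into $j$ parts and $v$ into $c-j+1$ parts. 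Either way we obtain $a,b$ with $a+b=c+1$, and the inductive hypothesis on $u$ and $v$ combined with the concatenation rule \ref{rule:conc_t} yields $Q^c(x_1,\ldots,x_c)$.

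The only subtle point is the bookkeeping in the concatenation case, where the $x_j$ crossed by the split must be accounted for twice (once as the $a$-th parameter of the derivation of $u$ and once as the first parameter of the derivation of $v$); the condition $a+b=c+1$ in rule \ref{rule:conc_t} is precisely tuned to absorb this double-count. Since the identical bookkeeping already appears in \cref{lem:subs-lemma-separation}, no new obstacle arises, and the lemma follows.
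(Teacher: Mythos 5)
Your primary argument---lifting derivations along the renaming $\pcOne{\op_i}\mapsto\pcTwo{\ob_i}$, $\pcOne{\cp_i}\mapsto\pcTwo{\cb_i}$, $P^c\mapsto Q^c$---is correct and is exactly the paper's approach: the paper's entire proof of this lemma is the sentence ``analogous to \cref{lem:subs-lemma-separation}'', which your homomorphism argument merely makes explicit. Since that argument alone establishes the lemma, your proposal succeeds.

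However, your supplementary direct induction, as sketched, has two gaps relative to the paper's proof of \cref{lem:subs-lemma-separation}, both in the handling of degenerate parameters. First, your matched-bracket case writes $x_1=\pcTwo{\ob_i}x_1'$ and $x_c=x_c'\pcTwo{\cb_i}$, which silently assumes $x_1\neq\epsilon$ and $x_c\neq\epsilon$; if $w$ is primitive (outermost brackets matched, no nontrivial split) and $x_1=\epsilon$, neither of your two cases applies. The paper dispatches this up front: when $x_1=\epsilon$ (resp.\ $x_c=\epsilon$), it derives $Q^c(\epsilon,x_2,\dots,x_c)$ from $Q^2(\epsilon,\epsilon)$ and $Q^{c-1}(x_2,\dots,x_c)$ via \cref{rule:conc_t}, and only then performs the case analysis on the last Dyck derivation step. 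Second, in your concatenation case the boundary subcase does not deliver what you claim: if the split $w=uv$ falls exactly between $x_j$ and $x_{j+1}$, then $a=j$ and $b=c-j$, so $a+b=c$, not $c+1$, and \cref{rule:conc_t} does not directly apply. The fix is to fold the boundary subcase into the interior one by padding with an empty part (e.g., derive $Q^{c-j+1}(\epsilon,x_{j+1},\dots,x_c)$ for $v$, giving $a+b=c+1$); the paper achieves this automatically by choosing $a$ as the \emph{largest} index such that $x_a$ contains a character of $u$, which permits the suffix $x_a''$ to be $\epsilon$. Both slips are local and repairable, and they do not affect the validity of your renaming argument, which carries the lemma on its own.
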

\SubParagraph{Example.} Analogously to the example for \cref{lem:subs-lemma-separation}, we may show that $Q^{2}\Derives(\pcTwo{\ob_1},\pcTwo{\cb_1\ob_2\ob_1\cb_1\cb_2}\})$ by showing that $Q^{2}(\pcTwo{\ob_1},\pcTwo{\cb_1\ob_2\ob_1\cb_1\cb_2}\})\gets^{\ref{rule:conc_t}} Q^{2}(\pcTwo{\ob_1},\pcTwo{\cb_1}), Q^{1}(\pcTwo{\ob_2\ob_1\cb_1\cb_2})$ and, in turn, 
 $Q^{2}(\pcTwo{\ob_1},\pcTwo{\cb_1})\gets^{\ref{rule:par_t}}Q^{2}(\epsilon,\epsilon)\gets^{\ref{rule:eps_t}}\epsilon$ and $Q^{1}(\pcTwo{\ob_2\ob_1\cb_1\cb_2})\gets^{\ref{rule:par_t_1}}Q^{1}(\pcTwo{\ob_1\cb_1})\gets^{\ref{rule:par_t_1}}Q^{1}(\epsilon)\gets^{\ref{rule:eps_t_1}}\epsilon$.


\Paragraph{Interleaving $Q$ and $P$.} 
The last step for our grammar $\Grammar_d^{\circ}$ is to define the interleaving rules deriving the start symbol $\StartSymbol$. Our final string will be formed by mixing segments of a string $s_{\alpha}\in \Dyck_{\alpha}$ and a string $s_{\beta}\in \Dyck_{\beta}$.
\begin{align}
\StartSymbol(x_1 y_1 \cdots x_d y_d) &\gets P^d(x_1, \cdots, x_d), Q^d(y_1, \cdots, y_d)\label{subs-rule-k-interleave}\\
\StartSymbol(y_1 x_1 \cdots y_d x_d) &\gets P^d(x_1, \cdots, x_d), Q^d(y_1, \cdots, y_d)\label{subs-rule-k-interleave-2}
\end{align}

\begin{figure}
\centering
\begin{tikzpicture}[node distance={15mm}, line width=1pt, main/.style = {draw, rectangle},scale=0.8] 
    \node[main] at (-1.5,1.5) (0) [white]{\footnotesize nada}; 
    \node[main] at (0.3,1.5) (1) [black, text=black]{\footnotesize$P^2(\epsilon,\epsilon)$}; 
    \node[main] at (-3.7,0.5) (2) [white]{\footnotesize nada}; 
    \node[main] at (-1.9,0.5) (3) [black, text=black]{\footnotesize$P^2(\epsilon,\epsilon)$}; 
    \node[main] at (0.2,0.5) (4) [black, text=black]{\footnotesize$P^2(\pcOne{(_1},\pcOne{)_1})$}; 
    \node[main] at (2.95,0.5) (5) [black, text=black]{\footnotesize$P^3(\pcOne{(_1},\pcOne{)_1},\epsilon)$}; 
    \node[main] at (-0.8,-0.5) (6) [white]{\footnotesize nada}; 
    \node[main] at (1,-0.5) (7) [black, text=black]{\footnotesize$P^2(\epsilon,\epsilon)$}; 
    \node[main] at (3.1,-0.5) (8) [black, text=black]{\footnotesize$P^2(\pcOne{(_2},\pcOne{)_2})$}; 
    \node[main] at (6.3,0) (9) [black, text=black]{\footnotesize$P^4(\pcOne{(_1},\pcOne{)_1},\pcOne{(_2},\pcOne{)_2})$}; 
    \node[main] at (-8.4,-1.5) (10) [white]{};
    \node[main] at (-7.1,-1.5) (11) [black, text=black]{\footnotesize$Q^1(\epsilon)$}; 
    \node[main] at (-6.9,-0.5) (12) [white]{\footnotesize$nada$};
    \node[main] at (-5.1,-0.5) (13) [black, text=black]{\footnotesize$Q^2(\epsilon,\epsilon)$}; 
    \node[main] at (-4.3,-2) (16) [white]{\footnotesize$nada$};
    \node[main] at (-5.2,-1.5) (14) [black, text=black]{\footnotesize$Q^1(\pcTwo{[_2}\pcTwo{]_2})$}; 
    \node[main] at (-2.7,-1) (15) [black, text=black]{\footnotesize$Q^2(\epsilon,\pcTwo{[_2}\pcTwo{]_2})$}; 
    \node[main] at (-2.5,-2) (17) [black, text=black]{\footnotesize$Q^2(\epsilon,\epsilon)$}; 
    \node[main] at (-0,-1.5) (18) [black, text=black]{\footnotesize$Q^3(\epsilon,\pcTwo{[_2}\pcTwo{]_2},\epsilon)$}; 
    \node[main] at (2.8,-1.5) (19) [black, text=black]{\footnotesize$Q^3(\pcTwo{[_1},\pcTwo{[_2}\pcTwo{]_2},\pcTwo{]_1})$}; 
    \node[main] at (1.5,-2.5) (20) [white]{\footnotesize$nada$};
    \node[main] at (3.3,-2.5) (21) [black, text=black]{\footnotesize$Q^2(\epsilon,\epsilon)$}; 
    \node[main] at (6.2,-2) (22) [black, text=black]{\footnotesize$Q^4(\pcTwo{[_1},\pcTwo{[_2}\pcTwo{]_2},\pcTwo{]_1},\epsilon)$};  
    \node[main] at (7.5,-1) (23) [black, text=black]{\footnotesize$\StartSymbol(\pcOne{(_1}\pcTwo{[_1}\pcOne{)_1}\pcTwo{[_2}\pcTwo{]_2}\pcOne{(_2}\pcTwo{]_1}\pcOne{)_2})$};  
    \draw [->] (0) -- (1)  node [midway,above] {\footnotesize \ref{subs-rule-k-epsilon}};
    \draw [->] (2) -- (3)  node [midway,above] {\footnotesize \ref{subs-rule-k-epsilon}};
    \draw [->] (6) -- (7)  node [midway,above] {\footnotesize \ref{subs-rule-k-epsilon}};
    \draw [->] (10) -- (11)  node [midway,above] {\footnotesize \ref{rule:eps_t}};
    \draw [->] (12) -- (13)  node [midway,above] {\footnotesize \ref{rule:eps_t}};
    \draw [->] (16) -- (17)  node [midway,above] {\footnotesize \ref{rule:eps_t}};
    \draw [->] (20) -- (21)  node [midway,above] {\footnotesize \ref{rule:eps_t}};
    \draw [->] (7) -- (8)  node [midway,above] {\footnotesize \ref{subs-rule-k-parentheses}};
    \draw [->] (3) -- (4)  node [midway,above] {\footnotesize \ref{subs-rule-k-parentheses}};
    \draw [->] (18) -- (19)  node [midway,above] {\footnotesize \ref{rule:par_t}};

    \draw [->] (11) -- (14)  node [midway,above] {\footnotesize \ref{rule:par_t_1}};

    \draw [->] (13) -- (-4.2,-0.5) -- (-4.2,-1) --  (15)  node [midway,above] {\footnotesize\ref{rule:conc_t}};
    \draw [] (14) -- (-4.2,-1.5) -- (-4.2,-1);

     \draw [->] (15) -- (-1.6,-1) -- (-1.6,-1.5) --  (18)  node [midway,above] {\footnotesize\ref{rule:conc_t}};
    \draw [] (17) -- (-1.6,-2) -- (-1.6,-1.5);
    
    \draw [->] (1) -- (1.3,1.5) -- (1.3,0.5) --  (5)  node [midway,above] {\footnotesize\ref{subs-rule-k-concatenation}};
    \draw [] (4) -- (1.3,0.5);
    
    \draw [->] (5) -- (4.2,0.5) -- (4.2,0) --  (9)  node [midway,above] {\footnotesize\ref{subs-rule-k-concatenation}};
    \draw [] (8) -- (4.2,-0.5) -- (4.2,0);
    
    \draw [->] (19) -- (4.2,-1.5) -- (4.2,-2) --  (22)  node [midway,above] {\footnotesize\ref{rule:conc_t}};
    \draw [] (21) -- (4.2,-2.5) -- (4.2,-2);
    
    \draw [->] (9) -- (5.3,-0) -- (5.3,-1) --  (23)  node [midway,above] {\footnotesize \ref{subs-rule-k-interleave}};
    \draw [] (22) -- (5.3,-2) -- (5.3,-1);
    \node[main] at (6.3,0) (a) [black, text=black,fill=white]{\footnotesize$P^4(\pcOne{(_1},\pcOne{)_1},\pcOne{(_2},\pcOne{)_2})$}; 
    \node[main] at (6.2,-2) (aa) [black, text=black, fill=white]{\footnotesize$Q^4(\pcTwo{[_1},\pcTwo{[_2}\pcTwo{]_2},\pcTwo{]_1},\epsilon)$};  
\end{tikzpicture}
\caption{
A derivation tree witnessing $\pcOne{(_1}\pcTwo{[_1}\pcOne{)_1}\pcTwo{[_2}\pcTwo{]_2}\pcOne{(_2}\pcTwo{]_1}\pcOne{)_2}\in \Language(\Grammar_4)$.
}
\label{fig:mcfl_dyck_simple}
\end{figure}

\begin{figure}
\begin{subfigure}[t]{0.3\textwidth}
\centering
\begin{tikzpicture}[node distance={15mm}, line width=1pt, main/.style = {draw, rectangle},scale=0.8] 
    \node[main] at (0,0) (p41) [white, text=black]{\footnotesize$P^4(\pcOne{x_1}, \pcOne{x_2},\pcOne{x_3},\pcOne{x_4})$}; 
    \node[main] at (3,0) (q41) [white, text=black]{\footnotesize$Q^4(\pcTwo{y_1}, \pcTwo{y_2},\pcTwo{y_3},\pcTwo{y_4})$};  
    \node[main] at (1.3,-1.5) (s1) [white, text=black]{\footnotesize$\StartSymbol(\pcOne{x_1}\ \ \pcTwo{y_1}\ \ \pcOne{x_2}\ \ \pcTwo{y_2}\ \ \pcOne{x_3}\ \ \pcTwo{y_3}\ \ \pcOne{x_4}\ \ \pcTwo{y_4})$};  
    \path (-0.5,-0.3) edge [out=-90,in=90, mybluecolor,->](-0.3, -1.35);
    \path (-0,-0.3) edge [out=-80,in=90, mybluecolor,->](0.6, -1.35);
    \path (0.5,-0.3) edge [out=-70,in=90, mybluecolor,->](1.65, -1.35);
    \path (1,-0.3) edge [out=-60,in=90, mybluecolor,->](2.65, -1.35);
    \path (2.4,-0.3) edge [out=-150,in=90, \darkred,->](0.15, -1.35);
    \path (2.9,-0.3) edge [out=-140,in=90, \darkred,->](1.2, -1.35);
    \path (3.4,-0.3) edge [out=-130,in=90, \darkred,->](2.2, -1.35);
    \path (3.9,-0.3) edge [out=-120,in=90, \darkred,->](3.2, -1.35);
\end{tikzpicture}
\subcaption{\label{subfig:interleaving_ex}
Interleaving of the substrings parsed by $P^4$ and $Q^4$ to form $\StartSymbol$ through rule \ref{subs-rule-k-interleave}. 
}
\end{subfigure}
\qquad
\begin{subfigure}[t]{0.6\textwidth}
\centering
\begin{tikzpicture}[node distance={15mm}, line width=1pt, main/.style = {draw, rectangle},scale=0.8] 
    \node[main] at (7,0) (p42) [white, text=black]{\footnotesize$P^4(\pcOne{(_1}, \pcOne{)_1},\pcOne{(_2},\pcOne{)_2})$}; 
    \node[main] at (10,0) (q42) [white, text=black]{\footnotesize$Q^4(\pcTwo{[_1}, \pcTwo{[_2]_2},\pcTwo{]_1},\eps)$};  
    \node[main] at (8.3,-1.5) (s2) [white, text=black]{\footnotesize$\StartSymbol(\pcOne{(_1} \ \ \pcTwo{[_1} \ \ \pcOne{)_1}\ \ \pcTwo{[_2}\pcTwo{]_2}\ \ \pcOne{(_2}\ \ \pcTwo{]_1}\ \ \pcOne{)_2}\ \ \eps )$};  
    \path (6.6,-0.3) edge [out=-90,in=90, mybluecolor,->](6.8, -1.2);
    \path (7,-0.3) edge [out=-80,in=90, mybluecolor,->](7.7, -1.2);
    \path (7.5,-0.3) edge [out=-70,in=90, mybluecolor,->](8.9, -1.2);
    \path (7.9,-0.3) edge [out=-60,in=90, mybluecolor,->](9.8, -1.2);
    \path (9.4,-0.3) edge [out=-150,in=90, \darkred,->](7.3, -1.2);
    \path (10,-0.3) edge [out=-130,in=90, \darkred,->](8.3, -1.2);
    \path (10.6,-0.3) edge [out=-120,in=90, \darkred,->](9.4, -1.2);
    \path (11,-0.3) edge [out=-110,in=90, \darkred,->](10.3, -1.2);
\end{tikzpicture}
\subcaption{\label{subfig:interleaving_cc}
Illustration of the interleaving in a concrete example: the final derivation, as shown in \cref{fig:mcfl_dyck_simple}, forms the string $\pcOne{(_1}\pcTwo{[_1}\pcOne{)_1}\pcTwo{[_2}\pcTwo{]_2}\pcOne{(_2}\pcTwo{]_1}\pcOne{)_2}$. 
We show the $\eps$ in the final string for clarity.
}
\end{subfigure}
\caption{
Interleavings of elements in $P^4$ and $Q^4$ to form a string in $\StartSymbol$.
}
\label{fig:example_interleaving}
\end{figure}

For example, the string $\pcOne{(_1}\pcTwo{[_1}\pcOne{)_1}\pcTwo{[_2}\pcTwo{]_2}\pcOne{(_2}\pcTwo{]_1}\pcOne{)_2}\in \IDL$ is in $\Language(\Grammar_4^{\circ})$ (and also in $\Language(\Grammar_d^\circ)$  for all  $d\geq 4$). 
\cref{fig:mcfl_dyck_simple} illustrates a corresponding derivation tree and \cref{fig:example_interleaving} shows in more detail how the elements of each predicate are combined.
The following lemma states that the underapproximation of $\IDL$ via the languages $\Language(\Grammar_d^{\circ})_{d>0}$ is complete in the limit.

\begin{restatable}{lemma}{lemmathereisdimension}\label{lem:there_is_dimension}
For all $s\in \IDL$, there exists a dimension $d_0$ such that for all $d\geq d_0$, we have $s\in \Language(\Grammar_d^{\circ})$.
\end{restatable}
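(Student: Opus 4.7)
The plan is to construct an explicit derivation for $s$ in $\Grammar_d^\circ$ by exploiting the two previously-established completeness lemmas for $P^c$ and $Q^c$. First I would decompose $s \in \IDL$ into a sequence of maximal contiguous runs that alternate between letters of $\Sigma(\Dyck_\alpha)$ and letters of $\Sigma(\Dyck_\beta)$. Let $d_0$ be the number of $\alpha$-runs in this decomposition (equivalently, the number of $\beta$-runs, up to an additive $1$ depending on whether $s$ starts with a parenthesis or a bracket). For any $d \geq d_0$, by padding with empty substrings where needed, I can write $s$ either as $x_1 y_1 x_2 y_2 \cdots x_d y_d$ (case A, $s$ starts with a parenthesis or $\eps$) or as $y_1 x_1 y_2 x_2 \cdots y_d x_d$ (case B, $s$ starts with a bracket), with each $x_i \in \Sigma(\Dyck_\alpha)^*$ and each $y_i \in \Sigma(\Dyck_\beta)^*$.

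Next, since $s \in \IDL = \Dyck_\alpha \odot \Dyck_\beta$, the projection $s \Project \Sigma(\Dyck_\alpha)$ equals $x_1 x_2 \cdots x_d$ and lies in $\Dyck_\alpha$; symmetrically, $y_1 y_2 \cdots y_d \in \Dyck_\beta$. Applying \cref{lem:subs-lemma-separation} to the concatenation $x_1 x_2 \cdots x_d$ gives $P^d \Derives (x_1, \ldots, x_d)$, and applying \cref{lem:subs-lemma-separation-t} to $y_1 y_2 \cdots y_d$ gives $Q^d \Derives (y_1, \ldots, y_d)$. In case A, I conclude by one application of rule \ref{subs-rule-k-interleave}; in case B, by one application of rule \ref{subs-rule-k-interleave-2}. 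Either way, $\StartSymbol \Derives s$, so $s \in \Language(\Grammar_d^\circ)$.

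The proof is essentially a bookkeeping argument, and the only mild subtlety is making the decomposition formal. Concretely, I would define the $\alpha$-runs as the maximal factors of $s$ lying in $\Sigma(\Dyck_\alpha)^*$, and $\beta$-runs analogously, then set $d_0$ to be the number of $\alpha$-runs (or $\beta$-runs, whichever is larger). Two easy observations close the argument: (i) the padding with $\eps$ preserves both the concatenation equalities and the Dyck membership of the concatenations, so the lemmas still apply for any $d \geq d_0$; and (ii) the two interleaving rules \ref{subs-rule-k-interleave} and \ref{subs-rule-k-interleave-2} cover exactly the two possible alternation patterns, so no further case analysis is needed. There is no hard technical obstacle here — all heavy lifting was done in \cref{lem:subs-lemma-separation,lem:subs-lemma-separation-t}, and the present lemma is a direct consequence combined with the alternation structure of interleavings.
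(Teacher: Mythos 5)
Your proof is correct and follows essentially the same route as the paper: decompose $s$ into alternating $\alpha$- and $\beta$-runs, invoke \cref{lem:subs-lemma-separation,lem:subs-lemma-separation-t} on the projections, and finish with one interleaving rule. The only (immaterial) difference is that you handle $d>d_0$ by padding the decomposition with $\eps$-substrings before applying the completeness lemmas, whereas the paper pads at the predicate level, extending $P^{d_0}$ and $Q^{d_0}$ with $\eps$-entries via the concatenation rules \ref{subs-rule-k-concatenation} and \ref{rule:conc_t}; likewise, your case B is subsumed by allowing $x_1=\eps$, which is how the paper gets away with rule \ref{subs-rule-k-interleave} alone.
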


To obtain some indication of the coverage of each grammar $\Grammar_d^{\circ}$, consider the interleaved Dyck language $\IDL$ over the two alphabets $\{ \pcOne{\op_1},\pcOne{\cp_1}\}$ and   $\{\pcTwo{\ob_1},\pcTwo{\cb_1}\}$. 
\cref{tab:simple_mcfl} shows how many strings of $\IDL$ of a fixed size belong to $\Language(\Grammar_d^{\circ})$,
for small values of $d$.

\begin{table}
\caption{\label{tab:simple_mcfl}
The number of strings of a given length of $\IDL$ over the alphabets $\{ \pcOne{\op_1},\pcOne{\cp_1}\}$ and   $\{\pcTwo{\ob_1},\pcTwo{\cb_1}\}$, that are in $\Language(\Grammar_d^\circ)$.
The languages gain expressiveness as the dimension $d$ increases, converging towards $\IDL$.
} 
\centering
\begin{tabular}{|l|r|r|r|r|r|r|r|}
\cline{2-8}
\multicolumn{1}{c}{}& \multicolumn{7}{|c|}{\textbf{Length of string}}\\
\cline{2-8}
\multicolumn{1}{c|}{} & 2 & 4 & 6 & 8 & 10 & 12 & 14\\
\hline
Strings in $\Language(\Grammar_1^\circ)$ & 2 & 6 & 18 & 56 & 180 & 594 & 2,002\\
Strings in $\Language(\Grammar_2^\circ)$ & 2 & 10 & 58 & 312 & 1,556 & 7,358 & 33,546\\
Strings in $\Language(\Grammar_3^\circ)$ & 2 & 10 & 70 & 556 & 4,244 & 29,642 & 190,334\\
\hline
\multicolumn{1}{|l|}{Strings in $\IDL$} & 2 & 10 & 70 & 588 & 5,544 & 56,628 & 613,470\\
\hline
\end{tabular}
\end{table}

\subsection{The Full Grammar $\Grammar_d^{+}$} \label{SUBSEC:EXPRESSIVENESS}

Having established a basic intuition on the grammar $\Grammar_d^{\circ}$,
we now obtain our more expressive grammar $\Grammar_d^{+}$, by introducing some additional production rules
for the predicates $P^c$ and $Q^c$.
Intuitively, $P^c$ (resp., $Q^c$) will no longer be restricted to parsing strings consisting entirely of parentheses (resp., brackets).
Instead, for some $P^c(x_1,\cdots, x_c)$, the substring $x_i$ may contain brackets, as long as $x_i\Project \{\pcTwo{\ob_i},\pcTwo{\cb_i}\}_{i \in [k]} \in \Dyck_\beta$, i.e,  the brackets in each $x_i$, by themselves, are validly nested. 
Analogously, for some $Q^c(x_1,\cdots, x_c)$, the substring $x_i$ may contain parentheses, as long as $x_i\Project \{\pcOne{\op_i},\pcOne{\cp_i}\}_{i \in [k]} \in \Dyck_\alpha$.
This interleaving is based on the following lemma.

\begin{restatable}{lemma}{leminsertstringdyck}\label{lem:insert_string_dyck}
Consider any Dyck language $\Dyck$.
If $s_1s_2\in \Dyck$ and $t\in \Dyck$ then $s_1ts_2\in \Dyck$.
\end{restatable}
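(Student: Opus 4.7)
The plan is to prove the lemma by induction on the structure of a derivation of $s_1 s_2$ in the Dyck grammar defined by rules \ref{rule-epsilon}, \ref{rule-parentheses}, and \ref{rule-concatenation}. Since $\Dyck$ is a CFL generated by a single start symbol $\StartSymbol$, the statement $s_1 s_2 \in \Dyck$ means $\StartSymbol \Derives s_1 s_2$, and I will case-split on the final production used to obtain $s_1 s_2$.

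For the base case, the derivation uses rule \ref{rule-epsilon}, so $s_1 s_2 = \epsilon$, forcing $s_1 = s_2 = \epsilon$. Then $s_1 t s_2 = t \in \Dyck$ by assumption. For the inductive step, I would consider the two possible non-trivial last rules:

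\textbf{Case A: Rule \ref{rule-parentheses}.} Here $s_1 s_2 = \pcOne{\op_i}\, w\, \pcOne{\cp_i}$ for some $w \in \Dyck$. Depending on where the boundary between $s_1$ and $s_2$ lies, I would distinguish three subcases: (i) $s_1 = \epsilon$, so $s_1 t s_2 = t \cdot s_1 s_2$ is a concatenation of Dyck strings, handled by rule \ref{rule-concatenation}; (ii) $s_2 = \epsilon$, symmetric; (iii) $s_1 = \pcOne{\op_i}\, w_1$ and $s_2 = w_2\, \pcOne{\cp_i}$ with $w_1 w_2 = w$. In this last subcase, the inductive hypothesis applied to $w_1, w_2, t$ yields $w_1 t w_2 \in \Dyck$, and rule \ref{rule-parentheses} wraps it back to give $\pcOne{\op_i}\, w_1 t w_2\, \pcOne{\cp_i} = s_1 t s_2 \in \Dyck$.

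\textbf{Case B: Rule \ref{rule-concatenation}.} Here $s_1 s_2 = u_1 u_2$ where $u_1, u_2 \in \Dyck$ are produced by shorter derivations. The split $s_1 \mid s_2$ either falls inside $u_1$ or inside $u_2$ (the boundary case is symmetric and falls under either). If the boundary lies in $u_1$, write $u_1 = s_1 v$ with $s_2 = v u_2$; then by the inductive hypothesis on $u_1 = s_1 v \in \Dyck$, we get $s_1 t v \in \Dyck$, and concatenation with $u_2$ via rule \ref{rule-concatenation} gives $s_1 t v u_2 = s_1 t s_2 \in \Dyck$. The dual situation in $u_2$ is handled symmetrically.

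The only mildly delicate part is carrying out the boundary analysis cleanly in Case B and ensuring the inductive hypothesis is applied to a strictly smaller derivation, so I would phrase the induction on the number of rule applications in the derivation witnessing $s_1 s_2 \in \Dyck$ rather than on string length. Once that is in place, every case either produces $s_1 t s_2$ directly by a concatenation step or reduces via one parenthesis-wrapping step to a smaller insertion to which the inductive hypothesis applies.
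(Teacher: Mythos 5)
Your proof is correct and takes essentially the same route as the paper's: induction on the number of rule applications in the derivation of $s_1s_2$, with a case split on the last rule (\ref{rule-epsilon}, \ref{rule-parentheses}, or \ref{rule-concatenation}) and on where the boundary $s_1\mid s_2$ falls, wrapping via Rule~\ref{rule-parentheses} or rejoining via Rule~\ref{rule-concatenation} after applying the inductive hypothesis to a strict sub-derivation. The only cosmetic difference is that the paper dispatches the $s_1=\epsilon$ and $s_2=\epsilon$ cases upfront by direct concatenation, whereas you fold them into the case analysis; your explicit remark that the induction should be on derivation size rather than string length is exactly the right precaution.
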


In particular, the new production rules for the predicates $P^c$ and $Q^c$ maintain the following invariants.
Consider any sequence of strings $s_1, \dots, s_c$.
\begin{align}
&\text{If }P^c\Derives(s_1,\dots, s_c) \text{ then } (s_1\cdots s_c \Project \{\pcOne{\op_i},\pcOne{\cp_i}\}_{i \in [k]} )\in \Dyck_\alpha \text{ and } \forall j \in [c]. (s_j \Project \{\pcTwo{\ob_i},\pcTwo{\cb_i}\}_{i \in [k]}) \in \Dyck_\beta
\label{eq:rules_invariant_parentheses}\\
&\text{If }Q^c\Derives(s_1,\dots, s_c) \text{ then } (s_1\cdots s_c \Project \{\pcTwo{\ob_i},\pcTwo{\cb_i}\}_{i \in [k]} )\in \Dyck_\beta \text{ and } \forall j \in [c]. (s_j \Project \{\pcOne{\op_i},\pcOne{\cp_i}\}_{i \in [k]}) \in \Dyck_\alpha \label{eq:rules_invariant_brackets}
\end{align}
In words, the invariant in \cref{eq:rules_invariant_parentheses} (resp., \cref{eq:rules_invariant_brackets}) guarantees that $P^c$ (resp.,  $Q^c$) parses, as a whole, a valid string of parenthesis (resp.,  brackets) interleaved with multiple valid strings of brackets (resp.,  parenthesis), each of these being fully contained in some $s_i$.
Together, these invariants guarantee a sound interleaving of the corresponding substrings, as stated in the following lemma.

\begin{restatable}{lemma}{leminterdyckmcfginvariants}\label{lem:inter_dyck_mcfg_invariants}
Assume that for each $c\in [d]$, the predicates $P^c$ (resp., $Q^c)$ satisfy the invariant in \cref{eq:rules_invariant_parentheses} (resp., \cref{eq:rules_invariant_brackets}).
Then any string $s$ such that $\StartSymbol\Derives(s)$ satisfies $s\in \IDL$.
\end{restatable}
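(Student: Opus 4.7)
}

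The plan is to reduce the claim to checking the two projection conditions that define $\IDL=\Dyck_\alpha\odot\Dyck_\beta$, namely that $s\Project\Sigma(\Dyck_\alpha)\in\Dyck_\alpha$ and $s\Project\Sigma(\Dyck_\beta)\in\Dyck_\beta$, and then to leverage \cref{lem:insert_string_dyck} to combine the Dyck fragments that the invariants hand us.

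First, I would observe that the only production rules for the start symbol are \ref{subs-rule-k-interleave} and \ref{subs-rule-k-interleave-2}. Hence any $s$ with $\StartSymbol\Derives s$ has, up to the symmetric form of rule \ref{subs-rule-k-interleave-2}, the shape $s=x_1 y_1 x_2 y_2 \cdots x_d y_d$ where $P^d\Derives(x_1,\dots,x_d)$ and $Q^d\Derives(y_1,\dots,y_d)$. It suffices to treat the first form; the second is entirely symmetric.

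Let $\pi_\alpha$ and $\pi_\beta$ denote projection to $\Sigma(\Dyck_\alpha)$ and $\Sigma(\Dyck_\beta)$ respectively. Since projection distributes over concatenation, I would write
\begin{align*}
s\pi_\alpha &= (x_1\pi_\alpha)(y_1\pi_\alpha)\cdots (x_d\pi_\alpha)(y_d\pi_\alpha),\\
s\pi_\beta  &= (x_1\pi_\beta)(y_1\pi_\beta)\cdots (x_d\pi_\beta)(y_d\pi_\beta).
\end{align*}
By the invariant \cref{eq:rules_invariant_parentheses} applied to $P^d$, the string $(x_1\pi_\alpha)\cdots(x_d\pi_\alpha)=(x_1\cdots x_d)\pi_\alpha$ lies in $\Dyck_\alpha$, and by \cref{eq:rules_invariant_brackets} applied to $Q^d$, each $y_j\pi_\alpha$ lies in $\Dyck_\alpha$. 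The symmetric statements hold for $\pi_\beta$: each $x_j\pi_\beta\in\Dyck_\beta$ while $(y_1\cdots y_d)\pi_\beta\in\Dyck_\beta$.

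The key step is then to show that inserting the fragments $y_j\pi_\alpha\in\Dyck_\alpha$ into the positions between consecutive $x_j\pi_\alpha$ blocks of a string already in $\Dyck_\alpha$ keeps the result in $\Dyck_\alpha$. This is a direct induction on $d$ using \cref{lem:insert_string_dyck}: starting from $(x_1\pi_\alpha)\cdots(x_d\pi_\alpha)\in\Dyck_\alpha$, I insert $y_1\pi_\alpha$ after the first block (taking $s_1=x_1\pi_\alpha$ and $s_2=(x_2\cdots x_d)\pi_\alpha$ in the lemma), obtaining a string still in $\Dyck_\alpha$; repeating for $y_2,\dots,y_d$ yields $s\pi_\alpha\in\Dyck_\alpha$. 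The analogous insertion argument with roles of parentheses and brackets swapped yields $s\pi_\beta\in\Dyck_\beta$, and combining the two gives $s\in\IDL$.

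The main obstacle I anticipate is bookkeeping in the inductive insertion step: one must be careful that each inserted $y_j\pi_\alpha$ is placed at a position that separates a valid prefix and suffix of the current string (so that \cref{lem:insert_string_dyck} applies), which is guaranteed here because consecutive projected blocks $(x_j\pi_\alpha)$ and $(x_{j+1}\pi_\alpha)$ form such a split point. Beyond this, the argument is purely mechanical and the other rule \ref{subs-rule-k-interleave-2} requires no new ideas.
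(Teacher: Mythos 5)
Your proposal is correct and takes essentially the same route as the paper's own proof: both reduce to the last application of rule \ref{subs-rule-k-interleave} or \ref{subs-rule-k-interleave-2} to obtain the decomposition $s=x_1y_1\cdots x_dy_d$, apply the invariants \cref{eq:rules_invariant_parentheses,eq:rules_invariant_brackets} to get $(x_1\cdots x_d)\Project\Sigma(\Dyck_\alpha)\in\Dyck_\alpha$ and each $y_j\Project\Sigma(\Dyck_\alpha)\in\Dyck_\alpha$, and conclude by repeated application of \cref{lem:insert_string_dyck} on each projection (symmetrically for $\Dyck_\beta$). Your explicit bookkeeping of the split points for the insertion lemma is a minor refinement of detail that the paper leaves implicit.
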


It is straightforward to verify that the production rules defined in \cref{SUBSEC:BASIC_RULES} follow the invariants in \cref{eq:rules_invariant_parentheses} and \cref{eq:rules_invariant_brackets}.
We now further add two classes of production rules, called \emph{insertion} and \emph{nesting}, that increase the expressiveness of the grammar while maintaining the invariants.

\Paragraph{Insertion.} 
Since we know that every string identified by $\StartSymbol$ is in $\IDL$, we can, from \cref{lem:insert_string_dyck}, insert any string $y$ such that $\StartSymbol(y)$ in between any characters of a valid string $x_1\cdots x_c$. The insertion rules concatenate strings derived by $\StartSymbol$ to parameters of $P^c$ and $Q^c$. 
For each $1\leq c \leq d$ and $1 \leq i \leq c$, we have the following rules.
\begin{align}
P^c(x_1, \dots , yx_i ,\dots, x_{c}) &\gets  P^{c}(x_1, \dots, x_{c}), \StartSymbol(y) \label{subs-rule-k-conc-1}\\
P^c(x_1, \dots , x_iy ,\dots, x_{c}) &\gets  P^{c}(x_1, \dots, x_{c}), \StartSymbol(y) \label{subs-rule-k-conc-2}\\
Q^c(x_1, \dots , yx_i ,\dots, x_{c}) &\gets  Q^{c}(x_1, \dots, x_{c}), \StartSymbol(y) \\
Q^c(x_1, \dots , x_iy ,\dots, x_{c}) &\gets  Q^{c}(x_1, \dots, x_{c}), \StartSymbol(y) \label{rule:tts}
\end{align}

Notice that the invariants \cref{eq:rules_invariant_parentheses} and \cref{eq:rules_invariant_brackets} are maintained because all the insertions follow the conditions in \cref{lem:insert_string_dyck}. 
In other words, since we are only inserting strings $y \in \IDL$, every string $s$ that now encloses $y$ (i.e., derived by the predicates on the left-hand side) is valid on the parentheses or brackets ($ s \Project \{\pcOne{\op_i},\pcOne{\cp_i}\}_{i \in [k]} \in \Dyck_\alpha$ or $ s \Project \{\pcTwo{\ob_i},\pcTwo{\cb_i}\}_{i \in [k]} \in \Dyck_\beta $ ), as long as this was already the case without $y$ (i.e., derived by the $P^c$/$Q^c$ predicates on the right-hand side).

\SubParagraph{Example.} 
To illustrate the increase in expressiveness with the insertion rules, consider the string $\pcOne{(_1}\pcTwo{[_1}\pcOne{)_1}\pcTwo{[_2}\pcTwo{]_2}\pcOne{(_2}\pcTwo{]_1}\pcOne{)_2}$.
It is derived by $\Grammar_3^{+}$, while it is only derived by $\Grammar_d^{\circ}$ for $d\geq 4$.
This can be seen, for example, through the application of Rule~\ref{subs-rule-k-interleave} to the predicates $P^3(\pcOne{(_1},\pcOne{)_1}\pcTwo{[_2}\pcTwo{]_2}\pcOne{(_2},\pcOne{)_2})$ and $Q^3(\pcTwo{[_1},\pcTwo{]_1},\epsilon)$. The predicate $P^3(\pcOne{(_1},\pcOne{)_1}\pcTwo{[_2}\pcTwo{]_2}\pcOne{(_2},\pcOne{)_2})$ can be generated by concatenating $P^2(\pcOne{(_1},\pcOne{)_1}\pcTwo{[_2}\pcTwo{]_2})$ and  $P^2(\pcOne{(_2},\pcOne{)_2})$ through the application of Rule~\ref{subs-rule-k-concatenation}. $P^2(\pcOne{(_1},\pcOne{)_1}\pcTwo{[_2}\pcTwo{]_2})$, in turn, is generated by the application of Rule~\ref{subs-rule-k-conc-2} to $P^2(\pcOne{(_1},\pcOne{)_1})$ and $\StartSymbol(\pcTwo{[_2}\pcTwo{]_2})$, with $i=2$.

\Paragraph{Nesting.} Nesting allows for more complex ``mixing'' of strings in $\Dyck_\alpha$ and $\Dyck_\beta$. Essentially, we want to add some interleaving without relying exclusively on rules \ref{subs-rule-k-interleave} and \ref{subs-rule-k-interleave-2}. One way to do this would be to add a rule that derives $P^c(x_1, \dots , \pcTwo{\ob_j} x_i \pcTwo{\cb_j} ,\dots, x_{k})$ from $P^c(x_1, \dots , x_i,\dots, x_{k})$. We, instead, add a more general version of this rule, by enclosing $x_i$ in any $\{y_1,y_2\}$ such that $Q^2(y_1,y_2)$ as opposed to only $\{\pcTwo{\ob_j},\pcTwo{\cb_j}\}$. We also add the equivalent rules for the predicates $Q^c$ and $P^2$.
The nesting rules thus allow the enclosing of each parameter in predicates of dimension two. 
Notice that this rule only applies when $d>2$. 
For each $1\leq c \leq d$ and $1 \leq i \leq c$, we have the following rules.
\begin{align}
P^c(x_1, \dots , y_1x_iy_2 ,\dots, x_{k}) &\gets  P^{c}(x_1, \dots, x_{k}), Q^2(y_1,y_2) \label{subs-rule-k-enclose-parameter-p}\\
Q^c(x_1, \dots , y_1x_iy_2 ,\dots, x_{k}) &\gets  Q^{c}(x_1, \dots, x_{k}), P^2(y_1,y_2)\label{subs-rule-k-enclose-parameter-q}
\end{align}

We similarly add variants of Rule~\ref{subs-rule-k-parentheses} (resp., Rule~\ref{rule:par_t}) that generalize the enclosing of a predicate in $(\pcOne{\op_i},\pcOne{\cp_i})$ (resp. $(\pcTwo{\ob_i},\pcTwo{\cb_i})$) to a more general $(y_1,y_2)$ for some $P^2(y_1,y_2)$ (resp. $Q^2(y_1,y_2)$).  
For $1\leq c \leq d$
\begin{align}
&P^{c}(y_1 x_1, x_2, \dots, x_{{c}-1}, x_{c} y_2) \gets P^{c}(x_1, \dots, x_{c}), P^2(y_1,y_2)\\
&Q^{c}(y_1 x_1, x_2, \dots, x_{{c}-1}, x_{c} y_2) \gets Q^{c}(x_1, \dots, x_{c}), Q^2(y_1,y_2)
\end{align}

The invariants \cref{eq:rules_invariant_parentheses} and \cref{eq:rules_invariant_brackets} are also preserved under the nesting rules.

\SubParagraph{Examples.} 
Consider again the string $\pcOne{(_1}\pcTwo{[_1}\pcOne{)_1}\pcTwo{[_2}\pcTwo{]_2}\pcOne{(_2}\pcTwo{]_1}\pcOne{)_2}$.
Using the nesting rules, it can also be produced by $\Grammar_2$. 
This can be done, for example, through the application of Rule~\ref{subs-rule-k-interleave-2} in the predicates $P^2(\pcTwo{[_2}\pcTwo{]_2}, \epsilon)$ and $Q^2(\pcOne{(_1}\pcTwo{[_1}\pcOne{)_1},\pcOne{(_2}\pcTwo{]_1}\pcOne{)_2})$.
$Q^2(\pcOne{(_1}\pcTwo{[_1}\pcOne{)_1},\pcOne{(_2}\pcTwo{]_1}\pcOne{)_2})$ can be generated through the application of Rule~\ref{subs-rule-k-enclose-parameter-q} to $Q^2(\pcTwo{[_1},\pcOne{(_2}\pcTwo{]_1}\pcOne{)_2})$ and $P^2(\pcOne{(_1},\pcOne{)_1})$ with $i=1$. In turn, $Q^2(\pcTwo{[_1},\pcOne{(_2}\pcTwo{]_1}\pcOne{)_2})$  can be generated through the application of Rule~\ref{subs-rule-k-enclose-parameter-q} to $Q^2(\pcTwo{[_1},\pcTwo{]_1})$ and $P^2(\pcOne{(_1},\pcOne{)_1})$ with $i=2$.

\begin{table}
\caption{\label{tab:final_mcfl}
The number of strings of a given length of $\IDL$ over the alphabets $\{ \pcOne{\op_1},\pcOne{\cp_1}\}$ and   $\{\pcTwo{\ob_1},\pcTwo{\cb_1}\}$, that are in $\Language(\Grammar_d^{+})$.
The languages gain expressiveness as the dimension $d$ increases, converging towards $\IDL$.
Compared to \cref{tab:simple_mcfl}, the new derivation rules significantly increase the expressiveness of the grammar.
} 
\centering
\begin{tabular}{|l|r|r|r|r|r|r|r|}
\cline{2-8}
\multicolumn{1}{c}{}& \multicolumn{7}{|c|}{Length of string}\\
\cline{2-8}
\multicolumn{1}{c|}{} & 2 & 4 & 6 & 8 & 10 & 12 & 14\\
\hline
Strings in $\Language(\Grammar_1^+)$ & 2 & 8 & 40 & 224 & 1,344 & 8,448 & 54,912\\
Strings in $\Language(\Grammar_2^+)$ & 2 & 10 & 70 & 588 & 5,544 & 56,612 & 612,686\\
Strings in $\Language(\Grammar_3^+)$ & 2 & 10 & 70 & 588 & 5,544 & 56,628 & 613,470\\
\hline
\multicolumn{1}{|l|}{Strings in $\IDL$} & 2 & 10 & 70 & 588 & 5,544 & 56,628 & 613,470\\
\hline
\end{tabular}
\end{table}

\cref{tab:final_mcfl} shows how many strings of $\IDL$, over the alphabets $\{ \pcOne{\op_1},\pcOne{\cp_1}\}$ and $\{\pcTwo{\ob_1},\pcTwo{\cb_1}\}$, and of a given size belong to $\Language(\Grammar_d^+)$, for different values of $d$.
Compared to \cref{tab:simple_mcfl}, we see that the expressiveness has increased significantly.

\begin{figure}[!h] 
\begin{subfigure}[t]{0.37\textwidth}
\centering
\begin{tikzpicture}[node distance={15mm}, line width=1pt, main/.style = {draw, rectangle},scale=0.8] 
    \node[main] at (1.5,-0.5) (aa) [white]{}; 
    \node at (-3.5,2.5) (0) [white]{\footnotesize}; 
    \node[main] at (-1.9,2.5) (1) [black, text=black]{\footnotesize$P^2(\epsilon,\epsilon)$}; 
    \node[main] at (0.3,2.5) (15) [black, text=black]{\footnotesize$P^2(\pcOne{(_{10}},\pcOne{)_{10}})$}; 
    \node at (-3.5,0.5) (2) [white]{\footnotesize}; 
    \node[main] at (-1.9,0.5) (3) [black, text=black]{\footnotesize$Q^2(\epsilon,\epsilon)$}; 
    \node[main] at (0.3,0.5) (4) [black, text=black]{\footnotesize$Q^2(\pcTwo{[_1},\pcTwo{]_1})$}; 
    \node[main] at (1.5,1.5) (5) [black, text=black]{\footnotesize$\StartSymbol(\pcOne{(_{10}}\pcTwo{[_1}\pcOne{)_{10}}\pcTwo{]_1})$}; 
 
    \draw [->] (0) -- (1)  node [midway,above] {\footnotesize \ref{subs-rule-k-epsilon}};
    \draw [->] (1) -- (15)  node [midway,above] {\footnotesize \ref{subs-rule-k-parentheses}};
    \draw [->] (2) -- (3)  node [midway,above] {\footnotesize \ref{rule:eps_t}};
    
    \draw [->] (3) -- (4)  node [midway,above] {\footnotesize \ref{rule:par_t}};

    \draw [->] (15) -- (-0.3,2.5) -- (-0.3,1.5) --  (5)  node [midway,above] {\footnotesize \ref{subs-rule-k-interleave}};
    \draw [] (4) -- (-0.3,0.5) -- (-0.3,1.5);
    \node[main] at (0.3,2.5) (a) [black, text=black,fill=white]{\footnotesize$P^2(\pcOne{(_{10}},\pcOne{)_{10}})$}; 
    \node[main] at (0.3,0.5) (aa) [black, text=black, fill=white]{\footnotesize$Q^2(\pcTwo{[_1},\pcTwo{]_1})$}; 
\end{tikzpicture}
\subcaption{\label{subfig:program_graph_derivation}
A derivation witnessing $\pcOne{(_{10}}\pcTwo{[_1}\pcOne{)_{10}}\pcTwo{]_1}\in \Language(\Grammar_2^+)$, 
capturing dataflow from $a$ to $c$ in  the graph of \cref{subfig:program_graph}.
}
\end{subfigure}
\hspace{0.25cm}
\begin{subfigure}[t]{.6\linewidth}
\begin{tikzpicture}[node distance={15mm}, line width=1pt, main/.style = {draw, rectangle},scale=0.8] 
    \node[main] at (-3.3,2.5) (0) [white]{\footnotesize}; 
    \node[main] at (-1.9,2.5) (1) [black, text=black]{\footnotesize$P^2(\epsilon,\epsilon)$}; 
    \node[main] at (0.55,2.5) (15) [black, text=black]{\footnotesize$P^2(\pcOne{(_{100}},\pcOne{)_{100}})$}; 
    \node[main] at (-3.3,1.7) (00) [white]{\footnotesize}; 
    \node[main] at (-1.9,1.7) (11) [black, text=black]{\footnotesize$Q^2(\epsilon,\epsilon)$}; 
    \node[main] at (0.3,1.7) (115) [black, text=black]{\footnotesize$Q^2(\pcTwo{[_1},\pcTwo{]_1})$}; 

    \node[main] at (3.8,2.1) (p2) [black, text=black]{\footnotesize$P^2(\pcTwo{[_1}\pcOne{(_{100}}\pcTwo{]_1},\pcOne{)_{100}})$}; 
    \node[main] at (-3.3,0.9) (2) [white]{\footnotesize}; 
    \node[main] at (-1.9,0.9) (3) [black, text=black]{\footnotesize$Q^2(\epsilon,\epsilon)$}; 
    \node[main] at (0.3,0.9) (4) [black, text=black]{\footnotesize$Q^2(\pcTwo{[_1},\pcTwo{]_1})$}; 
    \node[main] at (3.2,0.5) (q2) [black, text=black]{\footnotesize$Q^2(\pcTwo{[_1}\pcTwo{[_1}\pcTwo{]_1},\pcTwo{]_1})$}; 
    \node[main] at (-3.3,-0.7) (22) [white]{\footnotesize}; 
    \node[main] at (-1.9,-0.7) (33) [black, text=black]{\footnotesize$Q^2(\epsilon,\epsilon)$}; 
    \node[main] at (0.3,-0.7) (44) [black, text=black]{\footnotesize$Q^2(\pcTwo{[_1},\pcTwo{]_1})$}; 
    \node[main] at (-2.9,0.1) (222) [white]{\footnotesize}; 
    \node[main] at (-1.5,0.1) (333) [black, text=black]{\footnotesize$P^2(\epsilon,\epsilon)$}; 
    \node[main] at (0.8,0.1) (s) [black, text=black]{\footnotesize$\StartSymbol(\pcTwo{[_1}\pcTwo{]_1})$}; 
    \node[main] at (5,1.3) (5) [black, text=black]{\footnotesize$\StartSymbol(\pcTwo{[_1}\pcOne{(_{100}}\pcTwo{]_1}\pcTwo{[_1}\pcTwo{[_1}\pcTwo{]_1}\pcOne{)_{100}}\pcTwo{]_1})$}; 
 
    \draw [->] (0) -- (1)  node [midway,above] {\footnotesize \ref{subs-rule-k-epsilon}};
    \draw [->] (00) -- (11)  node [midway,above] {\footnotesize \ref{rule:eps_t}};
    \draw [->] (2) -- (3)  node [midway,above] {\footnotesize \ref{rule:eps_t}};
    \draw [->] (22) -- (33)  node [midway,above] {\footnotesize \ref{rule:eps_t}};
    \draw [->] (222) -- (333)  node [midway,above] {\footnotesize \ref{subs-rule-k-epsilon}};
    \draw [->] (1) -- (15)  node [midway,above] {\footnotesize \ref{subs-rule-k-parentheses}};
    \draw [->] (11) -- (115)  node [midway,above] {\footnotesize \ref{rule:par_t}};
    \draw [->] (3) -- (4)  node [midway,above] {\footnotesize \ref{rule:par_t}};
    \draw [->] (33) -- (44)  node [midway,above] {\footnotesize \ref{rule:par_t}};

    \draw [->] (333) -- (s)  node [near end,above=-1pt] {\footnotesize \ref{subs-rule-k-interleave}};
    \draw [] (-0.3,-0.4) -- (-0.3,0.1);

    \draw [->] (4) -- (1.4,0.9) -- (1.4,0.5) --  (q2)  node [midway,above] {\footnotesize \ref{rule:tts}};
    \draw [] (1.4,0.4) -- (1.4,0.5);

    \draw [->] (15) -- (1.8,2.5) -- (1.8,2.1) --  (p2)  node [midway,above] {\footnotesize \label{subs-rule-k-enclose-parameter-p}};
    \draw [] (115) -- (1.8,1.7) -- (1.8,2.1);

    \draw [->] (2.5,1.8) -- (2.5,1.3) --  (5)  node [midway,above=-0.8pt] {\footnotesize \ref{subs-rule-k-interleave}};
    \draw [] (2.5,0.8) -- (2.5,1.3);

    

\end{tikzpicture}
\subcaption{\label{subfig:real_graph_derivation}
A derivation witnessing $\pcTwo{[_1}\pcOne{(_{100}}\pcTwo{]_1}\pcTwo{[_1}\pcTwo{[_1}\pcTwo{]_1}\pcOne{)_{100}}\pcTwo{]_1}\in \Language(\Grammar_2^+)$, capturing dataflow from $e$ to $k$ in the graph of \cref{subfig:real_graph}.
}
\end{subfigure}
\caption{
Interleaved Dyck reachability witnessed as membership in $\Language(\Grammar_2^{+})$.
}
\label{fig:example_derivation_1}
\end{figure}

Finally, we revisit the two dataflow graphs from \cref{fig:example_context_field_graph}.
\cref{fig:example_derivation_1} illustrates how $\Grammar_2^+$ discovers the corresponding dataflow facts.
\section{MCFL Reachability}\label{SEC:MCFG_REACHABILITY}

Having established the concepts of MCFL reachability (\cref{SEC:PRELIMINARIES}) and its modeling power in static analyses (\cref{SEC:MODELING}), here we address the core algorithmic question, i.e., given a graph $G$ and an $\MCFG$ $\Grammar$, compute $\Language(\Grammar)$ reachability on $G$.
We prove the following theorem.

\thmupperbound*

We proceed towards \cref{thm:upper_bound} as follows.
In \cref{SUBSEC:NORMAL_FORM}, we develop a normal form for $\MCFG$s that allows for an algorithm that is both simpler, and of smaller complexity, compared to handling arbitrary $\MCFG$s.
Then, in \cref{SUBSEC:MCFG_ALGORITHM} we present the main algorithm.
In \cref{SUBSEC:EXAMPLE} we illustrate the algorithm on a small but non-trivial example.
Finally, in \cref{SUBSEC:CORRECTNESS_COMPLEXITY} we establish the correctness and complexity of the algorithm, thereby concluding \cref{thm:upper_bound}.

\subsection{A Normal Form for MCFGs}\label{SUBSEC:NORMAL_FORM}

It is natural to restrict production rules as to make algorithms simpler to design and analyze. 
For CFGs, this is usually done through the Chomsky normal form.
For MCFGs, there exists a normal form~\cite{Seki1991}, which helps solve the MCFL membership problem efficiently.
Unfortunately, that normal form is insufficient for MCFL reachability, in the sense that it would make the natural algorithm for the problem have complexity higher than necessary.
Here we develop a normal form that is suitable for this task.

\Paragraph{Normal form.}
We say that a $\drMCFG{d}{r}$ $\Grammar=(\NonTerminals, \Terminals, \Rules, \StartSymbol)$ is in \emph{normal form} if it is non-deleting and non-permuting (see \cref{SEC:PRELIMINARIES}), and each of its production rules is one of the following types.
\begin{compactdesc}
\item[\emph{Type~1:}] $A(a)$, for some $a\in \{\Terminals\cup \epsilon\}$.
\item[\emph{Type~2:}] $A(x_1,\dots,x_{i-1},ax_i,x_{i+1},\dots,x_k)\gets B(x_1,\dots,x_k)$, for some $a\in \Terminals$.
\item[\emph{Type~3:}] $A(x_1,\dots,x_{i-1},x_ia,x_{i+1},\dots,x_k)\gets B(x_1,\dots,x_k)$, for some $a\in \Terminals$.
\item[\emph{Type~4:}] $A(x_1,\dots, x_{i-1}, a, x_{i},\dots, x_{k})\gets B(x_1,\dots,x_k)$, for some $a\in \{\Terminals \cup \epsilon\}$.
\item[\emph{Type~5:}]
$A_0(s_1,\dots, s_{k_0}) \gets A_1(x_1^1,\dots, x_{k_1}^1) \cdots A_{\ell}(x_1^{\ell},\dots, x_{k_{\ell}}^{\ell})$, each $s_i\in \{ x_j^l \,|\, l\in[\ell],\!j\in [k_l]\}^*$.
\end{compactdesc}

From an operational perspective, the normal form achieves that $\Grammar$ derives a single terminal of the produced string, using rules that have at most one nonterminal on the right-hand side at each rule (Type~1-4), 
while any rule with more than one nonterminals on the right-hand side only merges the underlying substrings (Type~5).
We also note that substrings are allowed to be initialized as $\epsilon$ (Type~1 and Type~4). 
Although this freedom is redundant when parsing strings, it is helpful for solving language reachability, in the sense that various graph edges might be labeled with $\epsilon$, and simply contracting them is not sound (in contrast, when parsing a string, we ignore $\epsilon$ substrings).

The next lemma establishes that normal-form MCFGs do not lose in expressive power.

\begin{restatable}{lemma}{lemnormalform}\label{lem:normal_form}
Given a $\drMCFG{d}{r}$ $\Grammar$ with $r\geq 1$, we can construct in $O(\Poly(|\Grammar|))$ time a 
$\drMCFG{d}{r}$ $\Grammar'$ in normal form such that $\Language(\Grammar)=\Language(\Grammar')$.
\end{restatable}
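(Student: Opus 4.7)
The plan is to proceed in two phases. First, invoke the cited Kracht (2003) result to assume WLOG that $\Grammar$ is already non-deleting and non-permuting at no cost in dimension or rank. Second, decompose each rule into a sequence of Type~1--5 rules by introducing fresh intermediate nonterminals. For a basic rule $A(w_1,\ldots,w_{k_0})$ with each $w_i\in\Terminals^*$, I would build $A$ via a chain of fresh nonterminals $A^{(0)},A^{(1)},\ldots,A^{(m)}=A$, starting from $A^{(0)}(\epsilon)$ via Type~1, then iteratively extending existing parameters (Type~2/3, one terminal at a time) and introducing new parameters (Type~4 with $a=\epsilon$, then filled via Type~2/3).

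For a production rule $A_0(s_1,\ldots,s_{k_0}) \gets A_1(\vec{x}^{\,1}), \ldots, A_\ell(\vec{x}^{\,\ell})$, the strategy is to push all terminals occurring in the $s_i$'s \emph{inward} to the RHS nonterminals so that only one pure Type~5 rule performs the final variable concatenation. For each $A_i$, identify the terminals of $s_1,\ldots,s_{k_0}$ that are immediately adjacent (within the same parameter) to some variable of $A_i$, and attach them via a chain of Type~2/3 rules, yielding a fresh $A_i'$; terminals sitting between two variables from distinct nonterminals are assigned to one side by a fixed convention (say, to the left neighbour). This reduces the rule to a variable-only instance $B(t_1,\ldots,t_{k'_0}) \gets A_1'(\vec{x}^{\,1}),\ldots,A_\ell'(\vec{x}^{\,\ell})$ (Type~5), where $k'_0 \leq k_0$ counts the parameters of $A_0$ containing at least one variable. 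Any remaining terminal-only parameter of $A_0$ is then inserted at its correct position using one Type~4 step (with $a=\epsilon$) followed by Type~2/3 steps to fill in its terminals, eventually producing a fresh nonterminal equivalent to $A_0$.

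To wrap up, I would verify: (i)~dimension $d$ is preserved, since every new nonterminal inherits its arity from some $k_i\leq d$ or from an arity $\leq k_0\leq d$ built up by Type~4 insertions; (ii)~rank $r$ is preserved, since the only multi-nonterminal rule in the decomposition is the single Type~5 rule, still of rank $\ell\leq r$; (iii)~non-deleting and non-permuting are preserved, since no variable is ever dropped or reordered along the chains; (iv)~$\Language(\Grammar)=\Language(\Grammar')$ by forward simulation of each original rule via its chain of new rules, with the converse by structural induction on derivations in $\Grammar'$, exploiting that each fresh intermediate nonterminal admits a unique chain of applications leading back to an original rule. Polynomial size is immediate since each original rule spawns $O(|\text{rule}|)$ new rules and nonterminals. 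The main obstacle will be the bookkeeping in the production-rule decomposition: one needs a uniform treatment of all edge cases (empty parameters, purely terminal parameters, several terminals bracketing a single variable, terminals between variables from distinct nonterminals) while simultaneously keeping intermediate arities within $d$, ensuring the single multi-nonterminal rule stays within rank $r$, and maintaining the non-deleting/non-permuting invariants throughout.
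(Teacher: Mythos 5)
Your proposal is correct and takes essentially the same approach as the paper's proof: both peel terminals one at a time into chains of fresh intermediate nonterminals via Type~2--4 rules, push terminals of multi-nonterminal rules into the adjacent RHS nonterminal so that a single variable-only Type~5 rule performs the concatenation, handle terminal-only (or empty) parameters via Type~4 insertions, and check that dimension, rank, the non-deleting/non-permuting properties, and polynomial size are preserved. The paper merely organizes this as seven exhaustive rewriting passes over the rule set rather than your per-rule compilation, a presentational rather than substantive difference.
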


The transformation steps for obtaining $\Grammar'$ from $\Grammar$ are presented in the proof of \cref{lem:normal_form} in \cref{SEC:APP_MCFG_REACHABILITY}.
\cref{tab:normal_form_example} sketches the transformation process in a small example.

\begin{table}
\setlength\tabcolsep{5mm}
\caption{\label{tab:normal_form_example}
The $\drMCFG{2}{2}$ $\Grammar$ for $\Language=\{w_1w_2\#w_2w_1\mid w_1,w_2\in\{0,1\}^*\}$ from \cref{SEC:PRELIMINARIES} (left) and the corresponding normal-form $\Grammar'$ (right).
Each original rule is replaced by a set of new rules.
}
\centering
\begin{tabular}{cc}
\toprule
\textbf{Original Rule} & \textbf{New Rules}\\
\hline
    \multirow{2}{*}{$A(\epsilon,\epsilon)$} & $A_1(\epsilon)$\\
    & $A(x, \epsilon) \gets A_1(x)$\\
    \hline
    \multirow{2}{*}{$A(x_10,x_20)\gets A(x_1,x_2)$} & $A_2(x_1,x_20) \gets A(x_1, x_2)$\\
    &$A(x_10,x_2) \gets A_2(x_1, x_2)$\\
    \hline
    \multirow{2}{*}{$A(x_11,x_21)\gets A(x_1,x_2)$} & $A_3(x_1,x_21) \gets A(x_1, x_2)$\\
    &$A(x_11,x_2) \gets A_3(x_1, x_2)$\\
    \hline
    \multirow{2}{*}{$S(x_1y_1\#y_2x_2)\gets A(x_1,x_2), A(y_1,y_2)$} & $A_4(x_1\#,x_2) \gets A(x_1, x_2)$\\
    &$S(x_1y_1y_2x_2) \gets  A(x_1,x_2), A_4(y_1, y_2)$\\
\bottomrule
\end{tabular}
\end{table}
\subsection{The MCFL Reachability Algorithm}\label{SUBSEC:MCFG_ALGORITHM}

Given the normal form of \cref{SUBSEC:NORMAL_FORM}, we are now ready to solve MCFL reachability.

\begin{algorithm}[]
\small
\caption{
Algorithm for $\drMCFL{d}{r}$-reachability.
}
\label{algo:mcfl_reach}
\textbf{Input: } A $\drMCFG{d}{r}$ $\Grammar$ in normal form, a labeled graph $G=(V,E,\Label)$\\
\textbf{Output: } All node pairs $(u,v)$ such that $v$ is $\Language(\Grammar)$-reachable from $u$\\
\tcp{Initialization}
\lForEach{$v\in V$}{\label{line:epsilon_edges}
Insert $(v,v)$ with $\Label(v,v)=\epsilon$ in $E$} 
\ForEach{$(u,v)\in E$}{\label{line:algo_type1}
\lForEach{Type~1 rule $A(\Label(u,v))$}{
Insert $A[(u,v)]$ in $\Worklist$ and in $\Done$
}
}
\tcp{Fixpoint computation}
\While{$\Worklist$ is not empty}{\label{line:algo_main_loop}
Extract an element $B[(u_1, v_1),\dots, (u_k,v_k)]$ from $\Worklist$ \label{line:algo_extraction} \\
{
\ForEach{$i \in [k]$}{ \label{line:for_expand_epsilon}
    \ForEach{$(v_i,v)\in E$ with $\Label(v_i,v)=\epsilon$ and $B[(u_1,v_1),\dots, (u_i,v), (u_{i+1},v_{i+1}),\dots (u_k, v_k))]\not \in \Done$}{
        Insert $B[(u_1,v_1),\dots, (u_i,v), (u_{i+1},v_{i+1}),\dots (u_k, v_k))]$ in $\Worklist$ and $\Done$
    }
    \ForEach{$(u,u_i)\in E$ with $\Label(u,u_i)=\epsilon$ and $B[(u_1,v_1),\dots, (u,v_i), (u_{i+1},v_{i+1}),\dots (u_k, v_k))]\not \in \Done$}{
        Insert $B[(u_1,v_1),\dots, (u,v_i), (u_{i+1},v_{i+1}),\dots (u_k, v_k))]$ in $\Worklist$ and $\Done$
    }
}
}
\ForEach{Type~2 rule $A(x_1,\dots,x_{i-1},sx_i,x_{i+1},\dots,x_k)\gets B(x_1,\dots,x_k)$}{\label{line:algo_type2}
\ForEach{$(u,u_i)\in E$ with $\Label(u,u_i)=s$ and $ A[(u_1,v_1),\dots, (u,v_i), (u_{i+1},v_{i+1}),\dots (u_k, v_k))]\not \in \Done$}{
Insert $A[(u_1,v_1),\dots, (u,v_i), (u_{i+1},v_{i+1}),\dots (u_k, v_k))]$ in $\Worklist$ and $\Done$
}
}
\ForEach{Type~3 rule $A(x_1,\dots,x_{i-1},x_is,x_{i+1},\dots,x_k)\gets B(x_1,\dots,x_k)$}{\label{line:algo_type3}
\ForEach{$(v_i,v)\in E$ with $\Label(v_i,v)=s$ and $A[(u_1,v_1),\dots, (v_i,v), (u_{i+1},v_{i+1}),\dots (u_k, v_k))]\not \in \Done$}{
Insert $A[(u_1,v_1),\dots, (v_i,v), (u_{i+1},v_{i+1}),\dots (u_k, v_k))]$ in $\Worklist$ and $\Done$
}
}
\ForEach{Type~4 rule $A(x_1,\dots, x_{i-1}, s, x_{i},\dots, x_{k})\gets B(x_1,\dots,x_k)$}{\label{line:algo_type4}
\ForEach{$(u,v)\in E$ with $\Label(u,v)=s$ and $A[(u_1,v_1),\dots, (u,v), (u_{i},v_{i}),\dots (u_k, v_k))]\not \in \Done$}{
Insert $A[(u_1,v_1),\dots, (u,v), (u_{i},v_{i}),\dots (u_k, v_k))]$ in $\Worklist$ and $\Done$
}
}
\ForEach{Type~5 rule $A_0(s_1,\dots, s_{k_0}) \gets A_1(x_1^1,\dots, x_{k_1}^1) \cdots A_{\ell}(x_1^{\ell},\dots, x_{k_{\ell}}^{\ell})$ with $B=A_l$, for $l\in[\ell]$}{\label{line:algo_type5}
\ForEach{$\{A_{j}[(u^j_1, v^j_1),\dots, (u^j_{k_j}, v^j_{k_j})]\in \Done$, $j\in [\ell]\setminus\{ l \}\}$ such that $s_1\cdots s_{k_0}\PathDerives \{ (u^i_{j_i}, v^i_{j_i}) \}_{i\in[\ell],j_i\in[k_{i}]}$, where $(u^l_{i_l}, v^l_{i_l})=(u_i,v_i)$}{\label{line:algo_type5_pathmatching}
Let $(u'_t,v'_t)_t=\Endpoints(s_t,\{ (u^i_{j_i}, v^i_{j_i}) \}_{i\in[\ell],j_i\in[k_{i}]})$ for each $t\in [k_0]$\\
\lIf{$A_0[(u'_1,v'_1),\dots, (u'_{k_0},v'_{k_0})]\not \in \Done$}{
Insert $A_0[(u'_1,v'_1),\dots, (u'_{k_0},v'_{k_0})]$ in $\Worklist$ and $\Done$
}
}
}
}
{\Return{$\{ (u,v)\colon \StartSymbol[(u,v)]\in \Done \}$}\label{line:algo_return}
}
\end{algorithm}

\Paragraph{Algorithm.}
The algorithm for $\drMCFL{d}{r}$-reachability is saturation-based, using a worklist $\Worklist$ that contains elements of the form $A[(u_1,v_1),\dots, (u_k,v_k)]$, where $k$ is the arity of the nonterminal $A$.
The algorithm maintains the invariant that such an element is inserted in $\Worklist$ iff there exist paths $P_i\colon u_i\Path v_i$ in $G$ such that $A\Derives(\Label(P_1),\dots, \Label(P_k))$.
Initially, $\Worklist$ is populated with all the basic (Type~1) rules of $\Grammar$ that are witnessed by the edges of $G$.
Then, while $\Worklist$ is not empty, an element $B[(u_1,v_1),\dots, (u_k,v_k)]$ is extracted.
The algorithm iterates over all production rules of $\Grammar$ that have $B$ on the right hand side, and tries to produce the left hand side of the rule.
For Type~2 and Type~3 rules, this is achieved by expanding one of the paths $P_i$ by a single node.
For Type~4 rules, this is achieved by choosing an independent edge of $G$ in order to start a new path.
Finally, Type~5 rules perform string concatenation, which is achieved by concatenating  paths that share endpoints according to $s_1\cdots s_k$.
See \cref{algo:mcfl_reach} for the formal description.

\Paragraph{Type-5 rules and path concatenation.}
The most complex production rules to handle are those of Type~5, which concern path concatenation.
Here we set up some convenient notation that is used in the algorithm.
Consider a set of pairs of nodes $X=\{ (u^i_{j_i}, v^i_{j_i})\mid {i\in [\ell], j_i\in [k_i]} \}$ for some positive integers $\ell$, $k_1,\dots, k_{\ell}$.
Given a string $s=x^{l_1}_{t_1}\cdots x^{l_h}_{t_h}$ with $l_i\in [\ell]$ and $t_i\in [k_{l_i}]$ (here $x$ is a fixed character representing a variable name used in a production rule of the grammar, not a variable representing a string),
we write $s\PathDerives X$ to denote that, for every $i\in[h-1]$ we have $v^{l_{i+1}}_{t_{i+1}} = u^{l_{i}}_{t_{i}}$.
If $s\PathDerives X$, we let $\Endpoints(s,X)=(u^{l_1}_{j_{t_1}}, v^{l_h}_{j_{t_h}})$.
Given a sequence $s_1,\dots, s_k$, we write $s_1,\dots, s_k\PathDerives X$ to denote that $s_i\PathDerives X$ for each $i\in[k]$.

\SubParagraph{Example.} We illustrate the above notation on an example, and explain how it is used in the algorithm. 
Consider a Type-5 rule as follows.

{\centering
  $ \displaystyle
    \begin{aligned} 
       A_0(s_1, s_2) &\gets A_1(x_1^1,x_2^1, x_3^1), A_2(x_1^2,x_2^2)\\
    \end{aligned}
  $ 
\par}
where $s_1=x_1^1x_2^1$ and $s_2=x_1^2x_2^2x_3^1$. 
Assume that the algorithm has already derived the predicates $A_1[(5,2),(2,3),(7,2)]$ and $A_2[(1,4),(4,7)]$,
and now it seeks to apply the above rule to derive a predicate $A_0[(u_1,v_1),(u_2,v_2)]$.
Since $s_1$ concatenates the strings represented by $x_1^1$ and $x_2^1$, the algorithm has to verify that this is indeed possible:~the end node of the subpath represented by $x_1^1$ is the start node of the subpath represented by $x_2^1$.
Similarly for the concatenation of the three subpaths in $s_2$.
Using our notation above, we have $X=\{(5,2),(2,3),(7,2),(1,4),(4,7)\}$ where the $i$-th pair of nodes marks the endpoints of the subpath represented by the corresponding variable $x_{t_i}^{l_i}$.
The notation $s\PathDerives X$ means that the subpaths represented in $X$ can indeed be joined to form $s$, because their endpoints match. 
In our example, $s_1=x_1^1x_2^1$, so the algorithm must check if the sequence of pairs of nodes $\{(5,2),(2,3)\}$ can be joined. 
This represents checking whether two paths $P_1\colon 5\Path 2$ and $P_2\colon 2\Path 3$ can be taken in sequence. 
This is indeed the case, and the algorithm forms a path $P_3\colon 5\Path 3$. 
The pair $(5,3)$ is thus in $\Endpoints(s_1,X)$.
Similarly, $s_2=x_1^2x_2^2x_3^1$, so we have the sequence $\{(1,4),(4,7),(7,2)\}$ from $X$, with $\Endpoints(s_2,X)=(1,2)$. 
This sequence is also such that $s_2\PathDerives X$, thus $s_1,s_2\PathDerives X$. 
Having verified this, the algorithm derives the new predicate $A_0[\Endpoints(s_1,X),\Endpoints(s_2,X)]=A_0[(5,3),(1,2)]$.

\subsection{Example}\label{SUBSEC:EXAMPLE}

Consider the following $\drMCFG{2}{2}$ $\Grammar'$, in normal form, for the language $\Language=\{w_1w_2\#w_2w_1\mid w_1,w_2\in\{0,1\}^*\}$, as already presented in \cref{tab:normal_form_example}.

\begin{minipage}{.35\linewidth}
  \centering
    \begin{align}
    &A_1(\epsilon)  \label{rule:ex_a1} \\
    &A(x, \epsilon) \gets A_1(x) \label{rule:ex_aa1}\\
    &A_2(x_1,x_20) \gets A(x_1, x_2) \label{rule:ex_a2a}\\
    &A(x_10,x_2) \gets A_2(x_1, x_2) \label{rule:ex_aa2}
    \end{align}
\end{minipage}
\hspace{1.5cm}
\begin{minipage}{.45\linewidth}
  \centering
    \begin{align}
&A_3(x_1,x_21) \gets A(x_1, x_2)  \label{rule:ex_a3a} \\
&A(x_11,x_2) \gets A_3(x_1, x_2) \label{rule:ex_aa3}\\
&A_4(x_1\#,x_2) \gets A(x_1, x_2)  \label{rule:ex_a4a} \\
&S(x_1y_1y_2x_2) \gets  A(x_1,x_2), A_4(y_1, y_2) \label{rule:ex_s}
    \end{align}
\end{minipage}

\begin{figure}[!h] 
\begin{subfigure}[t]{0.45\textwidth}
\centering
\begin{tikzpicture}[ line width=1pt, main/.style = {draw, rectangle,inner sep=4pt},scale=0.8] 
    \node[main] (a) at (0,0){\footnotesize a}; 
    \node[main] (b) at (1.5,0){\footnotesize b}; 
    \node[main] (c) at (1.5,1.5){\footnotesize c}; 
    \node[main] (d) at (3,1.5){\footnotesize d}; 
    \node[main] (e) at (4.5,1.5){\footnotesize e}; 

    \node[main,white] () at (4.5,-1){}; 

    \draw [->] (a) -- (b) node [midway, below=2pt] {\footnotesize $\pcOne{0}$}; 

    \draw[->] (b) edge[bend right] node[midway, right=2pt] {\footnotesize $\pcOne{1}$} (c);
    \draw [->] (c) edge[bend right] node [midway, left=2pt] {\footnotesize $\pcOne{\#}$} (b);
    \draw [->] (c) -- (d) node [midway, above=2pt] {\footnotesize $\pcOne{\epsilon}$}; 
    \draw [->] (d) -- (e) node [midway, above=2pt] {\footnotesize $\pcOne{0}$}; 
\end{tikzpicture}
\subcaption{\label{subfig:algo_ex_graph} Graph $G$, used to exemplify \cref{algo:mcfl_reach}.
}
\end{subfigure}
\qquad
\begin{subfigure}[t]{.45\linewidth}
\begin{tikzpicture}[ line width=1pt, main/.style = {draw, rectangle,inner sep=4pt},scale=0.8]
 \node[main] (a) at (0,0){\footnotesize a}; 
    \node[main] (b) at (1.8,0){\footnotesize b}; 
    \node[main] (c) at (1.8,1.8){\footnotesize c}; 
    \node[main] (d) at (3.6,1.8){\footnotesize d}; 
    \node[main] (e) at (5.4,1.8){\footnotesize e}; 

    \draw [->] (a) -- (b) node [midway, below=2pt] {\footnotesize $(1,\pcOne{0}$)}; 
    \draw [->] (b) edge[bend right] node [midway, right=0pt] {\footnotesize $(2,\pcOne{1})$}(c); 
    \draw [->] (c) edge[bend right] node [midway, left=0pt] {\footnotesize $(3,\pcOne{\#})$}(b);
    \draw [->] (c) -- (d) node [midway, above=2pt] {\footnotesize $(4,\pcOne{\epsilon})$}; 
    \draw [->] (d) -- (e) node [midway, above=2pt] {\footnotesize $(5,\pcOne{0})$}; 
    \path (a) edge [loop left] node {\footnotesize $(6,\pcOne{\epsilon})$} (a);
    \path (b) edge [loop right] node {\footnotesize $(7,\pcOne{\epsilon})$} (b);
    \path (c) edge [loop above] node {\footnotesize $(8,\pcOne{\epsilon})$} (c);
    \path (d) edge [loop above] node {\footnotesize $(9,\pcOne{\epsilon})$} (d);
    \path (e) edge [loop above] node {\footnotesize $(10,\pcOne{\epsilon})$} (e); 
            
\end{tikzpicture}
\subcaption{\label{subfig:algo_ex_graph_more} Graph $G$, with labeled edges and added self loops with label $\epsilon$ in every vertex.
}
\end{subfigure}
\caption{
Graphs used to exemplify the execution of \cref{algo:mcfl_reach}.
}
\label{fig:example_algo}
\end{figure}

We execute \cref{algo:mcfl_reach} on the graph shown in \cref{subfig:algo_ex_graph} using $\Grammar'$,
and illustrate how the algorithm determines the reachability of node $e$ from node $a$. 
Since many elements are added to the worklist $\Worklist$ during execution, 
we only outline the relevant actions, while omitting those that do not contribute to the sought reachability.
The relevant steps of the execution are as follows.

The algorithm starts with \cref{line:epsilon_edges}, adding self-loops labeled $\epsilon$ to each node, so the graph from \cref{subfig:algo_ex_graph} becomes the graph shown in \cref{subfig:algo_ex_graph_more}. 
Edges have also been labeled with numbers for easier reference.
Then the algorithm proceeds to \cref{line:algo_type1} for processing Type~1 rules.
It processes Rule~\ref{rule:ex_a1} and, from edges 6 and 7, respectively, adds elements $A_1[(a,a)]$ and $A_1[(b,b)]$ to $\Worklist$.
Now the algorithm enters the main loop in \cref{line:algo_main_loop}, and repeatedly
extracts elements from $\Worklist$ in \cref{line:algo_extraction}, adding other elements to it, until we reach $\StartSymbol[(a,e)]$, signifying that $a$ reaches $e$. 
In detail, we show which elements are added when each element $B$ is extracted from $\Worklist$ on \cref{line:algo_extraction}.

\begin{center}
\begin{tabular}{c|p{10cm}}
    \textbf{Element} $B$ & \textbf{Relevant actions}\\
    \hline
    $A_1[(a,a)]$ & In \cref{line:algo_type4} for Rule~\ref{rule:ex_aa1} and edge 8, adds $A[(a,a),(c,c)]$ to $W$.\\
    \hline
    $A_1[(b,b)]$ &  In \cref{line:algo_type4} for Rule~\ref{rule:ex_aa1} and edge 7, adds $A[(b,b),(b,b)]$ to $W$.\\
    \hline
    $A[(a,a),(c,c)]$ & In \cref{line:for_expand_epsilon}, expanding the derivation along the $\epsilon$-edge 4, adds $A[(a,a),(c,d)]$ to $\Worklist$.\\
    \hline
    $A[(a,a),(c,d)]$ & In \cref{line:algo_type3} from Rule~\ref{rule:ex_a2a} and edge 5, adds $A_2[(a,a),(c,e)]$ to $\Worklist$.\\
    \hline
    $A_2[(a,a),(c,e)]$ & In \cref{line:algo_type3} from Rule~\ref{rule:ex_aa2} and edge 1, adds $A[(a,b),(c,e)]$ to $\Worklist$.\\
    \hline
    $A[(b,b),(b,b)]$ & In \cref{line:algo_type3} from Rule~\ref{rule:ex_a3a} and edge 2, adds $A_3[(b,b),(b,c)]$ to $\Worklist$.\\
    \hline
    $A_3[(b,b),(b,c)]$ & In \cref{line:algo_type3} from Rule~\ref{rule:ex_aa3} and edge 2, adds $A[(b,c),(b,c)]$ to $\Worklist$.\\
    \hline
    $A[(b,c),(b,c)]$ & In \cref{line:algo_type3} from Rule~\ref{rule:ex_a4a} and edge 3, adds $A_4[(b,b),(b,c)]$ to $\Worklist$.
    
\end{tabular}
\end{center}

Finally, the element $A[(a,b),(c,e)]$ is extracted from $\Worklist$. 
In \cref{line:algo_type5} for Rule~\ref{rule:ex_s}, the algorithm considers the derivation $A_4[(b,b),(b,c)]$. 
It checks whether the elements of $A[(a,b),(c,e)]$ and $A_4[(b,b),(b,c)]$ can be concatenated in the order defined in Rule~\ref{rule:ex_s}. 
This would form the sequence of pairs of nodes $\{(a,b),(b,b),(b,c),(c,e)\}$. 
Notice that for every two adjacent pairs $(u_1,v_1), (u_2,v_2)$, we have $v_1=u_2$. 
This means that the paths represented by these pairs can be taken in sequence, thus the derivations follow the properties shown in the loop of \cref{line:algo_type5} and $\StartSymbol[(a,e)]$ can be inserted to $W$.
This signifies that $a$ reaches $e$, as per \cref{line:algo_return}.
\subsection{Correctness and Complexity}\label{SUBSEC:CORRECTNESS_COMPLEXITY}

In the following we discuss the completeness and soundness of \cref{algo:mcfl_reach}, and conclude \cref{thm:upper_bound} with the complexity analysis.
We refer to \cref{SEC:APP_MCFG_REACHABILITY} for the formal proofs.

First, \cref{lem:soundness} states the invariant for soundness.
It follows by induction on the iterations of the main loop, proving that the algorithm correctly derives new predicates from existing ones.

\begin{restatable}{lemma}{lemsoundness}\label{lem:soundness}
For every element $A[(u_1, v_1),\dots, (u_k,v_k)]$ inserted in $\Worklist$,
there exist paths $\{P_i\colon u_i\Path v_i\}_{i\in [k]}$ such that $A\Derives (\Label(P_1), \dots, \Label(P_k))$.
\end{restatable}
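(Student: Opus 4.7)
The plan is to prove the invariant by strong induction on the order in which elements are inserted into $\Worklist$. For the base case, consider the initialization phase. For each $v \in V$ we add a self-loop with label $\epsilon$ (line~\ref{line:epsilon_edges}), and then for each edge $(u,v) \in E$ and each Type~1 rule $A(\Label(u,v))$, we insert $A[(u,v)]$ (line~\ref{line:algo_type1}). In this case, the single-edge path $P\colon u \to v$ has $\Label(P)=\Label(u,v)$, and the Type~1 rule directly gives $A \Derives \Label(P)$ as required.

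For the inductive step, suppose the invariant holds for every element already in $\Done$, and that the algorithm extracts $B[(u_1, v_1),\dots, (u_k,v_k)]$ at line~\ref{line:algo_extraction}. By the inductive hypothesis, there exist paths $\{P_i \colon u_i \Path v_i\}_{i\in[k]}$ with $B \Derives (\Label(P_1),\dots,\Label(P_k))$. I would then proceed by case analysis on which block of the main loop inserts a new element. For the $\epsilon$-expansion block (line~\ref{line:for_expand_epsilon}), extending $P_i$ by an $\epsilon$-labeled edge gives a new path with the same label, so the invariant transfers unchanged. For a Type~2 rule extended by an edge $(u,u_i)$ with $\Label(u,u_i)=s$, the new path $P_i'$ obtained by prepending this edge to $P_i$ satisfies $\Label(P_i') = s \cdot \Label(P_i)$, and one application of the grammar rule yields $A \Derives (\Label(P_1),\dots,\Label(P_i'),\dots,\Label(P_k))$. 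Type~3 is symmetric (appending at the end), and Type~4 is handled by taking the singleton path consisting of the chosen edge labeled $s$, applying the rule to derive $A(\dots,s,\dots)$ from $B$.

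The principal obstacle, and the step needing careful bookkeeping, is the Type~5 case at line~\ref{line:algo_type5}. Here the new element $A_0[(u'_1,v'_1),\dots,(u'_{k_0},v'_{k_0})]$ is formed from $\ell$ existing predicates $A_j[(u^j_1,v^j_1),\dots,(u^j_{k_j},v^j_{k_j})]$ (one of them being $B$), all in $\Done$. By the inductive hypothesis, for every $j\in[\ell]$ there exist paths $\{P^j_{i}\colon u^j_i \Path v^j_i\}$ with $A_j \Derives (\Label(P^j_1),\dots,\Label(P^j_{k_j}))$. The crux is showing that for each $t \in [k_0]$, the sequence of variables in $s_t$, combined with the condition $s_1\cdots s_{k_0} \PathDerives X$ verified at line~\ref{line:algo_type5_pathmatching}, lets us concatenate the corresponding subpaths into a single path $P'_t$ with $u'_t = $ first node and $v'_t = $ last node (precisely the pair returned by $\Endpoints(s_t,X)$), and with $\Label(P'_t) = s_t[x^i_{j_i} \ReplaceBy \Label(P^i_{j_i})]$. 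Once this is established, the semantics of the Type~5 production rule (see \cref{eq:production_rule}) immediately gives $A_0 \Derives (\Label(P'_1),\dots,\Label(P'_{k_0}))$, completing the induction.

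The main difficulty is notational rather than conceptual: the $\PathDerives$ predicate must be unpacked into a precise statement that consecutive variable occurrences in each $s_t$ correspond to subpaths whose shared endpoints match, so that concatenation is well-defined. I would isolate this as a small combinatorial observation about $\Endpoints$ and $\PathDerives$, then feed it into the derivation step to close the Type~5 case cleanly.
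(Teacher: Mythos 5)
Your proof is correct and follows exactly the route the paper takes: induction over insertions into $\Worklist$, with a case analysis matching each block of the main loop to one application of the corresponding grammar rule. The paper's own proof merely states this as a ``straightforward induction'' and omits the details, so your write-up (including the careful unpacking of $\PathDerives$ and $\Endpoints$ in the Type~5 case, which is valid because normal-form Type~5 rules contain no terminals in the $s_t$) is simply a fleshed-out version of the same argument.
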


Second, \cref{lem:completeness} establishes completeness.
The proof follows by induction, which generalizes the analogous proof for the standard algorithm for CFL reachability to arbitrary dimension and rank.
Intuitively, now a predicate $A$ might summarize multiple independent paths that are eventually merged using Type-5 rules (\cref{line:algo_type5}), as opposed to a single path which is the case for CFL reachability.

\begin{restatable}{lemma}{lemcompleteness}\label{lem:completeness}
For every non-terminal $A$ of arity $k$ and every set of paths $\{P_i\colon u_i\Path v_i\}_{i\in[k]}$ such that $A\Derives(\Label(P_1), \dots, \Label(P_k))$,
the element $A[(u_1, v_1),\dots, (u_k,v_k)]$ is inserted in $\Worklist$.
\end{restatable}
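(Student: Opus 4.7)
The proof will proceed by strong induction on the height $h$ of the derivation tree $T$ witnessing $A\Derives (\Label(P_1),\dots,\Label(P_k))$. Since \cref{algo:mcfl_reach} mirrors the structure of the normal form, the induction has one case per rule type at the root of $T$. Throughout, the central technical device will be the interplay between (i) the initialization step that plants self-loops $(v,v)$ with label $\epsilon$ and seeds $\Worklist$ from Type-1 rules and (ii) the $\epsilon$-expansion loop (\cref{line:for_expand_epsilon} of \cref{algo:mcfl_reach}), which we will use to freely extend any coordinate of an already-derived predicate along $\epsilon$-edges in either direction. A simple sub-lemma established first by induction on the number of $\epsilon$-edges will state that whenever $A[(u_1,v_1),\dots,(u_k,v_k)]$ is inserted in $\Worklist$ and some $P_i'\colon u_i' \Path u_i$ or $P_i'\colon v_i\Path v_i'$ is $\epsilon$-labeled, then $A[\dots,(u_i',v_i),\dots]$ (respectively, $A[\dots,(u_i,v_i'),\dots]$) is also inserted.

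\textbf{Base case and unary rules.} For the base case (height zero), the root is a Type-1 rule $A(a)$ with $k=1$ and $\Label(P_1)=a$. If $a\ne\epsilon$, the single non-$\epsilon$ edge $e^*=(u',v')$ of $P_1$ seeds $A[(u',v')]$ during initialization; the $\epsilon$-closure sub-lemma applied to the $\epsilon$-prefix $u_1\Path u'$ and $\epsilon$-suffix $v'\Path v_1$ yields $A[(u_1,v_1)]$. If $a=\epsilon$, the self-loop at $u_1$ gives $A[(u_1,u_1)]$, and forward $\epsilon$-closure reaches $A[(u_1,v_1)]$. For a Type-2 root $A(x_1,\dots,sx_{i_0},\dots,x_k)\gets B(x_1,\dots,x_k)$, I will decompose $P_{i_0}$ at its first non-$\epsilon$ edge (necessarily labeled $s$) into $u_{i_0}\Path u'\LTo{s} u''\Path v_{i_0}$, apply the inductive hypothesis to the child derivation $B\Derives(\Label(P_1),\dots,\Label(P_{i_0}^{\mathit{suf}}),\dots,\Label(P_k))$ to obtain $B[(u_1,v_1),\dots,(u'',v_{i_0}),\dots,(u_k,v_k)]$ in $\Worklist$, invoke \cref{line:algo_type2} with edge $(u',u'')$ to obtain $A[\dots,(u',v_{i_0}),\dots]$, and close backward with the sub-lemma to reach $A[\dots,(u_{i_0},v_{i_0}),\dots]$. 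Type-3 is symmetric. Type-4 is analogous but adds a new coordinate: I will choose the witnessing edge at \cref{line:algo_type4} to be the unique non-$\epsilon$ edge of $P_{i}$ when $a\ne\epsilon$, or a self-loop at $u_i$ when $a=\epsilon$, then close on both sides.

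\textbf{Type-5 case and main obstacle.} The main difficulty lies in the Type-5 rule, where the derivation tree branches into $\ell$ independent child derivations. Given the root rule $A_0(s_1,\dots,s_{k_0})\gets A_1(\vec x^1)\cdots A_{\ell}(\vec x^{\ell})$, I will use the structure of each $s_t$ together with the fact that $\Label(P_t^0)$ is the concatenation of $\Label(P_j^l)$'s dictated by $s_t$ to carve the $P_t^0$ into sub-paths whose endpoints match consistently; applying the inductive hypothesis to each child yields $A_j[(u_1^j,v_1^j),\dots,(u_{k_j}^j,v_{k_j}^j)]$ in $\Worklist$ for every $j\in[\ell]$. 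Every element of $\Worklist$ is eventually extracted, so consider the extraction of the \emph{last} such $A_{j^*}$; at that moment all other $A_j$-derivations lie in $\Done$. Executing the body at \cref{line:algo_type5} with $B=A_{j^*}$ will test $s_1,\dots,s_{k_0}\PathDerives X$ on the union $X$ of all children's endpoint pairs, and by our coherent decomposition this path-matching predicate holds, with $\Endpoints(s_t,X)=(u_t^0,v_t^0)$; thus $A_0[(u_1^0,v_1^0),\dots,(u_{k_0}^0,v_{k_0}^0)]$ is inserted. The subtle point I expect to require the most care is verifying that the coherence of the decomposition of $P_t^0$ into sub-paths (in particular, how $\epsilon$-padding is distributed across variable boundaries) is preserved throughout the inductive argument, so that the $\PathDerives$ check fires exactly when we need it to and returns the correct endpoints.
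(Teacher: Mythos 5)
Your proposal is correct and takes essentially the same approach as the paper's proof: induction on the structure of derivations, with your $\epsilon$-closure sub-lemma appearing there as the ``expands into'' relation maintained by the loop at \cref{line:for_expand_epsilon}, the same case analysis over the normal-form rule types, and for Type~5 the identical ``consider the extraction of the last child element'' argument. The only difference is presentational: where you explicitly split each path at its first non-$\epsilon$ edge and close backward afterwards, the paper reduces to minimal-form derivations and lets the expansion property absorb the $\epsilon$-padding.
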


The correctness now follows immediately from \cref{lem:soundness} and \cref{lem:completeness}.
Indeed, the two lemmas guarantee that for every two nodes $u$ and $v$,
 the element $\StartSymbol([(u,v)])$ is inserted in (and thus extracted from) $\Worklist$ iff there is a path $P\colon u\Path v$ such that $\StartSymbol\Derives \Label(P)$, and thus iff $v$ is $\Language(\Grammar)$-reachable from $u$.
 
We now turn our attention to the complexity of \cref{algo:mcfl_reach}.
Intuitively, the normal form of \cref{SUBSEC:NORMAL_FORM} has the following consequences for the runtime.

First, consider a non-terminal $B[(u_1, v_1),\dots, (u_k,v_k)]$ being extracted from the worklist.
Since the grammar $\Grammar$ has dimension $d$, we have $k\leq d$, meaning we have at most $n^{2d}$ such extractions for each non-terminal $B$.
The algorithm attempts to extend $B[(u_1, v_1),\dots, (u_k,v_k)]$ by a single terminal symbol according to a rule of Type~2-Type 4, by iterating over all nodes $u$ with an in-edge to or out-edge from $u_i$ or $v_i$.
This incurs an increase in running time by a factor of $O(\delta)$, where $\delta$ is the maximum degree of $G$, resulting in a contribution to the complexity of the order $O(\delta\cdot n^{2d})$.

Second, each rule of Type 5 relates at most $n^{d(r+1)}$ nodes.
To observe this, recall that, since the rank of $\Grammar$ is $r$, each Type 5 rule has $\leq r+1$ predicates (one on the left hand side and $\leq r$ on the right hand side).
Since the dimension of $\Grammar$ is $d$, each predicate relates $2d$ nodes (i.e., the predicate summarizes $d$ sub-paths, each marked by its two endpoints).
However, not all $2d(r+1)$ nodes are free, since sub-paths that are merging on the left-hand side must have matching endpoints.
This leaves only $d(r+1)$ nodes free, resulting in $n^{d(r+1)}$ possible choices for each given non-terminal symbol.
Formally we have the following lemma, which concludes the proof of \cref{thm:upper_bound}.

\begin{restatable}{lemma}{lemcomplexity}\label{lem:complexity}
Given a $\drMCFG{d}{r}$ $\Grammar$ and a graph $G$ of $n$ nodes and maximum degree $\delta$,
\cref{algo:mcfl_reach}  takes
\begin{compactenum}
\item \label{item:complexity_r1} $O(\Poly(|\Grammar|)\cdot \delta\cdot n^{2d})$ time, if $r=1$, and
\item \label{item:complexity_rlarge} $O(\Poly(|\Grammar|)\cdot n^{d(r+1)})$ time, if $r>1$.
\end{compactenum}
\end{restatable}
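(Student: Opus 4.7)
\textbf{Proof plan for Lemma~\ref{lem:complexity}.}
The plan is a standard worklist amortization: first bound the total number of elements that can ever be enqueued, then bound the work charged per enqueue (for Type 2--4 rules) and the total work done across the whole execution by Type 5 rules.

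First I would bound $|\Done|$. Each element in $\Done$ has the form $A[(u_1,v_1),\ldots,(u_k,v_k)]$ with $A \in \NonTerminals$ and $k \leq d$, so the total number of distinct such elements is at most $\sum_{A \in \NonTerminals} n^{2\,\mathrm{arity}(A)} \leq |\Grammar| \cdot n^{2d}$. Because the algorithm only inserts an element into $\Worklist$ when it is absent from $\Done$ (and simultaneously adds it to $\Done$), each element is processed at most once. Hence the main loop of \cref{line:algo_main_loop} iterates at most $O(|\Grammar|\cdot n^{2d})$ times.

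Next I would charge the work per extracted element. The $\epsilon$-closure loop at \cref{line:for_expand_epsilon} and the Type 2, Type 3, Type 4 rule loops each range over a single position $i\in[k]\leq[d]$ and iterate over edges incident to $u_i$ or $v_i$, contributing $O(|\Grammar|\cdot \delta)$ per extraction. Summed over all extractions this is $O(\Poly(|\Grammar|)\cdot \delta \cdot n^{2d})$.

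For Type 5 rules the analysis is more delicate and is where the rank $r$ enters. Fix a Type 5 rule $A_0(s_1,\dots,s_{k_0}) \gets A_1(\bar x^1)\cdots A_\ell(\bar x^\ell)$ with $\ell \leq r$. The key combinatorial observation is that the number of distinct tuples of node-pairs that can simultaneously satisfy $s_1\cdots s_{k_0}\PathDerives X$ is at most $n^{d(r+1)}$: there are $\sum_j k_j$ variables on the right, giving $2\sum_j k_j$ endpoint slots; the concatenation constraints within each $s_i$ impose $\sum_j k_j - k_0$ equalities, leaving $\sum_j k_j + k_0 \leq d\ell + k_0 \leq d(r+1)$ free slots. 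For $r=1$, $\ell=1$ and the extracted element already fixes all RHS endpoints, so each Type 5 application costs only $O(\Poly(|\Grammar|))$; summed over extractions this contributes $O(\Poly(|\Grammar|)\cdot n^{2d})$, which is dominated by the $\delta \cdot n^{2d}$ term. For $r>1$, I would charge globally: every joint configuration of RHS node-pairs that is ever inspected must have all $\ell$ of its component predicates present in $\Done$, and each such configuration is inspected at most once per choice of the ``most recently extracted'' component, i.e.\ at most $\ell \leq r$ times in total. Hence the overall work spent across \emph{all} executions of the Type 5 loop for this rule is $O(r \cdot n^{d(r+1)})$; summing over rules and observing that the Type 2--4 contribution $\delta \cdot n^{2d} \leq n^{2d+1} \leq n^{d(r+1)}$ when $r\geq 2$, the total becomes $O(\Poly(|\Grammar|)\cdot n^{d(r+1)})$, as required.

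The main obstacle is the amortization argument for Type 5 rules: one must resist a naive per-extraction bound (which would multiply the $n^{2d}$ factor from the outer loop by the $n^{d(r+1)-2}$ factor of free RHS endpoints after fixing the extracted predicate, losing a factor of $n^{d-2}$). The fix is the global charging scheme above, viewing the algorithm as enumerating each satisfying configuration $X$ at most $\ell$ times, independently of which component triggers the enumeration. Handling this cleanly requires the assumption that $\Done$-membership tests and updates to $\Worklist$ are $O(1)$, which is enabled by hashing predicates on their node-pair tuples, and a careful argument that the normal form of \cref{SUBSEC:NORMAL_FORM} is what lets us count free endpoints as $d(r+1)$ (rather than a larger expression that would arise if an arbitrary number of terminals could appear between variables in $s_i$).
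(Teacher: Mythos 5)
Your overall strategy coincides with the paper's proof: bound the number of distinct elements ever inserted in $\Worklist$ by $O(|\Grammar|\cdot n^{2d})$ (each element enters at most once thanks to the $\Done$ check), charge $O(\Poly(|\Grammar|)\cdot\delta)$ per extraction for the incident-edge rules, and count Type~5 work \emph{globally} over satisfying configurations rather than per extraction. Your free-slot count for Type~5 ($2\sum_j k_j$ endpoint slots minus $\sum_j k_j - k_0$ concatenation equalities, leaving $\sum_j k_j + k_0 \leq d(r+1)$) is exactly the paper's computation $n^{\sum_i(|s_i|+1)} = n^{k_0+\sum_i|s_i|}\leq n^{d(r+1)}$, and your explicit multiplicity factor $\ell\leq r$ per configuration is in fact slightly more careful than the paper's ``executed at most once'' phrasing. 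The $r=1$ specialization and the dominance check $\delta\cdot n^{2d}\leq n^{2d+1}\leq n^{d(r+1)}$ for $r\geq 2$ also match.

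There is, however, one genuine flaw: your accounting for Type~4 rules. You claim they ``iterate over edges incident to $u_i$ or $v_i$, contributing $O(|\Grammar|\cdot\delta)$ per extraction,'' but this misreads \cref{line:algo_type4}. A Type~4 rule starts a \emph{new, independent} path: the loop ranges over \emph{all} edges $(u,v)\in E$ with label $s$ anywhere in the graph (including, when $s=\epsilon$, the $n$ self-loops added at initialization), so the per-extraction cost is $O(m)$ --- which can be $\Theta(n\delta)$ or $\Theta(n^2)$ --- not $O(\delta)$. Multiplying this by your $n^{2d}$ bound on extractions would give $\delta\cdot n^{2d+1}$, which exceeds the bound claimed in \cref{item:complexity_r1}. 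The repair, which the paper makes and you are missing, is an arity observation: the RHS nonterminal of any Type~4 rule has arity $k\leq d-1$, since the rule increases the arity by one and the LHS arity is capped at $d$. Hence only $O(n^{2(d-1)})$ extractions can trigger Type~4 processing, and the total Type~4 work is $O(\Poly(|\Grammar|)\cdot n^{2(d-1)}\cdot n^2)=O(\Poly(|\Grammar|)\cdot n^{2d})$, which is dominated by the $\delta\cdot n^{2d}$ term from Types~2 and~3. With this correction, the remainder of your argument goes through as written.
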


\section{Lower Bounds}\label{SEC:LOWER_BOUNDS}

A natural follow-up question to \cref{thm:upper_bound} is whether the MCFL reachability and membership problems admit faster algorithms.
In this section we turn our attention to lower bounds, and prove \cref{thm:ov_hard} and \cref{thm:triangle_hard}.

\subsection{SETH Hardness of $\drMCFL{d}{1}$ Membership via Orthogonal Vectors}\label{subsec:seth_hardness}

We focus on the membership problem for $\drMCFGs{d}{1}$, and establish a conditional lower bound of $n^{2d}$ (wrt polynomial improvements) using the popular Orthogonal Vectors Hypothesis (OVH).
Conceptually, a $d$-dimensional MCFG parses $d$ substrings simultaneously. 
In each step, each substring can be extended in $2$ directions, to the left and to the right, each yielding $n$ choices for the substring to extend with. 
This contributes to the complexity a factor of $n^2$ per dimension, totaling $n^{2d}$. 
The following result makes this intuition formal by proving this cost is unavoidable in general.

\Paragraph{The orthogonal vectors problem.}
The $\NumVectorSets$-orthogonal vectors ($k$-OV) problem, for some integer $\NumVectorSets\geq 2$, is defined as follows.
Given $\NumVectorSets$ sets $X_1,\dots, X_{\NumVectorSets}\subseteq 2^{\{0,1\}^{\NumBits}}$ of Boolean $\NumBits$-dimensional vectors $X_i=\{ x^{i}_{j} \}_{j \in [\NumVectors]}$,
the task is to determine whether there exist $\{j_i\}_i$ such that the vectors $x^{i}_{j_i}\in X_i$ are orthogonal, i.e.,
\begin{align}
\sum_{\ell=1}^{\NumBits} \prod_{i=1}^k x^{i}_{j_i}[\ell]=0
\label{eq:orthogonality}
\end{align}
In other words, in every coordinate $\ell$, at least one of the $x^i_{j_i}$ has value $0$.
A simple algorithm to solve orthogonal vectors is to iterate over all $\NumVectors^{\NumVectorSets}$ tuples of vectors in the product $\bigtimes_iX_i$ and check whether each tuple is orthogonal, which yields a straightforward bound $O(\NumVectors^{\NumVectorSets}\cdot \NumBits)$. 
The corresponding hypothesis (OVH) states that, for any fixed $k$, the problem is not solvable in $O(\NumVectors^{\NumVectorSets-\epsilon}\cdot \Poly(\NumBits))$ time, for any fixed $\epsilon>0$~\cite{Williams19}.
It is also known that the Strong Exponential Time Hypothesis (SETH) implies the Orthogonal Vectors Hypothesis~\cite{Williams05}.
Here we prove the following conditional lower bound for $\drMCFL{d}{1}$ membership (and thus also reachability).

\thmovhard*

\cref{thm:ov_hard} relies on the observation that the orthogonality between $\NumVectorSets$ vectors (\cref{eq:orthogonality}), for $\NumVectorSets$ even, can be expressed as a $\drMCFG{\NumVectorSets/2}{1}$, i.e., of dimension that is half the number of vector sets $\NumVectorSets$.

\Paragraph{Outline.}
First, given positive integers $\NumBits$, $\NumVectors$, and $\NumVectorSets$, where $\NumVectorSets$ is even, we define a template language $\TemplateLanguage_{\NumVectorSets, \NumVectors}^{\NumBits}$, which encodes instances of the orthogonal vectors problem as strings.
Then, given a positive even integer $\NumVectorSets$, we define a $\drMCFG{\NumVectorSets/2}{1}$ $\Grammar_{\NumVectorSets}$ with the guarantee that every string $w\in \TemplateLanguage_{\NumVectorSets, \NumVectors}^{\NumBits}\cap \Language(\Grammar_{\NumVectorSets})$ encodes an orthogonal set of vectors.
Finally, to decide  whether an instance of $\NumVectorSets$-OV $X_1,\dots, X_{\NumVectorSets}$ contains an orthogonal set, we simply encode it as a string $w\in \TemplateLanguage_{\NumVectorSets, \NumVectors}^{\NumBits}$, spending $O(|w|)$ time, and then ask whether $w\in \Language(\Grammar_{\NumVectorSets})$.
Hence, the time to decide orthogonality is upper-bounded by the time required to check whether $w\in \Language(\Grammar_{\NumVectorSets})$, leading to the lower bound of \cref{thm:ov_hard}.
In the following we make these insights formal.

\Paragraph{The template language.}
Given the vector sets $X_1,\dots, X_{\NumVectorSets}$, one natural way to encode them as a string is to juxtapose all vectors $x^i_j$ next to each other, separated by appropriate markers.
However, to guarantee that $w\in \TemplateLanguage_{\NumVectorSets, \NumVectors}^{\NumBits}\cap \Language_{\NumVectorSets}$ iff there is an orthogonal set, this choice appears to require that $\Language_{\NumVectorSets}$ be $\drMCFL{\NumVectorSets}{1}$, as opposed to a $\drMCFL{\NumVectorSets/2}{1}$, i.e., we need to use twice as large a dimension.
To achieve the desired dimension $\NumVectorSets/2$, we instead reverse the vectors of each odd set (i.e., sets $W_{2i+1}$).
Formally, we define the language
\begin{align*}
\TemplateLanguage_{\NumVectorSets, \NumVectors}^{\NumBits} = &\Bigg \{ \#_1w^1\Delimiter_1^2w^2w^3\Delimiter_3^4w^4\dots w^{\NumVectorSets-1}\Delimiter_{\NumVectorSets-1}^{\NumVectorSets}w^{\NumVectorSets} &\text{ such that }  \\
&  \forall i\in[\NumVectorSets] \text{ with } i \text{ odd, we have } w^i=w^i_1\#_iw^i_2\#_i\cdots \#_iw^i_{\NumVectors} \#_i  &\text{ and }   \\
&  \forall i\in[\NumVectorSets] \text{ with } i \text{ even, we have } w^i=\#_iw^i_1\#_iw^i_2\cdots \#_iw^i_{\NumVectors}  &\text{ and }   \\
& \forall i\in[\NumVectorSets] \forall j\in [\NumVectors], \text{ we have } w^{i}_{j} \in \{0,1\}^{\NumBits} \Bigg \}
\numberthis\label{eq:TemplateLanguage}
\end{align*}
Given a $\NumVectorSets$-OV instance $X_1,\dots, X_{\NumVectorSets}$, we construct a string $w\in \TemplateLanguage_{\NumVectorSets, \NumVectors}^{\NumBits}$ by taking $w^i_{j}$ to be the $j$-th vector of $X_i$, if $i$ is even, and the reverse of the $j$-th vector of $X_i$, if $i$ is odd.
For example, consider the $2$-OV instance $X_1=\{110, 010\}$ and $X_2=\{011, 101\}$.
We construct the string
\[
w= \#_1 011\#_1 010 \#_1 \Delimiter_{1}^{2} \#_2 011 \#_2 101 \quad \in \TemplateLanguage_{\NumVectorSets, \NumVectors}^{\NumBits}
\numberthis\label{eq:OVStringExample}
\]

\Paragraph{The orthogonal-vectors MCFG.}
We present a $\drMCFG{k/2}{1}$ $\Grammar_{\NumVectorSets}$ such that $\Language(\Grammar_{\NumVectorSets})$ parses the strings $w\in \TemplateLanguage_{\NumVectorSets, \NumVectors}^{\NumBits}$ that contain an orthogonal set of vectors.
For simplicity of notation, let $d=k/2$.
The grammar consists of the following rules.
\begin{compactenum}
\item 
The basic rule initiates a matching on the boundaries of the substrings encoding the different vector sets.
\begin{align}
A(\Delimiter_1^2,\dots, \Delimiter_{\NumVectorSets-1}^{\NumVectorSets})\label{eq:ov-basic}
\end{align}

\item 
A production rule further grows each substring of predicate $A$ in both directions non-deterministically, in a search to identify the positions that mark the beginnings of $\NumVectorSets$ orthogonal vectors.
\begin{align}
A(s_1x_1t_1,\dots, s_dx_dt_d)\gets  A(x_1,\dots,x_d) \quad  \text{for} \quad s_i,t_i\in\{\epsilon,0,1,\#_1,\dots, \#_k\}
\label{eq:g2}
\end{align}

\item 
The predicate $B$ starts matching the orthogonal vectors on the positions marked by $A$.
Notice that vectors in odd sets grow right-to-left (as they have been reversed), while vectors in even sets grow left-to-right.
\begin{align}
B(s_1\#_1x_1\#_2t_1,\dots, s_{d}\#_{k-1}x_d\#_k t_d) &\gets A(x_1,\dots, x_d)\label{eq:g3}\\ &\text{for } s_i,t_i\in\{0,1\} \text{ and } \left(\exists i \in [d], s_i=0 \text{ or } t_i=0\right)\notag
\end{align}

\item
A production rule extends the existing orthogonal vectors to the next coordinate, provided that the extension maintains orthogonality.
Similarly to Rule~\ref{eq:g3}, this rule extends each $x_i$ with pairs of terminals in $\{0,1\}$.
\begin{align}
B(s_1x_1t_1,\dots, s_dx_dt_d)&\gets  B(x_1,\dots,x_d) \text{ for } s_i,t_i\in\{0,1\} \text{ and } \left(\exists i \in [d], s_i=0 \text{ or } t_i=0\right)\label{eq:g4}
\end{align}
\item

The predicate $C$ signifies that $\NumVectorSets$ orthogonal vectors have been found.
This is captured by reaching the marker $\#_1$ on the vector of the first set.
\begin{align}
&C(\#_1x_1,\dots, x_d)\gets B(x_1,\dots, x_{d})
\label{eq:g5}
\end{align}

\item 
A production rule further extends $C$ arbitrarily to cover the whole input string.
\begin{align}
&C(s_1x_1t_1,\dots, s_dx_dt_d)\gets C(x_1,\dots, x_d) \quad \text{for} \quad s_i,t_i\in \{\epsilon, 0,1, \#_1,\dots, \#_{k} \}
\label{eq:g6}
\end{align}
\item Finally, concatenating the substrings derivable by $C$ yields the start symbol.
\begin{align}
&\StartSymbol(x_1\dots x_d)\gets C(x_1,\dots,x_d)
\label{eq:g7}
\end{align}
\end{compactenum}

\SubParagraph{Example.}
\cref{fig:ov_example} shows a derivation for the string $w$ in \cref{eq:OVStringExample}, identifying the orthogonal vectors $010\in X_1$ and $101\in X_2$. Predicate $A$ produces some a part of the vector sets, $B$ produces the orthogonal vectors, and $C$ produces the remainder of the vector sets.

\begin{figure}
\centering
\begin{tikzpicture}[node distance={15mm}, line width=1pt, main/.style = {draw, rectangle},scale=0.8] 
    \node[main] at (-1.3,1.2) (eps) [white]{}; 
    \node[main] at (0,1.2) (a1) [black, text=black]{\footnotesize$A(\Delimiter_{1}^{2})$}; 
    \node[main] at (0,0) (a2) [black, text=black]{\footnotesize$A(\Delimiter_{1}^{2} \#_2 011)$}; 
    \node[main] at (3.3,1.2) (b1) [black, text=black]{\footnotesize$B(0 \#_1 \Delimiter_{1}^{2} \#_2 011 \#_2 1)$}; 
    \node[main] at (3.3,0) (b2) [black, text=black]{\footnotesize$B(010 \#_1 \Delimiter_{1}^{2} \#_2 011 \#_2 101)$}; 
    \node[main] at (8,1.2) (c1) [black, text=black]{\footnotesize$C(\#_1 010 \#_1 \Delimiter_{1}^{2} \#_2 011 \#_2 101)$}; 
    \node[main] at (8,0) (c2) [black, text=black]{\footnotesize$C(\#_1 011\#_1 010 \#_1 \Delimiter_{1}^{2} \#_2 011 \#_2 101)$}; 
    \node[main] at (13.5,1.2) (s) [black, text=black]{\footnotesize$\StartSymbol(\#_1 011\#_1 010 \#_1 \Delimiter_{1}^{2} \#_2 011 \#_2 101)$};  
    \draw [->] (eps) -- (a1)  node [midway,above] {\footnotesize\ref{eq:ov-basic}};
    \draw [->] (a1) -- (a2)  node [midway,right] {\footnotesize\ref{eq:g2}+};
    \draw [->] (a2) -- (1.2,0) -- (1.2,1.2) --  (b1)  node [midway,above] {\footnotesize \ref{eq:g3}};
    \draw [->] (b1) -- (b2)  node [midway,right] {\footnotesize\ref{eq:g4}+};
    \draw [->] (b2) -- (5.3,0) -- (5.3,1.2) --  (c1)  node [midway,above] {\footnotesize \ref{eq:g5}};
    \draw [->] (c1) -- (c2)  node [midway,right] {\footnotesize\ref{eq:g6}+};
    \draw [->] (c2) -- (10.6,0) -- (10.6,1.2) --  (s)  node [midway,above] {\footnotesize \ref{eq:g7}};
\end{tikzpicture}
\caption{
Derivation of the string $w=\#_1 011\#_1 010 \#_1 \Delimiter_{1}^{2} \#_2 011 \#_2 101$, encoding a $2$-OV instance for $\NumVectorSets=2$. 
An edge labeled $R+$ indicates multiple applications of rule $R$. 
}
\label{fig:ov_example}
\end{figure}


\subsection{BMMH Hardness of $\drMCFL{1}{1}$ Reachability via Triangle-Freeness}\label{subsec:triangle_hardness}

We now turn our attention to $\drMCFL{1}{1}$ reachability, and establish a conditional lower bound of $n^{3}$ based on the classic problem of Boolean Matrix Multiplication (BMM).
Intuitively, in order to deduce the MCFL-reachability of a node $v$ from another node $u$, the algorithm might have to identify up to $\Omega(n^2)$ reachable node pairs, suffering a cost of $\Omega(n)$ for each such pair.
Our lower bound makes this insight formal.

In algorithmic theory parlance, there is a distinction between algebraic and combinatorial algorithms for performing BMM.
Although the separation between the two domains is not entirely formal, generally speaking, combinatorial algorithms do not depend on the properties of the semiring used for multiplication.
The corresponding BMM hypothesis (BMMH) states that there is no combinatorial algorithm for multiplying two $n\times n$ boolean matrices in $O(n^{3-\epsilon})$ time, for any fixed $\epsilon > 0$~\cite{Williams19}.
We show that the same combinatorial lower bound holds for $\drMCFL{1}{1}$ reachability under BMMH.

\thmtrianglehard*

\begin{figure}
\centering
\begin{tikzpicture}
[-> >=1pt,auto,node distance=1.5cm,scale = 0.8,transform shape,line width=1pt,inner sep=4pt,rectangle]
\tikzstyle{state}=[circle, inner sep=2pt, draw=black, line width=1pt, minimum size=7mm]


  \newcommand{\ystep}{0.5}
  \newcommand{\xstep}{1}

  	\begin{scope}[shift={(0,0)}]
      \node[state] (1) at (0*\xstep,5*\ystep) {\Large$1$};
      \node[state] (2) at (2*\xstep,3*\ystep) {\Large$2$};
      
      \node[state] (3) at (0*\xstep,3*\ystep) {\Large$3$};
      \node[state] (4) at (0*\xstep,1*\ystep) {\Large$4$};
      
      \draw[thick,-] (1) -- (2) node[midway,sloped,left, rotate=-90] {$ $};
      \draw[thick,-] (1) -- (3) node[midway,sloped,left, rotate=-90] {$ $};
      \draw[thick,-] (3) -- (2) node[midway,sloped,left, rotate=-90] {$ $};
      \draw[thick,-] (2) -- (4) node[midway,sloped,left, rotate=-90] {$ $};
    \end{scope}

    \begin{scope}[shift={(6,0)}]
    \node[main] (s) at (-2*\xstep,6*\ystep) {$u$};
    
    \node[main] (a1) at (0*\xstep,6*\ystep) {$u_1$};
    \node[main] (a2) at (0*\xstep,4*\ystep) {$u_2$};
    \node[main] (a3) at (0*\xstep,2*\ystep) {$u_3$};
    \node[main] (a4) at (0*\xstep,0*\ystep) {$u_4$};
    
    \node[main] (b1) at (2*\xstep,6*\ystep) {$v_1$};
    \node[main] (b2) at (2*\xstep,4*\ystep) {$v_2$};
    \node[main] (b3) at (2*\xstep,2*\ystep) {$v_3$};
    \node[main] (b4) at (2*\xstep,0*\ystep) {$v_4$};
    
    \node[main] (c1) at (4*\xstep,6*\ystep) {$y_1$};
    \node[main] (c2) at (4*\xstep,4*\ystep) {$y_2$};
    \node[main] (c3) at (4*\xstep,2*\ystep) {$y_3$};
    \node[main] (c4) at (4*\xstep,0*\ystep) {$y_4$};
    
    \node[main] (d1) at (6*\xstep,6*\ystep) {$z_1$};
    \node[main] (d2) at (6*\xstep,4*\ystep) {$z_2$};
    \node[main] (d3) at (6*\xstep,2*\ystep) {$z_3$};
    \node[main] (d4) at (6*\xstep,0*\ystep) {$z_4$};
    
    \node[main] (e) at (8*\xstep,6*\ystep) {$v$};

    
    \draw[->, thick] (s) to node[midway, above]{$0$} (a1) ;
    \draw[->, thick] (a1) to node[midway, left] {$0$}  (a2);
    \draw[->, thick] (a2) to node[midway, left] {$0$} (a3) ;
    \draw[->, thick] (a3) to node[midway, left] {$0$} (a4);
    
    \draw[->, thick] (a1) -- (b2);
    \draw[->, thick] (a1) -- (b3);
    
    \draw[->, thick] (a2) -- (b1);
    \draw[->, thick] (a2) -- (b3);
    \draw[->, thick] (a2) -- (b4);
    
    \draw[->, thick] (a3) -- (b1);
    \draw[->, thick] (a3) -- (b2);
    
    \draw[->, thick] (a4) -- (b2);
    
    \draw[->, thick] (b1) -- (c2);
    \draw[->, thick] (b1) -- (c3);
    
    \draw[->, thick] (b2) -- (c1);
    \draw[->, thick] (b2) -- (c3);
    \draw[->, thick] (b2) -- (c4);
    
    \draw[->, thick] (b3) -- (c1);
    \draw[->, thick] (b3) -- (c2);
    
    \draw[->, thick] (b4) -- (c2);
    
    \draw[->, thick] (c1) -- (d2);
    \draw[->, thick] (c1) -- (d3);
    
    \draw[->, thick] (c2) -- (d1);
    \draw[->, thick] (c2) -- (d3);
    \draw[->, thick] (c2) -- (d4);
    
    \draw[->, thick] (c3) -- (d1);
    \draw[->, thick] (c3) -- (d2);
    
    \draw[->, thick] (c4) -- (d2);

    \draw[->, thick] (d4) -- (d3) node[midway,right] {$1$};
    \draw[->, thick] (d3) -- (d2) node[midway,right] {$1$};
    \draw[->, thick] (d2) -- (d1) node[midway,right] {$1$};
    \draw[->, thick] (d1) -- (e) node[midway,above] {$1$};

     \end{scope}

\end{tikzpicture}
\caption{
Reduction of triangle-freeness on a graph $G'$ (left) to $\TriangleL$-reachability from $u$ to $v$ on a graph $G$ (right).
Non-labeled edges correspond to $\epsilon$-labeled edges in the reduction.
}
\label{fig:triangle}
\end{figure}

Note that \cref{thm:triangle_hard} indeed relates a decision problem (single-pair $\drMCFL{1}{1}$-reachability) to a function problem (BMM).
The proof is via a reduction from the problem of triangle-freeness, i.e., given an undirected graph $G'$, determine whether $G'$ contains a triangle as a subgraph.
The problem is known to have no combinatorial subcubic algorithms under BMMH~\cite{Williams18b}.

\Paragraph{Reduction.}
We now present the reduction in detail.
First, we choose the alphabet $\Terminals=\{0,1\}$, and construct the $\drMCFL{1}{1}$ language $\TriangleL=\{0^k1^k\colon k\geq 1\}$,
given by the grammar with rules
(i)~$\StartSymbol(01)$, and
(ii)~$\StartSymbol(0x1)\gets \StartSymbol(x)$.

Next, we construct a graph $G$ given the input graph $G'$.
We assume wlog that the node set of $G'$ is $[n]$.
The node set of $G$ consists of two distinguished nodes $u$ and $v$, as well as four nodes $u_i,v_i, y_i, z_i$ for every $i\in[n]$.
The construction guarantees that $v$ is $\TriangleL$-reachable from $u$ iff $G'$ contains a triangle.
We have an edge $(u, 0, u_1)$ and $(v_1, 1, v)$, as well as edges $(u_i, 0, u_{i+1})$ and $(v_{i+1}, 1, v_i)$ for every $i\in[n-1]$.
Moreover, for every $i,j\in[n]$ such that $G'$ has an edge $(i,j)$, we have edges $(u_i, \epsilon, v_j)$, $(v_i, \epsilon, y_j)$, and $(y_i, \epsilon, z_j)$ in $G$.
See \cref{fig:triangle} for an illustration.

\section{Experiments}\label{SEC:EXPERIMENTS}

Here we report on an implementation of \cref{algo:mcfl_reach} for computing MCFL reachability, and an experimental evaluation of the algorithm wrt the MCFLs $\Language(\Grammar_d^+)$ that approximate interleaved Dyck reachability (see \cref{SEC:MODELING}), on a standard benchmark for taint analysis for Android. 

\subsection{Implementation}\label{SUBSEC:IMPLEMENTATION}

Our implementation is done in Go and closely follows the pseudocode in  \cref{algo:mcfl_reach}.
Besides a few standard algorithmic engineering choices (e.g., for testing efficiently the condition in \cref{{line:algo_type5_pathmatching}} of the algorithm), we employed two high-level optimizations, that we explain here in more detail.

\Paragraph{Restriction on potentially reachable pairs.} 
Since MCFGs are, wlog, non-permuting (see \cref{SEC:PRELIMINARIES}), when the algorithm generates an element of the form $A[(u_1, v_1),\dots, (u_k, v_k)]$, this can only eventually produce the start non-terminal $\StartSymbol[(u,v)]$ connecting two nodes $u$ and $v$ if each $v_i$ can reach (in the sense of plain graph reachability) $u_{i+1}$.
In our implementation, we pre-compute plain graph reachability between all pairs of nodes, and the algorithm never produces elements such as the above, when the reachability condition is not met.
This optimization leads to fewer elements being needlessly added to the worklist $\Worklist$, making the overall computation faster.
It is also generic, in the sense that it applies to all MCFGs.

\Paragraph{Cycle elimination.} 
Cycle elimination is a common trick for speeding up constraint-based static analyses.
In our setting of $\Language(\Grammar_d^+)$ reachability, it is realized as follows.
Assume that there are two nodes $u,v$ that have been deemed inter-reachable, i.e.,
there exists  $P_1\colon u\Path v$ with $\Label(P_1)\in \IDL$ and $P_2\colon v\Path u$ with $\Label(P_2)\in \IDL$. 
Then $u$ and $v$ can be contracted to a single node $u$, without affecting the reachability of other nodes.
Formally, this is implied by \cref{lem:insert_string_dyck}.
In our implementation, we contract such pairs of nodes that are deemed reachable with respect to $\Language(\Grammar_1^+)$, for which reachability can be obtained efficiently (and which also implies $\IDL$ reachability).

\subsection{Evaluation}\label{SUBSEC:EVALUATION}

We now present the evaluation of our tool, as described in \cref{SUBSEC:IMPLEMENTATION}.

\Paragraph{Choice of benchmarks.}
Our benchmark set is a taint analysis for Android~\cite{Huang2015}, which is phrased as interleaved Dyck reachability on the underlying dataflow graphs.
This is a standard and challenging benchmark, used to evaluate earlier work on interleaved Dyck reachability~\cite{Zhang2017,Ding2023,Conrado2024}. Another standard benchmark has been a value-flow analysis~\cite{Ding2023}, but overapproximation algorithms already coincide with results obtained by CFL reachability (i.e., MCFL reachability in dimension 1)~\cite{Conrado2024}. 

\Paragraph{Setup.}
Our experiments were run on a conventional laptop MacBook Air with an Apple M1 chip of 3.2 GHz and 8GB of RAM.
We set a time out (TO) of 4 hours.

\Paragraph{Overapproximation.} 
To our knowledge, all prior work on interleaved Dyck reachability focuses on overapproximations.
Due to this fact, the results we obtain (i.e., the number of node pairs reported as reachable) are not directly comparable to any other underapproximate baseline.
To get an indication of coverage, we instead compare our results with those obtained using the most recent overapproximate method~\cite{Conrado2024}. 
This is the most recent, and more precise, work, in a line of several approaches developed and evaluated on this dataset.

Because of the running time limitations of an on-demand approach in reporting all reachable pairs, we use our underapproximation ($\Grammar_2^+$) to speed up the running time of the overapproximation algorithm. 
Specifically, we only run the on-demand algorithm on node pairs that have not been confirmed reachable by the underapproximation, nor discarded by some general (not on-demand) overapproximation.  
This can greatly decrease the number of pairs the algorithm must consider. 
For example, for the \texttt{fakedaum} benchmark, using our underapproximation, the overapproximation algorithm runs 36 times faster.

At the end of this process,  what remains are node pairs that are in the ``gray zone'', i.e., they are neither provably reachable nor provably unreachable.
Naturally, when the two methods coincide, we obtain a sound and complete analysis.

%


\begin{table}
\caption{
The number of reachable pairs wrt interleaved Dyck reachability reported by the overapproximation, as well as the underapproximations provided by grammars $\Grammar_1^+$ and $\Grammar_2^+$.
Gray marks entries where the underapproximation matches the overapproximation.
Time refers to the total time taken for computing reachability wrt to the approximation, and, respectively, for grammars $\Grammar_1^+$ and $\Grammar_2^+$.
}
\pgfplotstableread[col sep=comma]{MCFLdata.csv}\datatable
\centering
\pgfplotstabletypeset[%
columns={Benchmark,Nodes,Edges,Overapprox.,TimeO,1dim,2dim,Time},
col sep=comma,
assign column name/.style={/pgfplots/table/column name={\textbf{#1}}},
every head row/.style={before row=\toprule, after row=\midrule, font=bold},
every last row/.style={after row=\bottomrule},
display columns/0/.style={string type},
columns/{Benchmark}/.style={column type={l|},string type,
postproc cell content/.prefix code={%
\pgfkeyssetvalue{/pgfplots/table/@cell content}{ { \ttfamily ##1 }}
}
},
columns/{1dim}/.style={column type={r},string type , column name=$\Language(\Grammar_1^+)$,
        postproc cell content/.prefix code={%
            \pgfplotstablegetelem{\pgfplotstablerow}{Overapprox.}\of\datatable
            \ifthenelse{\equal{##1}{\pgfplotsretval}}
            {\pgfkeyssetvalue{/pgfplots/table/@cell content}{ \cellcolor{lightgray}{ \bfseries \num{##1} }}}
            {\pgfkeyssetvalue{/pgfplots/table/@cell content}{ \num{##1} }}
        }
},
columns/{2dim}/.style={column type={|r},string type, column name=$\Language(\Grammar_2^+)$,
        postproc cell content/.prefix code={%
            \pgfplotstablegetelem{\pgfplotstablerow}{Overapprox.}\of\datatable
            \ifthenelse{\equal{##1}{\pgfplotsretval}}
            {\pgfkeyssetvalue{/pgfplots/table/@cell content}{ \cellcolor{lightgray}{ \bfseries \num{##1} }}}
            {\pgfkeyssetvalue{/pgfplots/table/@cell content}{ \num{##1} }}
        }
},
every row no 10/.style={
    after row={\midrule}
},
columns/{Nodes}/.style={column type={r},},
columns/{Edges}/.style={column type={r},},
columns/{Overapprox.}/.style={column type={|r},string type, column name=Overapp.,
    postproc cell content/.prefix code={%
            \ifthenelse{\equal{\string ##1}{\string TO}}
            { \pgfkeyssetvalue{/pgfplots/table/@cell content} {##1}}
            {\pgfkeyssetvalue{/pgfplots/table/@cell content} {\num{##1}}}%
        }
},
columns/{TimeO}/.style={column type={r|},string type, column name=Time},
columns/{Time}/.style={column type={r}, string type},
]{MCFLdata.csv}
\label{table:results}
\end{table}

\Paragraph{Results.} 
We utilize the grammars $\Grammar_d^+$ (see \cref{SEC:MODELING}) to underapproximate interleaved Dyck reachability.  Our results are shown in \cref{table:results}.
We observe that reachability wrt $\Language(\Grammar_1^+)$ reports several reachable pairs, and this number is sometimes close to the overapproximation, becoming tight in two benchmarks (\texttt{backflash} and \texttt{uranai}).
On the other hand, $\Language(\Grammar_1^+)$ is also sometimes far from the overapproximation, such as in benchmarks \texttt{batterdoc}, \texttt{droidkongfu}, and \texttt{fakedaum}.
Moving to $\Language(\Grammar_2^+)$, we observe that it reports more reachable pairs than $\Language(\Grammar_1^+)$ in all benchmarks where $\Language(\Grammar_1^+)$ has not provably converged to the correct solution.
Remarkably, $\Language(\Grammar_2^+)$ matches the overapproximation in $8$ out of $11$ benchmarks.

On the remaining $3$ benchmarks, $\Language(\Grammar_2^+)$ is able to confirm $94.3\%$ of the paths reported by the overapproximation, on average.
To our knowledge, this is the first time that this taint analysis has been solved with such a high, provable precision.
The overapproximation times out on \texttt{scipiex}. 
Here we again observe that $\Language(\Grammar_2^+)$ reports significantly more reachable pairs than $\Language(\Grammar_1^+)$, highlighting the importance of context sensitive analysis. 
We also note that our weaker grammar $\Language(\Grammar_2^{\circ})$ (\cref{SUBSEC:BASIC_RULES}) only  matches the overapproximation in $3$ out of the $11$ benchmarks, which shows the importance of choosing an expressive grammar. Finally, we remark that three benchmarks from the original benchmark set: \textsc{phospy}, \textsc{simhosy}, and \textsc{skullkey}, were omitted from the analysis, as all reported algorithms timed out. These all contain graphs with over 4,000 vertices and 10,000 edges.

\Paragraph{Running time.} 
Although the worst-case complexity of our algorithm for $\Language(\Grammar^+_2)$ reachability is $O(n^6)$, 
we observe that it can run efficiently on graphs with thousands of nodes.
Naturally, some graphs, such as \texttt{fakedaum}, are more challenging than others.
Although the overapproximation times out on \texttt{scipiex}, our MCFL reachability computes it within minutes. 
We also note that the whole benchmark set contains 3 larger graphs (namely, \texttt{phospy}, \texttt{simhosy}, and \texttt{skullkey}), comprising tens of thousands of nodes, for which both $\Language(\Grammar_2^+)$ reachability and the overapproximation time out.
Nevertheless, we are able to confirm that also in these graphs, $\Language(\Grammar_2^+)$ reachability reports significantly more reachable pairs than $\Language(\Grammar_1^+)$, highlighting the importance of context-sensitivity.
Finally, we remark that we are also able to run $\Language(\Grammar_3^+)$ reachability on some of the smaller benchmarks. Unfortunately, none of the benchmarks in which the analysis is not complete
(\texttt{batterydoc}, \texttt{droidkonfu}, and \texttt{fakedaum}) ran within the time limit, so we could not report any additional reachable pairs from this analysis.
This is in alignment with our observations on the coverage of $\Language(\Grammar_d^+)$ according to \cref{table:results}.

\Paragraph{Witness reporting.} 
We remark one additional benefit of MCFL reachability.
The MCFL derivations witnessing reachability naturally (i.e., without a complexity penalty) produce reachability witnesses, in the form of paths, that can be reported to the programmers for further assistance.
For example, in the case of taint analysis, the programmers can precisely recover the tainted paths reported by $\Language(\Grammar_2^+)$ in order to debug their code and remove the data leak.
In contrast, overapproximate methods do not, in general, produce concrete witnesses.

\Paragraph{Conclusion.}
In conclusion, our experimental results indicate that MCFL reachability is a useful underapproximation of interleaved Dyck reachability, and when combined with overapproximate methods, produces results that are often complete, and typically of very high coverage.
This coverage can further be reached with mild context sensitivity, using MCFLs of only $2$ dimensions.
Finally, the algorithm often produces results in reasonable running time.
For the more challenging inputs, further optimizing heuristics can have a significant impact, as has been the case so far for standard CFL reachability.
We leave this direction open for interesting future work.
\section{Related Work}\label{SEC:RELATED_WORK}

Context-sensitive language models are a key step towards static analyses of higher precision, and have been utilized in diverse ways.
For example,~\cite{Tang2017} uses TAL reachability to capture conditional interprocedural reachability on library code, in the absence of client code.
Interleaved Dyck reachability, in particular, has been also approximated using linear conjunctive languages (LCLs)~\cite{Zhang2017},
and mixed integer programming (MIP)~\cite{Li2023}.
Unfortunately, LCL reachability is undecidable, which required resorting to a fast but approximate algorithm, while MIP has exponential complexity and only reasons about the Parikh images of the corresponding languages.
The hardness of the problem stems from the fact that the underlying pushdown automata is operating on multiple stacks.

At its core, the problem of interleaved Dyck reachability also arises in distributed models of computation, represented as a pushdown automaton with multiple stacks.
Naturally, the reachability problem for even 2-stack pushdown automata is undecidable, thus various approximations have been considered in the literature.
One such approach is by bounding the scope, keeping track of the last $k$ symbols pushed onto each stack~\cite{laTorre20}, thereby approximating the interleaved language by a regular one.
Another approach is bounded-context switching~\cite{Qadeer2005, Bansal2013,Chatterjee2017}, which bounds the number of times the automaton can switch between different stacks. 
This concept is more related to our MCFG construction, as the grammar presented in \cref{SUBSEC:BASIC_RULES} essentially emulates $d$-context switching.
However, the stronger grammar in \cref{SUBSEC:EXPRESSIVENESS} goes beyond $d$ contexts.
A further generalization of bounded context-switching is bounded phases~\cite{Torre2007}.
This approximation is incomparable to the MCFG of \cref{SUBSEC:EXPRESSIVENESS}, in the sense that the MCFG for bounded dimensions recognizes strings that need an unbounded number of phases, and strings recognized with a bounded number of phases require an MCFG of arbitrary dimension.
It is worth noting, though, that the reachability problem for bounded-phase multi-stack
pushdown automata has a double exponential dependency on the number of phases $k$~\cite{Torre2007}, while MCFL-reachability only has a single exponential dependency on the number of dimensions $d$.
Analogous observations hold for the class of ordered multi-stack automata~\cite{Breveglieri1996}.
The reachability problem for many such restricted classes of multi-stack automata has also been studied from the perspective of treewidth~\cite{Madhusudan2011}. 

Besides interleaving Dyck languages, various static analyses have been phrased by interleaving other types of CFLs~\cite{UA6, UA7, UA8}.
Our MCFL-based approximations for interleaved Dyck can be extended naturally to interleaving arbitrary CFLs, though the precise approximation that works best in each case merits independent investigation.

Due to the importance of CFL reachability in static analysis, the complexity of the problem has been studied extensively under the fine-grained lens.
The cubic bound $O(n^3)$ is believed to be tight as the problem is 2NPDA-hard~\cite{Heintze97}.
Other fine-grained lower bounds have also been proved, highlighting the difficulty of the problem~\cite{Koutris2023,Chatterjee2018,Chistikov2022}.
For single-letter alphabets, the problem takes essentially matrix-multiplication time~\cite{Mathiasen2021}, and this bound is optimal~\cite{Hansen2021}.
See~\cite{Pavlogiannis2022} for a recent survey.
The program model is also commonly known as Recursive State Machines~\cite{Alur2005,Chaudhuri2008,Chatterjee2019}.
In this work we have extended this line of research by 
(i)~giving generic upper bounds for MCFL reachability, from which the cubic bound of CFLs follows as a special case, and
(ii)~proving fine-grained lower bounds showing that the complexity increase along the MCFL hierarchy is unavoidable.

\section{Conclusion}\label{SEC:CONCLUSION}
In this paper, we have introduced MCFL reachability as a robust, expressive, and tunable context-sensitive model for static analyses, generalizing the famous CFL reachability formalism.
We have demonstrated its potential by developing MCFLs to approximate interleaved Dyck reachability, which is abundant in static analyses.
Our results show that sound and high coverage results can be obtained already with 2-dimensional MCFLs.
Moreover, we have developed a generic algorithm for MCFL reachability, and proven lower bounds which are essentially tight for MCFLs of rank 1.

MCFL reachability raises several interesting open questions to further impact static analyses.
First, we expect that the power of MCFLs will be utilized in several other static analysis settings, such as, 
(i)~for the more precise modeling of higher order programs, and 
(ii)~for accurate call-graph construction (current CFL models typically require a call graph be given in advance, which is often imprecise).
This is so because, in such cases, the way program execution develops within a function depends on the calling context, thereby requiring a context-sensitive language to model it precisely.
Second, following recent developments on algorithmic aspects of CFL reachability,
MCFL reachability must be better understood, both in terms of further fine-grained complexity lower bounds, as well as practical heuristics.

\section{Acknowledgements}\label{SEC:ACKNOWLEDGMENTS}

We thank anonymous reviewers for their constructive feedback. 
G.K. Conrado was supported by the Hong Kong PhD Fellowship Scheme (HKPFS).
A. Pavlogiannis was partially supported by a research grant (VIL42117) from VILLUM FONDEN.

\pagebreak

\bibliographystyle{ACM-Reference-Format}
\bibliography{bibliography}

\newpage

\appendix

\section{Proofs of \cref{SEC:MODELING}}\label{SEC:APP_MODELING}

\lemsubslemmaseparation*
\begin{proof}
We will prove this by induction on the size of $s$, i.e, let $s'=x'_1x'_2\cdots x'_{c'}\in \Dyck_\alpha $ be such that either $|s'|<|s|$, or $c'<c$, then $P^{c'}(x'_1,\cdots,x'_{c'})$. The basis is when $s=e$, in which case $s$ is recognized by $P^{c}$ for every ${c}$, by the terminal rule \ref{subs-rule-k-epsilon}, or $c=1$, in which case $P^1$ recognizes $s$ as its rules are equivalent to the rules of $P$ in $\Dyck_\alpha$. Firstly we consider the following two cases.

\begin{itemize}
\item If $x_1=\epsilon$, then $P^{c}(x_1,x_2,\cdots,x_{c})$ can be derived, by rule \ref{subs-rule-k-concatenation} from $P^2(\epsilon,\epsilon)$ and $P^{c-1}(x_2,\cdots,x_c)$.
\item If $x_c=\epsilon$, then $P^k(x_1,x_2,\cdots,x_c)$ can be derived, by rule \ref{subs-rule-k-concatenation} from $P^{c-1}(x_1,\cdots,x_{c-1})$ and $P^2(\epsilon,\epsilon)$.
\end{itemize}

If none of the cases above apply, then consider the last derivation rule of $P(s)$ in $\Dyck_\alpha$. If that is Rule~\ref{rule-parentheses}, and $x_1,x_c \neq \epsilon$, then $s$ can be written as $\pcOne{\op_i}x'_1x_2\cdots x_{c-1}x'_c\pcOne{\cp_i}$ where $x_1=\pcOne{\op_i}x'_1$ and $x_c=x'_c\pcOne{\cp_i}$. By the induction hypothesis, $P^c(x'_1,x_2,\cdots,x_{c-1},x'_c)$. By rule \ref{subs-rule-k-parentheses}, $P^c(\pcOne{\op_i}x'_1,x_2,\cdots,x_{c-1},x'_c\pcOne{\cp_i})$ and thus $P^c(x_1,x_2,\cdots,x_{c-1},x_c)$. 

If that is rule \ref{rule-concatenation}, then $s=s's''$ where $s',s''\neq \epsilon$ and $s',s''\in \Dyck_\alpha$. Let $a$ be the largest number such that $x_a$ contains a character in $s'$. By the induction hypothesis, $P^a(x_1, \cdots, x_{a-1}, x'_a)$, where $x'_a$ is the prefix of $x_a$ formed by the characters in $s'$. Let $x_a=x'_ax''_a$ (notice that it is possible that $x''_a=\epsilon$). By the induction hypothesis,  $P^{c-a+1}(x''_a, x_{a+1}, \cdots, x_c)$. Therefore, by rule \ref{subs-rule-k-concatenation}, $P^c(x_1, \cdots,x_{a-1} , x'_a x''_a, x_{a+1}, \cdots, x_c)$, and thus $P^c(x_1,\cdots,x_a,\cdots, x_c)$.
\end{proof}

\lemsubslemmaseparationt*
\begin{proof}
The proof is analogous to \cref{lem:subs-lemma-separation}.
\end{proof}

\lemmathereisdimension*
\begin{proof}
Since $s \in \IDL$, all characters in $s$ are either in $\{\pcOne{\op_i},\pcOne{\cp_i}\}_{i \in [k]}$ or $\{ \pcTwo{\ob_i},\pcTwo{\cb_i}\}_{i \in [k]}$, thus $s$ can be written as
$x_1 y_1 \cdots x_{d_0} y_{d_0}$ where $x_i$ is composed of characters in $ \{\pcOne{\op_i},\pcOne{\cp_i}\}_{i \in [k]}$ and $y_i$ is composed characters in  $\{ \pcTwo{\ob_i},\pcTwo{\cb_i}\}_{i \in [l]}$. Notice that some $x_i$ or $y_i$ may be $\epsilon$. 

Notice that from $s \in \IDL$, $(s\Project \{\pcOne{\op_i},\pcOne{\cp_i}\}_{i \in [k]})\in \Dyck_\alpha$, thus from \cref{lem:subs-lemma-separation}, it must be the case that $P^{d_0}(x_1,\cdots,x_{d_0})$, and, from \cref{lem:subs-lemma-separation-t}, it must be the case that $Q^{d_0}(y_1,\cdots,y_{d_0})$. Thus, by Rule~\ref{subs-rule-k-interleave}, $S$, and by consequence, $\Grammar_{d_0}^\circ$, recognizes $s$.

Now consider some $d>d_0$. We can instantiate $P^{d_0-d}(\epsilon,\cdots, \epsilon)$ by Rule~\ref{subs-rule-k-epsilon}. Then, by concatenating $P^{d_0}(x_1,\cdots,x_{d_0})$ and $P^{d_0-d}(\epsilon,\cdots, \epsilon)$ through Rule~\ref{subs-rule-k-concatenation}, we produce $P_{d}(x_1,\cdots, x_{d_0}, \epsilon,\cdots, \epsilon)$. Similarly, we may generate $Q_{d}(y_1,\cdots, y_{d_0}, \epsilon,\cdots, \epsilon)$. Thus, by Rule~\ref{subs-rule-k-interleave}, $S$, and by consequence, $\Grammar_{d}^\circ$, recognizes $s$.
\end{proof}

\leminsertstringdyck*
\begin{proof}
We will prove this by induction on the number of derivation steps of $S(s)$. 

Firstly if $|s_1|=0$ or $|s_2|=0$ then $u$ can be derived by Rule~\ref{rule-concatenation}, through $S(ts_2) \gets S(t),S(s_2)$ or $S(s_1t) \gets S(s_1),S(t)$, respectively. Otherwise, let us consider the last derivation step of $s$. This cannot be Rule~\ref{rule-epsilon}, otherwise $|s_1|=|s_2|=0$ and this case would have been handled. 

If the last derivation is through Rule~\ref{rule-parentheses}, then, since $|s_1|>0$ and $|s_2|>0$, $s=\pcOne{\op_i}s_1's_2'\pcOne{\cp_i}$ for some $i$. Since $s$ was derived by $S(\pcOne{\op_i} s_1's_2' \pcOne{\cp_i}) \gets S(s_1's_2')$, then $s_1's_2'\in \Language(\Dyck_\alpha)$, thus, by the induction hypothesis, $s_1'ts_2'\in \Language(\Dyck_\alpha)$ and so by applying $S(\pcOne{\op_i} s_1'ts_2' \pcOne{\cp_i}) \gets S(s_1'ts_2')$, $\pcOne{\op_i} s_1'ts_2' \pcOne{\cp_i} = u\in \Language(\Dyck_\alpha)$.

If the last derivation is through Rule~\ref{rule-concatenation}, then $s=s_1's_2'$ for some $s_1',s_2'\in \Language(\Dyck_\alpha)$. Suppose $|s_1|<|s_1'|$, then $s_1'=s_1v$ for some $v$. By the induction hypothesis, since $s_1v\in \Language(\Dyck_\alpha)$ then $s_1tv\in \Language(\Dyck_\alpha)$, thus by applying $S(s_1tvs_2') \gets S(s_1tv),S(s_2')$ we have $s_1tvs_2' = s_1ts_2 = u\in \Language(\Dyck_\alpha)$. The case when $|s_1|>|s_1'|$ is analogous.

\end{proof}

\leminterdyckmcfginvariants*
\begin{proof}

Consider some string $s$ such that $S(s)$. Its final derivation must have been through Rule~\ref{subs-rule-k-interleave} or \ref{subs-rule-k-interleave-2}, so $s=x_1 y_1 \cdots x_d y_d$ or $s=y_1 x_1 \cdots y_d x_d$ for some $x_i,i\in[d]$ and $y_i,i\in[d]$ such that $S^d(x_1, \cdots, x_d)$ and $ Q^d(y_1, \cdots, y_d)$. We will assume without loss of generality that $s=x_1 y_1 \cdots x_d y_d$.

By its definition, $s\in \IDL$ if $(s\Project \{\pcOne{\op_i},\pcOne{\cp_i}\}_{i \in [n]}) \in \Dyck_\alpha$ and $(s\Project \{\pcTwo{\ob_i},\pcTwo{\cb_i}\}_{i \in [m]}) \in \Dyck_\beta$. Let us firstly consider $(s\Project \{\pcOne{\op_i},\pcOne{\cp_i}\}_{i \in [n]})$. By Invariant~\ref{eq:rules_invariant_parentheses}, since $S^d(x_1,\cdots x_d)$, $x_1\cdots x_d\Project \{\pcOne{\op_i},\pcOne{\cp_i}\}_{i \in [k]}\in\Dyck_\alpha$. Since $Q^d(y_1,\cdots, y_d)$, for all $y_j$, $y_j \Project \{\pcOne{\op_i},\pcOne{\cp_i}\}_{i \in [n]}\in\Dyck_\alpha$. Thus by repeated applications of \cref{lem:insert_string_dyck} we may insert the elements $y_j$ in $x_1\cdots x_d$ and obtain that $x_1 y_1 \cdots x_d y_d\Project \{\pcOne{\op_i},\pcOne{\cp_i}\}_{i \in [n]}\in\Dyck_\alpha$.

The proof that $x_1 y_1 \cdots x_d y_d\Project \{\pcTwo{\ob_i},\pcTwo{\cb_i}\}_{i \in [m]}\in\Dyck_\beta$ is analogous.

\end{proof}

\section{Proofs of \cref{SEC:MCFG_REACHABILITY}}\label{SEC:APP_MCFG_REACHABILITY}

\lemnormalform*
\begin{proof}
We transform $\Grammar$ to $\Grammar'$ by applying the following steps in sequence.
\begin{compactenum}
\item\label{step1}
Firstly we ensure that all basic rules have arity 1. 
While there exists a basic rule of the form $A(w_1,\ldots,w_k)$ for $k>1$ and each $w_i\in\Terminals^*$, we replace it with two rules
\begin{align*}
&A'(w_1)\\
&A(x,w_2,\dots,w_k)\gets A'(x)
\end{align*}
\item\label{step2} 
We then ensure that all basic rules are of Type~1. While there exists a basic rule of the form $A(w)$ with $w=a w'$ for $a\in \Terminals$ and $w'\in \Terminals^+$, we replace it with two rules
\begin{align*}
&A'(a)\\
&A(xw')\gets A'(x)
\end{align*}
\item\label{step3} 
Now we ensure that every rule with more than one non-terminal on the right-hand side does not consume any terminals and hence is of Type~5.

While there is a rule $A_0(s_1,\dots, s_{k_0}) \gets A_1(x_1^1,\dots, x_{k_1}^1) \cdots A_{\ell}(x_1^{\ell},\dots, x_{k_{\ell}}^{\ell})$ with $\ell\geq 2$ and such that there exists some $s_i$ that contains a terminal, we distinguish three cases:
If $s_i\in \Terminals^+$, we replace the rule with
\begin{align*}
&A'(s_1,\dots, s_{i-1},  s_{i+1},\dots, s_{k_{0}}) \gets A_1(x_1^1,\dots, x_{k_1}^1) \cdots A_{\ell}(x_1^{\ell},\dots, x_{k_{\ell}}^{\ell})\\
&A_0(x_1,\dots, x_{i-1}, s_i, x_{i+1},\dots x_{k_{0}})\gets A'(x_1,\dots, x_{i-1}, x_{i+1},\dots,x_{k_0})
\end{align*}

Else, if $s_i$ has the form $s_i=wx^{j}_{m}s'_i$, for $w\in \Terminals^+$, we replace the rule with
\begin{align*}
&A'(x_1,\dots, x_{m-1}, wx_{m}, x_{m+1},\dots, x_{k_{j}}) \gets A_{j}(x_1,\dots, x_{k_{j}})\\
&A_0(s_1,\dots, s_{i-1}, x^{j}_{m}s'_i, s_{i+1},\dots, s_{k_0}) \gets A_1(x_1^1,\dots, x_{k_1}^1) \cdots A_{\ell}(x_1^{\ell},\dots, x_{k_{\ell}}^{\ell})
\end{align*}
where $A_j$ on the right-hand side of the second rule is replaced by $A'$.

Otherwise, $s_i$ has the form $s_i=s^1_ix^{j}_{m}ws^2_i$, for $w\in \Terminals^+$, and we replace the rule with
\begin{align*}
&A'(x_1,\dots, x_{m-1}, x_{m}w, x_{m+1},\dots, x_{k_{j}}) \gets A_{j}(x_1,\dots, x_{k_{j}})\\
&A_0(s_1,\dots, s_{i-1}, s^1_ix^{j}_{m}s^2_i, s_{i+1},\dots, s_{k_0}) \gets A_1(x_1^1,\dots, x_{k_1}^1) \cdots A_{\ell}(x_1^{\ell},\dots, x_{k_{\ell}}^{\ell})
\end{align*}
where $A_j$ on the right-hand side of the second rule is replaced by $A'$. 
\item\label{step4} 
We now ensure that whenever there is a rule of the form 
\[
A(s_1,\dots,s_k)\gets B(x_1,\dots, x_{k'})
\]
and there exists $s_i$ and $s_j$ that are both are not a single variable, each of them contains at least one variable.

While there exists a rule  of the form
\[
A(s_1,\dots,s_k)\gets B(x_1,\dots, x_{k'})
\]
and there exist $i,j\in[k]$ such that $s_i\in \Terminals^*$ and $s_j$ is not a single variable, we replace the rule with
\begin{align*}
&A'(s_1,\dots, s_{i-1}, s_{i+1}, \dots s_k) \gets B(x_1,\dots, x_{k'})\\
&A(x_1, \dots, x_{i-1}, s_i, x_{i},\dots, x_{k-1})\gets A'(x_1, \dots, x_{k-1})
\end{align*} 
\item\label{step5} 
We now ensure that every production rule that reduces the arity on the left hand side is of Type~5. While there is a rule of the form
\[
A(s_1,\dots,s_k)\gets B(x_1,\dots, x_{k'})
\]
and there exists some $i\in[k]$ such that $s_i$ contains at least two variables and $|s_i|>2$,  
we decompose $s_i$ into two substrings $s_i=s_i^1s_i^2$ such that each of them contains a variable, and replace the rule with
\begin{align*}
&A'(s_1,\dots,s_{i-1}, s_i^1,s_i^2,s_{i+1},\dots,s_k)\gets B(x_1,\dots, x_{k'})\\
&A(x_1,\dots, x_{i-1}, x_ix_{i+1}, x_{i+2},\dots, x_{k+1}) \gets A'(x_1,\dots, x_{k+1}).
\end{align*}
Note that, because of Step~\ref{step4}, we have $k+1 \leq k'$, and hence the dimension of the grammar does not increase. 
\item\label{step6} 

We now ensure that all arguments in the left-hand-side of all rank-1 rules
are of the proper shape $x$, $a$, $ax$, $xa$ or $xy$. While there is a rule of the form
\[
A(s_1,\dots,s_k)\gets B(x_1,\dots, x_{k'})
\]
and there exists some $i\in[k]$ such that $|s_i|>2$ or $s_i$ consists of two terminals, we decompose $s_i$ into two substrings $s_i=s_i^1s_i^2$.
If $s_i^1$ contains a variable, note that $s_i^2\subseteq \Terminals^+$ due to Step~\ref{step5}.
We replace the rule with
\begin{align*}
&A'(s_1,\dots,s_{i-1}, s_i^1,s_{i+1},\dots,s_k)\gets B(x_1,\dots, x_{k'})\\
&A(x_1,\dots, x_{i-1}, x_is_i^2, x_{i+1},\dots, x_{k}) \gets A'(x_1,\dots, x_k).
\end{align*}
Otherwise, $s_i^1\in\Terminals^+$ and we replace the rule with
\begin{align*}
&A'(s_1,\dots,s_{i-1}, s_i^2,s_{i+1},\dots,s_k)\gets B(x_1,\dots, x_{k'})\\
&A(x_1,\dots, x_{i-1}, s_i^1x_i, x_{i+1},\dots, x_{k}) \gets A'(x_1,\dots, x_k).
\end{align*}
Note that if $s_i^2\in\Terminals^+$ then $s_i\in\Terminals^+$, so this step
cannot introduce a new instance of Step~\ref{step3}.
%

\item\label{step7} 
Finally, we ensure that all production rules of rank 1
are of Type~2, Type~3, or Type~4. While there exists a rule  of the form
\[
A(s_1,\dots,s_k)\gets B(x_1,\dots, x_{k'})
\]
and there exist $i,j\in[k]$ such that both $s_i$ and $s_j$ are not a single variable, let the sequence of variables in $s_i$ be
$y_1,\ldots,y_p$. We replace the rule with
\begin{align*}
&A'(s_1,\dots, s_{i-1}, y_1,\dots, y_p, s_{i+1}, \dots, s_k) \gets B(x_1,\dots, x_{k'})\\
&A(x_1, \dots, x_{i-1}, s_i, x_{i+1},\dots, x_{k})\gets 
 A'(x_1, \dots, x_{i-1}, y_1, \dots, y_p, x_{i+1},\dots,x_k)
\end{align*}
Note that, because of Step~\ref{step4}, each $s_j$ contains a variable, so $k+p \leq k'$, and hence the dimension of the grammar does not increase.
\end{compactenum}
Finally, observe that $\Grammar$ and $\Grammar'$ have the same dimension and rank, while $|\Grammar'|=O(\Poly(|\Grammar|))$.
The construction can easily be carried out in $O(\Poly(|\Grammar|))$, though details are omitted for brevity.
\end{proof}

\lemsoundness*
\begin{proof}
The statement follows by a straightforward induction.
It clearly holds in the initialization loop (\cref{line:algo_type1}).
Moreover, assuming that it holds for  elements $A_i[(u^i_1, v^i_1),\dots, (u^i_{k_i},v^i_{k_i})]$ when extracted from $\Worklist$ (\cref{line:algo_extraction}), it also holds for any element $A[(u_1,v_1),\dots(u_{k},v_{k})]$ inserted in $\Worklist$ in that iteration,
as the algorithm closely follows the semantics of derivations $\Derives$.
As the details are straightforward, they are omitted for brevity.
\end{proof}

\lemcompleteness*
\begin{proof}
The proof is by induction on the structure of derivations. We say that a pair of nodes $(u,v)$ \emph{expands into} $(u',v')$ if there are paths $\{P_1\colon u'\Path u\}$ and $\{P_2\colon v\Path v'\}$ such that $\Label(P_1)=\Label(P_2)=\epsilon$. Analogously, element $A[(u_1, v_1),\dots, (u_k,v_k)]$ expands into $A[(u'_1, v'_1),\dots, (u'_k,v'_k)]$ if for every $i\in [k], (u_i,v_i)$ expands into $(u'_i, v'_i)$.

Firstly consider the rules $A$ of Type~1. When that rule is of type $A(\epsilon)$, \cref{line:epsilon_edges} guarantees that every empty path, $\{P\colon u\Path u\}$ will be added as a derivation of $A$ in the initialization in \cref{line:algo_type1}. Furthermore, \cref{line:for_expand_epsilon} guarantees that if element $A[(u, v)]$ is inserted in $W$ then every element it expands into will also be inserted in $W$. This means that for every path $P\colon u \Path v$ such that $\Label(P)=\epsilon$, the element $A[(u,v)]$ will be inserted in $W$.

For the other rules of Type~1, i.e, the rules $A(a)$ for some $a\in \Sigma$, for any edge $(u,v)$ of label $a$ the element $A[(u,v)]$ will be inserted in $W$ by the initialization in \cref{line:algo_type1}. \cref{line:for_expand_epsilon} then guarantees that every element $A[(u,v)]$ expands into will also be inserted into $W$. This means that for every path $P\colon u \Path v$ such that $\Label(P)=a$, the element $A[(u,v)]$ will be inserted in $W$.

Thus the statement holds for all rules $A$ of Type~1.

Now consider an arbitrary sequence of paths $\{P_i\colon u_i\Path v_i\}_{i\in [k]}$ such that $A\Derives(\Label(P_1), \dots, \Label(P_k))$. 
Following the structure of derivations, there is a rule
\[
A(s_1,\dots, s_{k_0}) \gets A_1(x_1^1,\dots, x_{k_1}^1) \cdots A_{\ell}(x_1^{\ell},\dots, x_{k_{\ell}}^{\ell})
\numberthis\label{eq:production_rul2}
\]
and paths $\{P^i_{j}\}_{i\in [\ell], j\in[k_i]}$ such that $A_i\Derives(\Label(P^i_1),\dots, \Label(P^i_{k_i}))$ for every $i\in[\ell]$. Notice that for every element $A_i[(u_1, v_1),\dots, (u_{k_i},v_{k_i})]$ in $W$, \cref{line:for_expand_epsilon} will insert in $W$ the derivations $A_i[(u_1, v_1),\dots, (u_{k_i},v_{k_i})]$ expands into, thus we must only inductively handle the minimal form of a derivation.
We distinguish cases based on the type of the rule according to the normal form of \cref{SUBSEC:NORMAL_FORM}.
\begin{compactitem}
\item[\emph{Type~2}:] 
We have $\ell=1$, and \cref{eq:production_rule} has the form $A(x_1,\dots,x_{i-1},ax_i,x_{i+1},\dots,x_k)\gets B(x_1,\dots,x_k)$, for some $a\in \Terminals$, and $B\Derives(\Label(P_1),\dots, \Label(P_k))$, for paths $P_1,\dots,P_k$.
By the induction hypothesis, the algorithm inserts the element $B[(u'_1, v'_1),\dots, (u'_k, v'_k)]$ in $\Worklist$, where $u'_i, v'_i$ are the endpoints of the corresponding paths.
In the iteration where $B[(u'_1, v'_1),\dots, (u'_k, v'_k)]$ is extracted from $\Worklist$, the algorithm (\cref{line:algo_type2}) generates
\begin{align*}
A[(u'_1, v'_1),\dots, (u'_{i-1}, v'_{i-1}), (u',v'_i), (u'_{i+1}, v'_{i+1}),  (u'_k,v'_k)] =\\
= A[(u_1, v_1),\dots, (u_k,v_k)]
\end{align*}
and inserts it in $\Worklist$.
\item[\emph{Type~3}:] This case is symmetric to Type~2.
\item[\emph{Type~4}:]
We have $\ell=1$, and \cref{eq:production_rule} has the form $A(x_1,\dots, x_{i-1}, a, x_{i},\dots, x_{k})\gets B(x_1,\dots,x_k)$, for some $a\in \{\Terminals\cup \epsilon\}$, and $B\Derives(\Label(P_1),\dots, \Label(P_k))$, for paths $P_1,\dots,P_k$.
By the induction hypothesis, the algorithm inserts the element $B[(u'_1, v'_1),\dots, (u'_{k-1}, v'_{k-1})]$ in $\Worklist$, where $u'_i, v'_i$ are the endpoints of the corresponding paths.
In the iteration where $B[(u'_1, v'_1),\dots, (u'_{k-1}, v'_{k-1})]$ is extracted from $\Worklist$, the algorithm (\cref{line:algo_type4}) generates
\begin{align*}
A[(u'_1, v'_1),\dots, (u'_{i-1}, v'_{i-1}), (u,v), (u'_{i+1}, v'_{i+1}),  (u'_{k-1},v'_{k-1})] =\\
= A[(u_1, v_1),\dots, (u_k,v_k)]
\end{align*}
and inserts it in $\Worklist$.
\item[\emph{Type~5}:]
By the induction hypothesis, we have that each $A_i([(u^i_1,v^i_1),\dots, (u^i_{k_i}, v^i_{k_i})])$ is inserted in $\Worklist$,
where $u^i_j, v^i_j$ are the endpoints of the path $P^i_j$.
Consider the iteration where the last such $A_i$ is extracted from $\Worklist$.
We have that $s_1\cdots s_{k_0}\PathDerives \{ (u^i_{j_i}, v^i_{j_i}) \}_{i\in[\ell],j_i\in[k_{i}]}$, and
the algorithm (\cref{line:algo_type5}) generates 
$A[(u_1,v_1),\dots, (u_{k_0},v_{k_0})]$ and inserts it in $\Worklist$,
as we must have that $(u_t,v_t)_t=\Endpoints(s_t,\{ (u^i_{j_i}, v^i_{j_i}) \}_{i\in[\ell],j_i\in[k_{i}]})$ for each $t\in [k_0]$.

\end{compactitem}
\end{proof}

\lemcomplexity*
\begin{proof}
Observe that for every non-terminal $A$ with arity $k$, for every sequence of nodes $(u_1,v_1),\dots, (u_k,v_k)$,  the element $A[(u_1,v_1),\dots, (u_k,v_k)]$ is inserted in $\Worklist$ at most once.
Hence an element with non-terminal $A$ is inserted in $\Worklist$ at most $n^{2k}=O(n^{2d})$ times in total.
When this element is extracted from $\Worklist$, the algorithm attempts to use it as the right hand side of a production rule.
The algorithm spends $O(\Poly(|\Grammar|)\cdot \delta)$ time for expanding the predicate through $\epsilon$-edges (\cref{line:for_expand_epsilon}), $O(\Poly(|\Grammar|)\cdot \delta)$ time for matching this element on a rule of Type~2 (\cref{line:algo_type2}) and Type~3 (\cref{line:algo_type3}), and it spends $O(\Poly(|\Grammar|)\cdot n^2)$ time for matching it on a rule of Type~4 (\cref{line:algo_type4}).
Note, however, that for Type~4 rules we have $k\leq d-1$, as a Type~4 rule increases the arity on the left hand side to an arity $\leq d$.
Thus, without considering Type~5 rules, the total running time is asymptotically bounded by $\Poly(|\Grammar|)$ multiplied by
\begin{align}
\left(\delta\cdot n^{2d}  + n^{2(d-1)} \cdot n^2\right) = O(\delta \cdot n^{2d} )
\label{eq:type1to4_bound}
\end{align}

We now turn our attention to the time spent processing Type~5 rules (\cref{line:algo_type5}).
For each such rule, \cref{line:algo_type5_pathmatching} is executed at most once for each set of nodes $\{ (u^i_{j_i}, v^i_{j_i}) \}_{i\in[\ell],j_i\in[k_{i}]}$ such that
$s_1\cdots s_{k_0}\PathDerives \{ (u^i_{j_i}, v^i_{j_i}) \}_{i\in[\ell],j_i\in[k_{i}]}$.
For each such $s_i$, the number of subsets $X$ of $\{ (u^i_{j_i}, v^i_{j_i}) \}_{i\in[\ell],j_i\in[k_{i}]}$ such that $s_i\PathDerives X$ is at most $n^{|s_i|+1}$.
Hence the total time spent in Type~5 rules is asymptotically bounded by $\Poly(|\Grammar|)$ multiplied by
\begin{align}
\prod_{i\in[\ell]} n^{|s_i|+1} = n^{\sum_{i\in[\ell]}(|s_i|+1)} = n^{\ell + \sum_{i\in[\ell]}|s_i|} \leq n^{d(r+1)}
\label{eq:type5_bound}
\end{align}
as $\ell\leq r$ and $\sum_{i\in[\ell]}|s_i|\leq \sum_{j\in[r]} k_{j} \leq r\cdot d$.
Observe that when $r=1$, the bound of \cref{eq:type5_bound} yields $n^{2d}$ and is thus dominated by the bound in \cref{eq:type1to4_bound}, yielding \cref{item:complexity_r1} of the lemma.
For $r>1$, \cref{eq:type5_bound} is at least $n^{3d}$, and thus Type~5 rules dominate the complexity, yielding \cref{item:complexity_rlarge} of the lemma.
\end{proof}

\section{Proofs of \cref{SEC:LOWER_BOUNDS}}\label{SEC:APP_LOWER_BOUNDS}

\thmovhard*
\begin{proof}
Let $X_1,\dots, X_{\NumVectorSets}$ be an instance of $\NumVectorSets$-OV, and let $w\in \TemplateLanguage_{\NumVectorSets, \NumVectors}^{\NumBits}$ be the string-encoding of this instance.

We start with the correctness of the reduction.
It is straightforward to see that $\StartSymbol\Derives(w)$ iff there exist $j_1,\dots, j_{\NumVectorSets}$ such that
\[
B\Derives( w^1_{j_1} \#_1x_1\#_2 w^2_{j_2}, \dots,  w^{\NumVectorSets-1}_{j_{\NumVectorSets-1}} \#_{\NumVectorSets-1}x_{d}\#_{\NumVectorSets} w^{{\NumVectorSets}}_{j_{\NumVectorSets}})
\]
where $x_1, \dots, x_d$ are substrings in our template language $\TemplateLanguage_{\NumVectorSets, \NumVectors}^{\NumBits}$.
In turn, this happens iff for every $\ell\in [\NumBits]$, there is some $i\in [\NumVectorSets]$ such that
(i)~$w^{i}_{j_i}[\ell]=0$, if $i$ is even, or
(ii)~$w^{i}_{j_i}[\NumBits-\ell+1]=0$, if $i$ is odd.
Finally, each of (i) and (ii) happens iff the corresponding vector has $x^{i}_{j_i}[\ell]=0$.
Thus, $\StartSymbol\Derives(w)$ iff there exist $j_1,\dots, j_{\NumVectorSets}$ such that the vectors $x^1_{j_1},\dots x^{\NumVectorSets}_{j_{\NumVectorSets}}$ are orthogonal.

We now establish the complexity.
The string $w$ has length $|w|=O(\NumVectors\cdot \NumVectorSets\cdot \NumBits)$, while $|\Grammar_{\NumVectorSets}|$ has constant size that depends only on $\NumVectorSets$.
Now let $d=\NumVectorSets/2$, and assume that there exists some fixed $\epsilon>0$ such that membership in $\Language(\Grammar_{\NumVectorSets})$ is decided in $O(|w|^{2d-\epsilon})$ time.
Then we can solve $2\NumVectorSets$-OV in time
\[
O((\NumVectors\cdot \NumVectorSets\cdot \NumBits)^{2d-\epsilon})=O(\NumVectors^{2d-\epsilon}\cdot \Poly(\NumBits)) = O(\NumVectors^{\NumVectorSets-\epsilon}\cdot \Poly(\NumBits))
\]  
time, which contradicts OVH.
Thus we have a fine-grained reduction with respect to the time bounds of \cref{thm:ov_hard}, and the theorem follows.
\end{proof}

\thmtrianglehard*
\begin{proof}[Proof of \cref{thm:triangle_hard}]
We first argue about the correctness of the reduction, i,.e., $v$ is $\TriangleL$-reachable from $u$ in $G$ iff $G'$ contains a triangle.
Indeed, observe that for any $k\in[n]$, the graph $G$ has a $\TriangleL$-labeled path $P_k\colon u_k\Path z_k$ iff there is a $j\in[n]$ such that $j$ is both a distance-one and distance-two neighbor of $i$ in $G'$.
In turn, this is equivalent to saying that $i$ and $j$ form a triangle with a third node in $G'$.
Next, observe that we have a single path $P^u_k\colon u\Path u_k$ for every $k\in[n]$, with $\Label(P^u_k)=0^k$.
Similarly, we have a single path $P^v_k\colon z_k\Path  v$ for every $k\in [n]$, with $\Label(P^v_k)=1^k$.
Thus, we have a path $P\colon u\Path v$ iff there exists some $k\in [n]$ such that $P=P^u_k\circ P_k \circ P^v_k$, and $\Label(P)=0^k1^k\in \TriangleL$.

Regarding the complexity, we clearly have $|\Grammar|=O(1)$,
while the size of $G$ is proportional to the size of $G'$, and thus we have a fine-grained reduction.
\end{proof}

\end{document}